\documentclass[12pt,letterpaper]{article}
\input{paper_style.sty} 
\usepackage{caption}
\usepackage{appendix}
\usepackage{xr}
\makeatletter

\begin{document}
\singlespacing
\title{Cointegrated Regressors: Where Folded Concave Penalties Go Beyond LASSO\thanks{Hongqi Chen: \protect\url{hongqichen@hnu.edu.cn}; Chengwang Liao: \protect\url{lchw@link.cuhk.edu.hk};
Ziwei Mei (Corresponding author): \protect\url{ziweimei@um.edu.mo},
Faculty of Business Administration, University of Macau, Taipa, Macao
SAR, China; Zhentao Shi: \protect\url{zhentao.shi@cuhk.edu.hk}. }}

\author{Hongqi Chen$^a$, Chengwang Liao$^b$, Ziwei Mei$^c$, Zhentao Shi$^b$ \vspace{0.6em}
\\
 $^a$Hunan University \\
 $^b$The Chinese University of Hong Kong \\
$^c$University of Macau}
 
\vspace{3em}
\date{\today}

\maketitle

\onehalfspacing
\begin{abstract}

LASSO is one of the most popular regularization methods for high-dimensional regression models. While cointegration is common in practical applications of predictive regressions with many regressors, the convex penalty of LASSO overly suppresses the coefficients of cointegrated regressors and therefore fails to deliver consistent parameter estimation. We employ folded concave penalties (FCP), including the well-known SCAD and MCP, to address this issue. Using FCP, we establish two asymptotic results: (i) a local oracle property for the penalized regression, and (ii) a global oracle property for regressions with a penalty-level constraint. In particular, the latter guarantees a global solution that consistently estimates the true coefficients and enjoys oracle prediction performance. Simulation studies and an application to predicting the euro area unemployment rate show that FCP improves variable selection and out-of-sample prediction relative to LASSO.

\end{abstract}
\medskip

\noindent \textbf{Keywords}: cointegration, nonconvex penalties, oracle property, prediction


\vfill

\newpage

\doublespacing

\defcitealias{mei2024lasso}{MS24}
\defcitealias{gao2026lasso}{GLMS26}

\section{Introduction}

Predictive regressions provide a simple and transparent framework by linking a target variable to past predictors.
In the era of big data, information sets available to forecasters have expanded dramatically. Macroeconomic databases, financial markets, and sectoral indicators now include hundreds of potentially useful time series. This abundance offers an opportunity to improve forecasts, but also creates a challenge of high dimensionality. Sparse estimation methods address this challenge by searching a large information set while retaining only a small collection of valuable predictors.

Economic and financial time series, however, exhibit diverse dynamic patterns. Prices, employment, and asset valuations are often highly persistent and may share common stochastic trends. Economic equilibrium restrictions and long-run comovement can then make nonstationary variables cointegrated---there exist linear combinations to transform nonstationary time series into stationary ones, which may carry predictive information. A sound statistical procedure must therefore preserve predictive signals in unknown cointegrating relationships. This paper shows that the choice of regularization penalty is decisive for that purpose.

LASSO has been an off-the-shelf option in handling high-dimensional sparse predictive models.
Recent discussions have expanded beyond the independently and identically distributed (i.i.d.)~or weakly dependent data to highly persistent time series regressors,  modeled as \textit{local unit roots} (LUR), namely an autoregression of order 1 (AR(1)) $x_t=\rho x_{t-1}+e_t$ with $\rho \approx 1$. \citet[MS24 hereafter]{mei2024lasso} and \citet[GLMS26 hereafter]{gao2026lasso}
derive the \emph{restricted eigenvalue} (RE) condition under high-dimensional unit roots ($\rho=1$) and LURs, respectively, showing that LASSO maintains consistency with highly persistent data.

Despite the advancements, however, cointegration has not been resolved for predictive regressions in high-dimensional settings, where  the number of regressors ($p$) can be larger than the sample size ($n$). High dimensionality forbids the researcher from pre-testing the dynamic features of regressors and cointegration relationship.
It poses a challenging question: \emph{
  How to consistently estimate a high-dimensional sparse predictive regression when the regressors contain stationary, nonstationary, and cointegrated variables?}
This paper provides a positive answer by leveraging the folded concave penalties (FCPs), whose most well-known examples are \citet{fan2001variable}'s \textit{smoothly clipped absolute deviation} (SCAD) and \citet{zhang2010nearly}'s \textit{minimax concave penalty} (MCP).

This paper makes three theoretical contributions,
each advancing one step at a time. First, it reveals the intrinsic difficulty of LASSO when facing cointegrated variables. LASSO imposes $\ell_1$-penalties on the coefficients individually, ignoring the underlying cointegrated system. This mismatch causes excessive shrinkage of the coefficients of cointegrated regressors, thereby leading to inconsistency. We explain LASSO's inconsistency in Sections \ref{subsec: inconsistency} and \ref{subsec:illustrative} with a bivariate toy example and illustrative simulations, with formal theoretical justifications for general cases in Proposition \ref{prop: inconsistent LASSO} of the Supplementary Materials. Such inconsistency goes beyond the model selection inconsistency due to the violation of the irrepresentable condition \citep{zhao2006model}. Instead, it invalidates the standard $\ell_1$ error bounds of LASSO, leading to deteriorated out-of-sample (OOS) prediction performance.

Second, we establish that FCPs enjoy the oracle property in the high-dimensional predictive regression with cointegration. Intuitively, with a properly chosen tuning parameter, the SCAD and MCP penalties are bounded and much milder than the unbounded $\ell_1$-penalty. This mild penalty avoids overly sparse solutions caused by excessive shrinkage, enabling oracle variable selection.
With standard regularity assumptions, Theorem \ref{thm: oracle}(a) shows that
the oracle least squares estimator is captured, asymptotically, by one of the local optimizers of the regression penalized by FCPs.

However, the local optimality is of no practical use. To find out the true coefficients from data to facilitate prediction, we must go beyond characterizing local optimality of the vanilla penalized problem. 
Under the i.i.d.~setting, \citet{shen2012likelihood} and \citet{chen2023sparse} have examined the minimizer of the criterion function subject to a bounded capped $\ell_1$ or $\ell_0$ penalty. Inspired by these works, we propose tailoring the regularization by minimizing the penalized loss function subject to an upper bound on the total FCP, which we refer to as  \textit{constrained FCP}. The bounded FCP favors sparse solutions and avoids overfitting. Our final and most important theoretical result, presented in
Theorem \ref{thm: oracle}(b), shows that the global minimizer of the constrained scheme asymptotically coincides with the oracle estimator.
For technical proofs, this constraint implies that the $\ell_1$ error of the estimator is controlled by a small subvector, thereby enabling the use of the RE condition to justify a well-defined global optimizer that consistently estimates the true parameter; see Remark \ref{rem: RE} for discussions.

\medskip

Our theoretical results are based on a deep investigation of the mechanisms in high-dimensional cointegrated time series, which motivates technical innovations to address the concerning issues. As the first step toward the oracle property, we pin down the convergence rate of the oracle least squares estimator with cointegrated regressors of growing dimensionality, which has been an unsolved problem in the literature. Due to nonstationarity, the inverse Gram matrix has an explosive maximum eigenvalue that complicates the error bounds, not to mention the unobserved rotation of stationary and persistent components in the underlying cointegrating system.
In addition, many technical subtleties, including the ``beta-min condition'', the sparsity requirement, and the deviation bound must be tailored for cointegration. Fortunately, with proper regularity assumptions, the error bounds that we deduce are comparable to those under pure unit roots.

Our theoretical predictions are supported by numerical studies. Monte Carlo simulations compare the performance of the representative FCPs, including the vanilla SCAD and MCP and their constrained versions, to LASSO. The results show that the FCPs substantially improve parameter estimation, variable selection, and OOS prediction. The constrained FCPs slightly improve the finite-sample performance of the vanilla FCPs. We apply these methods for OOS prediction of the unemployment rate of the euro area with high-dimensional cointegrated macroeconomic predictors, and the FCPs again outperform LASSO.

\textbf{Literature review.} This paper contributes to the literature of high-dimensional time series regressions. Estimation and inference with stationary time series are well established \citep{wong2020lasso,wang2022high,adamek2022lasso}, while machine learning theory for predictive regressions is a burgeoning direction.
In the $p/n\to 0$ moderate dimensional settings,
\citet{koo2020high}, \citet{smeekes2021automated}, \citet{lee2022lasso}, \citet{ fan2023predictive}, and \citet{phillips2024high} study the scenarios with  nonstationary regressors, and \citet{qin2026feature} apply canonical correlation analysis to select active predictors.
In the $p>n$ regime,
\citetalias{mei2024lasso} and \citetalias{gao2026lasso} are the first to provide consistency theory, and the latter further develops a debiasing inference procedure robust to nonstationarity.
Complementing these works, this paper deals with cointegrated variables in  high-dimensional predictive regressions.

Our paper stands on an extensive literature about FCPs. Since \citet{fan2001variable} and \citet{zhang2010nearly}, FCPs have been widely used for model selection and post-regularization inference \citep{zou2008one,shi2019linear,guo2025model}. Progress has been made in exploring global optimality of FCPs. When data are i.i.d., \citet{kim2008smoothly} show that the oracle estimator is the global minimizer of SCAD when $p/n \to c\in(0,1)$. When $p>n$, it is much more challenging to show the global optimality of a local solution \citep{fan2011nonconcave}. For i.i.d.~data, \citet{loh2015regularized} demonstrate that under \textit{restricted strong convexity} (RSC), all local minimizers
of SCAD or MCP are local to the true coefficients. Nevertheless, RSC fails under nonstationarity, since the minimum RE shrinks to zero, rather than being bounded below. Other papers explore the solution paths of specific algorithms such as local linear approximation (LLA, \citealp{fan2014strong}) and calibrated concave-convex procedure (CCCP, \citealp{wang2013calibrating}), while their asymptotic guarantees require an initial estimator like LASSO to be consistent, and therefore collapse under cointegration. Motivated by the \textit{restricted global optimality}  \citep{fan2011nonconcave}, we propose adding a restriction on the penalty to ensure a global optimizer that  achieves consistent estimation and improves finite-sample prediction. 

Another relevant strand of literature is on the eigen-analysis for large random matrices, which has been applied to unit root tests \citep{zhang2018clt}, cointegration tests  \citep{zhang2019identifying,bykhovskaya2022cointegration,bykhovskaya2024asymptotics}, and factor analysis \citep{onatski2021spurious,onatski2025spurious,qin2026feature}. Recent advancements on estimation and inference for nonstationary regression models include \citet{yang2020testing,yang2021unified}, \citet{masini2022counterfactual}, \citet{hecq2023granger,hecq2023inference}, \citet{li2025factor} and \citet{liao2026nickellmeetsstambaughtale}.
Random matrix theory is employed in dense regressions \citep{hastie2022surprises,he2024ridge,liao2026benignoverfittingeconomicforecasting,chang2026zero}, and its application under cointegration is yet to be explored.

\textbf{Organization}. Section \ref{sec: model} formulates the predictive regression model with cointegration and heuristically discusses the inconsistency of LASSO and the remedies by FCPs. Section \ref{sec: theory} establishes the asymptotic theory, with the FCP regularization for high-dimensional regressors involving cointegration. Section~\ref{sec: simulations} includes Monte Carlo simulations, and Section~\ref{sec: empirical} illustrates an empirical application of predicting the euro area unemployment rate.

\section[Setup and Estimation]{Setup and Estimation}\label{sec: model}
We lay out the notations prior to formal presentation. The sets of
natural numbers, integers, real numbers, and complex numbers are denoted
as $\mathbb{N}$, $\mathbb{Z}$, $\mathbb{R}$, and $\mathbb{C}$,
respectively. For a positive integer $N$, we use $[N]$ as shorthand for $\{1,2,\dots,N\}$. For an $n$-dimensional vector $\xx=(x_{t})_{t\in[n]}$, the $\ell_0$-norm $\|\xx\|_0=\sum_{t=1}^n\textbf{1}\{x_t\neq 0\}$ is the number of nonzero elements, the $\ell_1$-norm is $\left\Vert \xx\right\Vert _{1}=\sum_{t=1}^{n}\left|x_{t}\right|$, the $\ell_2$-norm is $\left\Vert \xx\right\Vert _{2}=\sqrt{\sum_{t=1}^{n}x_{t}^{2}}$, and the sup-norm is $\|\xx\|_{\infty}=\sup_{t\in[n]}|x_{t}|$. A subvector $\xx_{\mathcal{A}} = (x_t)_{t\in\mathcal{A}}$ collects the elements in the index set $\mathcal{A}\subset[n]$.
Let ${\bf 0}_{n}$ be an $n\times1$ zero vector, and $\bfI_{n}$ be the $n\times n$ identity matrix. For a generic matrix $\BB,$ let $B_{i,j}$ be the $(i,j)$th element, $\BB_{i,\cdot}$ be the $i$-th row, and $\BB^{\top}$ be its transpose. Let $\lambda_{\min}(\BB)$ and $\lambda_{\max}(\BB)$ be the minimum and maximum eigenvalues, respectively. For a set $\mathcal{A}$ of row indices, $\BB_{\mathcal{A},j}=(B_{i,j})_{i\in\mathcal{A}}$ is the subvector of the $j$-th column indexed by $\mathcal{A}$. An \emph{absolute constant} is a positive, finite constant invariant with the sample size. Define $a\wedge b:=\min\left\{ a,b\right\} $ for real numbers $a$ and $b$, and ${\rm sgn}(a)$ as the sign of $a$.  For positive sequences $a_n$ and $b_n$, ``$a_n\gg b_n$'' means $a_n/b_n\to\infty$, ``$a_{n}\stackrel{\mathrm{p}}{\preccurlyeq}b_{n}$'' means that there is an absolute constant  $c$ such that $a_{n}\leq cb_{n}$ with probability approaching one, and $a_n\stackrel{\mathrm{p}}{\asymp}b_n$ means that both $a_{n}/b_{n}$ and $b_{n}/a_{n}$ are bounded in probability. We use ``$\stackrel{\mathrm{p}}{\to}$'' to denote convergence in probability. For any time series $\{x_t\}_{t\in[n]}$, we write $\Delta x_t = x_t - x_{t-1}$ as the first-order difference.

\subsection{Predictive Regression} \label{subsec: predictive regression}
Let $\ww_t = (w_{j,t})_{j\in[p]}$ denote a $p$-vector of predictors. At time $n$, we are interested in using a linear combination $\ww_n^\top\ttheta$ to predict a future scalar outcome $y_{n+1}$. We learn the true parameter $\ttheta^*$ from the historical data via the predictive regression
\begin{equation}\label{eq: feasible}
    y_t = \ww_{t-1}^\top\ttheta^* + u_t,\quad t=1,2,...,n.
\end{equation}
To simplify notations, in the theoretical analysis we follow \citetalias{mei2024lasso} by omitting the intercept, which can be handled by the Frisch-Waugh-Lovell theorem. For simplicity, let
the initial value $\|\ww_{t=0}\|_\infty = O_p(1)$. Let $\mathcal{S}=\{j\in[p]:\theta_j^*\neq 0\}$ denote the index set of nonzero coefficients, and $\mathcal{S}^c = [p] \setminus \mathcal{S}$ be its complement. We call $w_{j,t}$ an \emph{active regressor} if $j\in\mathcal{S}$, and an \emph{inactive regressor} otherwise.

All regressors with potential predictability of the target variable should
be included in the regression specification in the first place.
The ``zoo'' of candidate predictors, therefore, may include a diverse range of ``species''. Some are stationary, others are highly persistent, and there can be cointegrated ones among the latter.
As representatives of major categories, we consider three types of regressors in $\ww_t = (\zz_t^\top,\xx_t^{{\rm ab}\top},\xx_t^\top)^\top$:
$\zz_t$ is a $p_z$-vector of stationary regressors, $\xx_t^{{\rm ab}}$ is a $p_{\rm ab}$-vector of cointegrated time series, and $\xx_t$ is a $p_x$-vector of LUR processes.
\citet{lee2022lasso} use the same setting in a low dimensional (fixed $p$) regression, whereas this paper allows high dimensionality with $p$ larger than $n$.

The LURs are modeled as
\begin{equation*}
    \xx_{t} = \left(\bfI_{p_x}+n^{-1} \CC\right) \xx_{t-1} + \ee_t
\end{equation*}
with a $p_x$-vector of
stationary innovations $\ee_t$ and a diagonal matrix $\CC = \text{diag}(c_1, \ldots, c_{p_x})$. The matrix $\bfI_{p_x} + n^{-1}\CC$ is close to the identity matrix in large samples, which characterizes high persistence and allows for local unit roots. The exact unit root in \citetalias{mei2024lasso} is our special case with $\CC$ being the zero matrix.
For the cointegrated regressors, we consider the system with cointegration rank $p_{\rm a}$, where $ \xx_t^{{\rm ab}}= (\xx_t^{\rm a\top},\xx_t^{\rm b\top})^\top$ and
\begin{equation}\label{eq: triangular} 
         \xx_t^{\rm a} = \AA \xx_t^{\rm b} + \vv_t^{\rm a}, \quad \quad
         \xx_t^{\rm b} = \left(\bfI_{p_{\rm b}}+ n^{-1} \CC^{\rm b} \right)\xx_{t-1}^{\rm b} + \ee_t^{\rm b}. 
\end{equation}
The $p_{\rm a}$-vector $\vv_t^{\rm a}$ is the stationary cointegrating residual and  $\ee_t^{\rm b}$ collects $p_{\rm b} = p_{\rm ab} - p_{\rm a}$ stationary components. The coefficients include a $p_{\rm a}\times p_{\rm b}$ matrix $\AA$ and a   diagonal matrix $\CC^{\rm b} = \text{diag}(c_1^{\rm b}, \ldots, c_{p_{\rm b}}^{\rm b})$. The cointegration in $\xx_t^{\rm ab}$ allows for nontrivial predictive power of nonstationary regressors for a stationary outcome.

The cointegrating system with the triangular form (\ref{eq: triangular}) accommodates general cointegration structures, admitting the following representations of the integrated latent factor and the error correction model, which are popular in the literature.

 \begin{example}[Integrated Latent Factor]\label{exp:factor} Following \citet{stock1988testing} and \citet{barigozzi2024inference}, suppose that the $p_{\rm ab}$ cointegrated regressors admit a latent factor model $\xx_t^{{\rm ab}} = \LLambda\ff_t + \uu^{\rm ab}_t$, where $\ff_t = \ff_{t-1} + \vareepsilon_t$ is a $p_{\rm b}$-vector of unit root factors and $(\uu_t^{\mathrm{ab}\top},\vareepsilon_t^\top)^\top$ are stationary innovations. Furthermore, assume that the $p_{\rm ab} \times p_{\rm b}$ loading matrix has full column rank $p_{\rm b}$, and therefore has a $p_{\rm b}\times p_{\rm b}$ full-rank submatrix. Without loss of generality, we can write $\LLambda = (\LLambda^{\rm a\top},\LLambda^{\rm b\top})^\top$ where $\LLambda^{\rm b}$ is a $p_{\rm b}\times p_{\rm b}$ full-rank matrix. Let $\xx_t^{\rm a}$ be the first $p_{\rm a}$ entries of $\xx_t^{{\rm ab}}$, and $\xx_t^{\rm b}$ be the last $p_{\rm b}$ ones. We then rewrite the factor model as
 \[
 \begin{pmatrix}
    \xx_t^{\rm a} \\
    \xx_t^{\rm b}
 \end{pmatrix}=\begin{pmatrix}
    \LLambda^{\rm a} \\
    \LLambda^{\rm b}
 \end{pmatrix}\ff_t +\begin{pmatrix}
    \uu_t^{\rm a} \\
    \uu_t^{\rm b}
 \end{pmatrix},\quad \ff_t = \ff_{t-1} + \vareepsilon_t.
\]
We can derive $\xx_t^{\rm b} = \xx_{t-1}^{\rm b} + \ee_t^{\rm b}$ with $\ee_t^{\rm b} = \Delta\uu_t^{\rm b} + \LLambda^{\rm b}\vareepsilon_t$, and $\xx_t^{\rm a} = \AA \xx_t^{\rm b} + \vv_t^{\rm a}$ with $\AA = \LLambda^{\rm a}(\LLambda^{\rm b})^{-1}$ and $\vv_t^{\rm a} = \uu_t^{\rm a} -  \AA \uu_t^{\rm b}$, which is a special case of (\ref{eq: triangular}).
 \end{example}

 \begin{example}[Error Correction Model] Following \citet{bykhovskaya2022cointegration}, suppose that the $p_{\rm ab}$ cointegrated regressors admit an error correction model $\Delta \xx_t^{\rm ab} = \aalpha^{{\rm ab}}\bbeta^{{\rm ab}\top} \xx_{t-1}^{\rm ab} + \vareepsilon_t$, where two $p_{\rm ab} \times p_{\rm a}$ matrices $\aalpha^{{\rm ab}}$ and $\bbeta^{{\rm ab}} $ are of full column rank, and $\bbeta^{{\rm ab}\top}\xx_t^{\rm ab}$ is stationary.  Let the $p_{\rm ab}\times p_{\rm b}$ matrices $\aalpha_\perp^{{\rm ab}}$ and $ \bbeta_\perp^{{\rm ab}}$ be the orthogonal complements of $\aalpha^{{\rm ab}}$ and $\bbeta^{{\rm ab}}$. By the Granger Representation Theorem \citep{engle1987co,hansen2005granger}, we can rewrite the error correction model as
 $\xx_t^{\rm ab} = \LLambda \ff_t + \ee^{\rm ab}_t$, where $\ff_t = \aalpha_\perp^{{\rm ab}\top}\sum_{s=1}^t \vareepsilon_s$ is a $p_{\rm b}$-vector of unit roots, $\LLambda    = \bbeta_\perp^{{\rm ab}}(\aalpha_\perp^{{\rm ab}\top}\bbeta_\perp^{{\rm ab}})^{-1}$, and the stationary $\ee^{\rm ab}_t$ is a linear process driven by $\vareepsilon_t$. Therefore, we can  formulate the error correction model as an integrated factor model, and derive the triangular representation \eqref{eq: triangular} following Example~\ref{exp:factor}.
 \end{example}

 \medskip

\subsection{Regularized Estimation} \label{subsec: regularization}

Sparsity means the proportion of nonzero coefficients in $\ttheta^*$ is small.
Regularized estimation is a natural choice to estimate the sparse coefficient $\ttheta^*$ in the predictive regression (\ref{eq: feasible}). Importantly, we remain agnostic about the underlying cointegration relationship and the identities of the stationary regressors, the cointegrated regressors, and the local unit roots.

We consider the following penalized regression:
\begin{equation}\label{eq: regularized estimation}
    \hat\ttheta = \arg\min_{\ttheta} \mathcal{L}_\lambda(\ttheta),\quad \mbox{where }
    \mathcal{L}_\lambda(\ttheta)=\dfrac{1}{n}\sum_{t=1}^n(y_t - \ww_{t-1}^\top \ttheta)^2 + \sum_{j=1}^p J_\lambda(\hat\sigma_j\theta_j),
\end{equation}
where $\hat\sigma_j = \sqrt{n^{-1}\sum_{t=1}^n (w_{j,t-1}-\bar w_j)^2}$ is the sample standard deviation of $w_{j,t-1}$, and $\bar w_j = n^{-1}\sum_{t=1}^n w_{j,t-1}$ is the sample mean. The coefficients are weighted by the sample standard deviation (s.d.)~for two purposes. First, the standardization achieves scale invariance that does not change the interpretation in applications. 
Second, as highlighted in \citetalias{mei2024lasso} and \citetalias{gao2026lasso}, standardization is essential in balancing the different asymptotic scales of local unit roots and stationary regressors, without which consistency cannot be achieved. 

Our discussion will cover these penalty forms for sparse regressions:
\begin{description}
  \item[LASSO]
  $J_\lambda(t) = \lambda |t|;$

  \item[SCAD] $J_\lambda(t) = \lambda |t|\cdot \textbf{1}\{|t|\leq \lambda\} -\frac{t^2 - 2a\lambda |t| + \lambda^2}{2(a-1)} \cdot \textbf{1}\{\lambda<|t|\leq a\lambda\} + \frac{(a+1)\lambda^2}{2}\cdot  \textbf{1}\{|t| > a\lambda\}$, $a > 2$;

  \item[MCP]
  $
  J_\lambda(t) = \left( \lambda |t| - \frac{t^2}{2a}\right)\cdot\textbf{1}\{|t| \leq a\lambda\} + \frac{a\lambda^2}{2}\cdot \textbf{1}\{|t| > a\lambda\}$,   $a > 1$.
\end{description} 
As we will discuss in the following sections, LASSO fails to achieve consistent estimation under cointegration, and FCPs like SCAD and MCP are valid remedies.

To provide sharper theoretical results to facilitate prediction in practice, we also consider the following constrained regularization for FCPs:
\begin{equation}\label{eq: constrained regularized estimation}
   \tilde\ttheta = \arg\min_{\ttheta} \mathcal{L}_\lambda (\ttheta),\,
    \quad \text{subject to } \sum_{j=1}^p J_\lambda(\hat\sigma_j\theta_j) \leq \mu,
\end{equation}
with $\mu > 0$ being another tuning parameter that measures the level of overall sparsity. 
We will discuss the tuning parameter selection in Section \ref{sec: simulations} regarding the Monte Carlo simulations. Intuitively, the constraint on the total penalty level restricts the parameter space to a bounded-penalty region that favors sparser solutions, and avoids overfitting caused by many small coefficients. The constrained regularization problem has been studied by \citet{shen2012likelihood} and \citet{chen2023sparse}, while this paper is the first to consider this problem under cointegration.

\begin{remark}
For the LASSO penalty, the duality of convex optimization implies that the constrained formulation \eqref{eq: constrained regularized estimation} is equivalent to the unconstrained analog \eqref{eq: regularized estimation} for an appropriately chosen tuning parameter. Such an equivalence breaks down for nonconvex penalties such as SCAD and MCP. Interested readers can refer to supplementary Section \ref{subsec: inequivalence} for detailed discussions.
\end{remark}

\subsection{Inconsistency of LASSO and Its Remedy via FCPs}\label{subsec: inconsistency}

This section illustrates the inconsistency of LASSO under cointegration, and why FCPs are valid remedies. While LASSO is a representative example of convex penalties, with appropriate adaptations parallel arguments apply to other convex penalties such as the ridge regression and the elastic net.

The key difficulty in penalizing the cointegrated regression lies in the fact that, individually, each nonstationary regressor appears to be of order $O_p(\sqrt{n})$, but collectively they may contribute a stationary component of order $O_p(1)$. Since the standardized LASSO penalizes coefficients on an individual basis, it rescales the penalty according to the stochastic order of each regressor. The $O_p(\sqrt{n})$ penalty is disproportionately large relative to the $O_p(1)$ signal that should be retained.
This is the mismatch that causes the failure of LASSO.

To elaborate this issue, we reiterate a bivariate toy model used in \citetalias{mei2024lasso}, and then extend the arguments to general predictive regressions.
Throughout this section, we consider the LASSO tuning parameter satisfying $\lambda \to 0$ and $\sqrt{n}\lambda \to \infty$, which is necessary for balancing the LURs and the stationary ones.

\begin{example}\label{ex: bivariate toy}
Suppose that we observe the cointegrated pair $x^{\rm a}_t$ and $x^{\rm b}_t$  satisfying $x^{\rm a}_t = x^{\rm b}_t + v_t^{\rm a}$ and $x^{\rm b}_t = x^{\rm b}_{t-1} + e_t^{\rm b}$. Furthermore, a stationary outcome $y_t$ follows the predictive regression on the underlying cointegrating residual $v_t^{\rm a}$ with coefficient  $\phi^{\rm a*}\neq 0$:
\begin{equation*} 
    y_t  = v_{t-1}^{\rm a}\phi^{\rm a*} + u_t
         = x^{\rm a}_{t-1}\phi^{\rm a*} + x^{\rm b}_{t-1}\phi^{\rm b*} + u_t.
\end{equation*}
The rightmost expression, with $\phi^{\rm b*} = -\phi^{\rm a*}$,  is the feasible version.
Consider the (standardized) LASSO estimator $\hat\phi = \{\hat\phi^{\rm a},\hat\phi^{\rm b}\}$ that minimizes 
\begin{equation*}
\mathcal{L}_\lambda^{\rm LASSO}(\phi^{\rm a}, \phi^{\rm b})  =
\frac{1}{n}\sum_{t=1}^n (y_t - x^{\rm a}_{t-1}\phi^{\rm a} - x^{\rm b}_{t-1}\phi^{\rm b})^2 + \lambda (\hat\sigma_{\rm a}|\phi^{\rm a}| + \hat\sigma_{\rm b}|\phi^{\rm b}|),
\end{equation*}
with $\hat\sigma_j$ being the sample s.d.~of $\{x^{j}_t\}$ for $j \in\{{\rm a}, {\rm b}\}$, whose orders are $\sqrt{n}$ due to nonstationarity.
If $\hat\phi^{\rm a}$ and $\hat\phi^{\rm b}$ converge to any nonzero true coefficients, say $c^{\rm a}$ and $c^{\rm b}$, the LASSO criterion function
$
\mathcal{L}_\lambda^{\rm LASSO}( \hat \phi^{\rm a}, \hat \phi^{\rm b}) \geq
\lambda (\hat\sigma_{\rm a}|\hat\phi^{\rm a}| + \hat\sigma_{\rm b}|\hat\phi^{\rm b}|) \stackrel{\mathrm{p}}{\asymp} \sqrt{n}\lambda ( | c^{\rm a}| +  |c^{\rm b}|) \to \infty $ given $\sqrt{n} \lambda \to \infty$.
This is much larger than
$\mathcal{L}_\lambda^{\rm LASSO}( \hat \phi^{\rm a} = 0, \hat \phi^{\rm b} = 0 ) =  n^{-1}\sum_{t=1}^n y_t^2  + 0 \stackrel{p}{\rightarrow} \mathbb{E}(y_t^2) < \infty$, where the last step applies the fact that $y_t$ is stationary. This contradiction implies that LASSO is inconsistent.
\end{example}

LASSO's inconsistency arises in a wider class of cointegrated models beyond the simple bivariate example, which is formulated in Section \ref{subsec:InconsistencyLASSO} of the Supplementary Materials.
 We highlight that LASSO's inconsistency under cointegration is more severe than overly selecting redundant variables in the i.i.d.~setting. Instead, this inconsistency leads to a failure in guaranteeing vanishing $\ell_1$ estimation error, which fundamentally undermines parameter estimation and impairs the predictive performance.

How do FCPs fix the inconsistency of LASSO? Recall that in the toy example, the $\ell_1$ penalties explode whenever the LASSO estimates for the coefficients of the cointegrated regressors converge to nonzero values. In contrast, FCPs become flat when the estimates are sufficiently large, which prevents the penalty from growing excessively and allows the estimates to converge to the true nonzero coefficients without being overly penalized. At the same time, FCPs behave like LASSO around 0, thereby remaining effective in variable selection.

Figure \ref{fig:illustration} provides an illustration of how LASSO and SCAD penalties behave for cointegrated regressors. The leftmost panel shows the sum of squared residuals (SSR) function and penalties for a stationary regressor, and the middle panel presents the analogs for an LUR regressor. For stationary regressors, both the SSR and the penalty operate on a relatively small scale. For LUR regressors, the SSR becomes much steeper, which is matched by a larger penalty. The rightmost panel exhibits a misaligned pattern under a cointegrated regressor.
Specifically, the SSR there resembles that of a stationary component, whereas the penalty level follows that of the LUR according to the behavior of an individual time series. This mismatch causes LASSO to penalize the cointegrating coefficients excessively. SCAD remedies the over-penalization of LASSO because its penalty becomes flat for sufficiently large coefficients. Furthermore, when the coefficient is near zero, SCAD behaves like LASSO and therefore retains its variable selection.

\begin{figure}[htbp]
\centering
\includegraphics[width=1\textwidth]{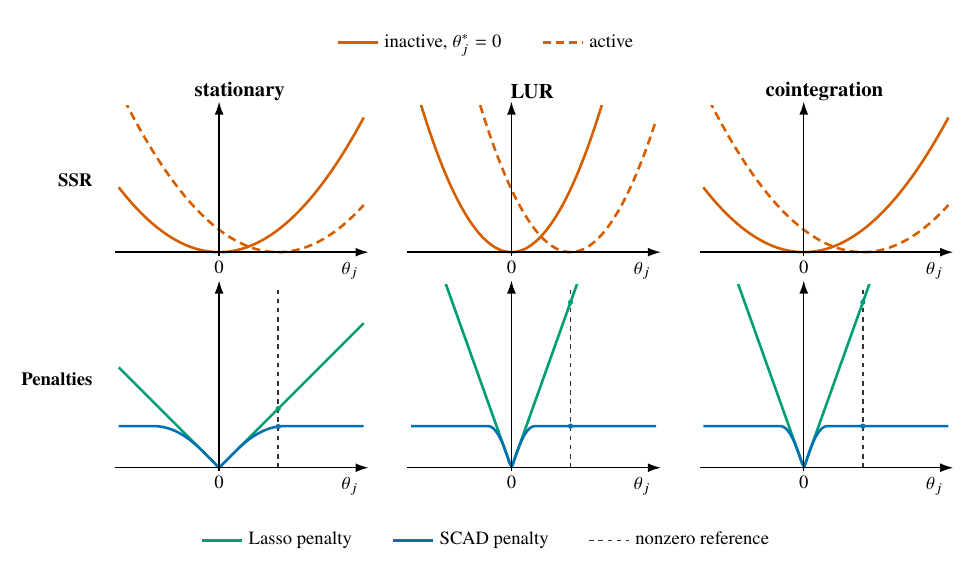}
\caption{Illustration of LASSO and SCAD penalties for cointegration}
{{\begin{flushleft}
\singlespacing\footnotesize \textit{Notes}: The upper panel plots the SSR as a function of the coefficient $\theta_j$ when the regressor is stationary, LUR, and cointegrated, respectively. The solid line represents the SSR function for an inactive regressor, while the dashed line represents that for an active regressor. The lower panel shows the corresponding penalty functions for LASSO and SCAD, respectively.
\end{flushleft}}}
\label{fig:illustration}
\end{figure}

\begin{remark}\label{rem: Plasso}
The inconsistency of LASSO cannot be resolved by removing standardization. \citetalias{mei2024lasso} show that the Plain LASSO without standardization requires $\lambda\to\infty$, rather than $\sqrt{n}\lambda\to\infty$, to handle LUR regressors. By the same argument, the estimated coefficients of the cointegrated regressors in the toy example still converge to zero. Moreover, such a large penalty also over-penalizes stationary regressors and shrinks their coefficients to zero. This is why \citetalias{mei2024lasso} emphasize that standardization is essential when stationary and persistent regressors coexist.
\end{remark}

\subsection{Illustrative Simulation}\label{subsec:illustrative}
Before diving into asymptotic theory, we use an illustrative simulation to highlight the necessity of FCPs under cointegration.
Under a predictive regression model specified in Section \ref{sec: simulations} with $(n,p)=(300,300)$, we examine the one-period OOS prediction errors and coefficient estimation errors by LASSO, SCAD, MCP, and the constrained counterparts of the latter two defined in (\ref{eq: constrained regularized estimation}), which we call the ``SCAD-C'' and ``MCP-C''.

\begin{figure}[htbp]
\centering
\includegraphics[width=0.9\textwidth]{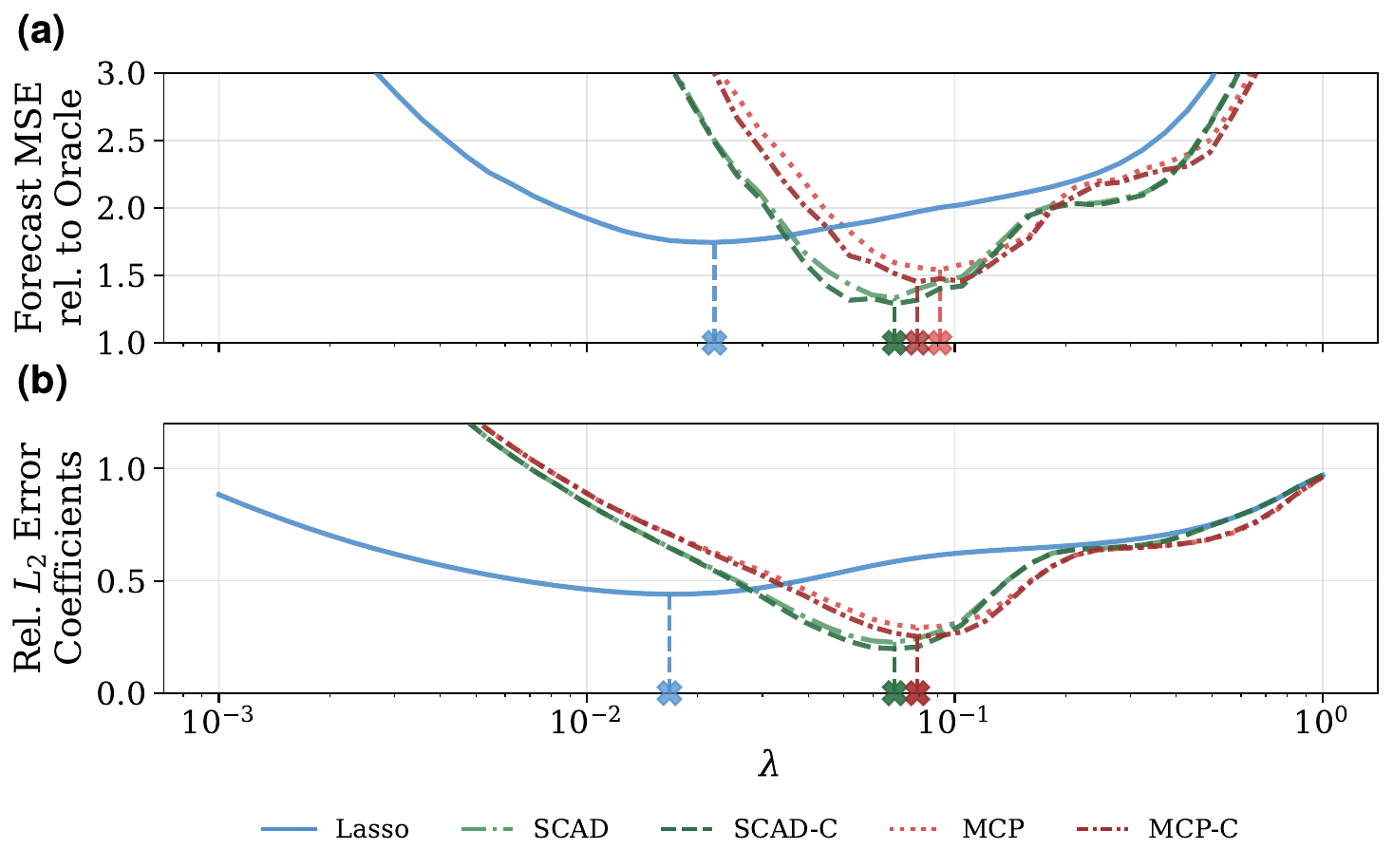}
\caption{Forecast performance and coefficient estimation error across $\lambda$}
\label{fig:solution-path}
{\begin{flushleft}
\singlespacing\footnotesize \textit{Notes}: Panel (a) exhibits the average OOS forecast MSE relative to that of the oracle OLS estimator over 1000 replications, and panel (b) exhibits the average relative $\ell_2$ error $\|\hat\ttheta-\ttheta^*\|_2/\|\ttheta^*\|_2$. The horizontal axis is the tuning parameter $\lambda$ from $10^{-3}$ to 1. The vertical dashed lines indicate the best tuning parameter $\lambda$ that minimizes the average forecast or estimation error. 
\end{flushleft}}
\end{figure}

Figure \ref{fig:solution-path}(a) exhibits the OOS forecast mean squared error (MSE) relative to that of the oracle OLS estimator over 1000 replications, where the oracle directly uses the infeasible knowledge of the active regressors. The graph shows the result over all penalty parameters $\lambda$ in (\ref{eq: regularized estimation}) over a grid system from $10^{-3}$ to 1. The result shows that the best average forecast MSE relative to the oracle OLS along the SCAD and MCP paths is close to 1, while the best average forecast MSE along the LASSO path is always above 1.7. Figure \ref{fig:solution-path}(b) exhibits the average relative $\ell_2$ error $\|\hat\ttheta-\ttheta^*\|_2/\|\ttheta^*\|_2$ for a generic estimator $\hat\ttheta$. The minimum errors along the SCAD and MCP paths are close to zero, in contrast to the minimum along the LASSO path that is around 0.5.
This illustration provides evidence that LASSO is missing useful components when cointegration is present, and it cannot be fixed by adjusting the tuning parameter selection.
SCAD and MCP clearly outperform LASSO, and the constrained regularization further improves the finite-sample performance slightly. 

\section{Theory}\label{sec: theory}

This section establishes asymptotic theory for FCPs under cointegration. Section \ref{subsec: conditions} states the conditions. Section \ref{subsec: OLS} characterizes the convergence rate of the oracle OLS estimator as a benchmark for subsequent analysis.
Section \ref{subsec: global} shows that the global minimizer of the constrained FCP (\ref{eq: constrained regularized estimation}) achieves consistency in both parameter estimation and OOS prediction. Section \ref{subsec: oracle} establishes the oracle property of FCPs under an additional beta-min condition.

\subsection{Conditions}\label{subsec: conditions}

The LASSO error bounds have been deduced by \citetalias{mei2024lasso} when stationary and unit root regressors are mixed in the predictive model without cointegration. As this paper is built on this theoretical foundation, understandably the conditions largely overlap. 

Suppose that all stationary components $\vv_t = (u_t,\zz_t^\top,\vv_t^{\rm a\top},\ee_t^{\rm b\top},\ee_t^\top)^\top$ are generated from a linear transformation of independent shocks, given as
\begin{equation}\label{eq: v=Phiepsilon}
    \vv_t = \PPhi \vareepsilon_t,
\end{equation}
where $\ensuremath{\vareepsilon_{t}=(\varepsilon_{j,t})_{j\in[p+1]}}$
is a $(p+1)\times1$ random vector and $\PPhi$ is a $(p+1)\times(p+1)$
deterministic matrix. For each $j\in[p+1]$, the shock
\begin{equation}
\varepsilon_{j,t}=\sum_{d=0}^{\infty}\psi_{j,d}\eta_{j,t-d}\label{eq:linrProc}
\end{equation}
is a linear process. Assumptions \ref{assu:tail}--\ref{assu:covMat} characterize the stationary innovations.

\begin{assumption}
\label{assu:tail} Suppose $(\eta_{j,t})$ is i.i.d.~over the cross
section $j\in[p+1]$ and time $t\in\mathbb{Z}$ with $\mathbb{E}\eta_{j,t}=0$
and $\mathbb{E}\eta_{j,t}^{2}=1$. There exists an absolute constant
$C_{\mathrm{f}}$ such that
\begin{equation}
\int_{-\infty}^{\infty}|f(x+a)-f(x)|dx\leq C_{\mathrm{f}}|a|\quad\text{for all } a\in\mathbb{R},\label{eq:density}
\end{equation}
where $f$ is the density function of $\eta_{j,t}$. For all $t\in\mathbb{Z}$
and $\mu>0$, there exist absolute constants $C_{\eta}$ and $b_{\eta}$
such that
\begin{equation}
\Pr\left\{ |\eta_{j,t}|>\mu\right\} \leq C_{\eta}\exp(-\mu/b_{\eta}).\label{eq:innovtail}
\end{equation}

\end{assumption}

\begin{assumption}
\label{assu:alpha} For all $j\in[p+1]$, there exist absolute
constants $C_{\psi}$, $c_{\psi}$, and $r$ such that the coefficients
of the linear processes satisfy
\begin{equation*}
|\psi_{j,d}|\leq C_{\psi}\exp\left(-c_{\psi}d^{r}\right)\quad\text{for all } d\in\mathbb{N}.
\end{equation*}
Furthermore, $\psi_{j}(z)=\sum_{d=0}^{\infty}\text{\ensuremath{\psi_{jd}\cdot z^{d}}}$,
where we normalize $\psi_{j0}=1$ without loss of generality. Assume $\left|\psi_{j}(z)\right|>c_{\psi}>0$ for any $z\in\left\{ a\in\mathbb{C}:|a|\leq1\right\} $.
\end{assumption}

Assumption \ref{assu:tail} presumes that the innovations $\eta_{j,t}$ are i.i.d.\ with a common density function $f(\cdot)$ and a sub-exponential tail. Assumption \ref{assu:alpha} characterizes the dependence structure of the stationary components, which together with (\ref{eq:density}) implies the $\alpha$-mixing property of the linear processes (\ref{eq:linrProc}). For the next assumption, define $\OOmega = \PPhi\PPhi^\top$.

\begin{assumption}
\label{assu:covMat}   There are absolute constants $c_{\OOmega}$, $C_{\OOmega}$
and $C_{L}$ such that: (a) $c_{\OOmega}\leq\lambda_{\min}(\OOmega)\leq\lambda_{\max}(\OOmega)\leq C_{\OOmega}$;
(b) $\max_{j\in[p+1]}\|\PPhi_{j,\cdot}\|_1\leq C_{L}$.
\end{assumption}

Assumption \ref{assu:covMat}(a) controls the magnitude of cross-sectional correlation. 
Assumption \ref{assu:covMat}(b), together with
(\ref{eq:innovtail}), guarantees the sub-exponential tail of $\vv_t$ in (\ref{eq: v=Phiepsilon}), and validates the use of concentration inequalities in the proofs.

To facilitate theoretical analysis, we rewrite (\ref{eq: feasible}) into two equivalent forms. Define $\ttheta^* = (\ggamma^{*\top}, \pphi^{*\top}, \bbeta^{*\top})^\top$ and $\pphi^{*} = (\pphi^{\rm a*\top}, \pphi^{\rm b*\top})^\top$. We have
\begin{align}
  y_t &=  \ww_{t-1} ^\top \ttheta^* + u_{t} \nonumber\\
  & =  \zz_{t-1}^\top\ggamma^* + \xx_{t-1}^{{\rm ab}\top}\pphi^{*}+ \xx_{t-1}^\top\bbeta^* + u_t
  \label{eq:equivalent1}\\
  &= \zz_{t-1}^\top\ggamma^* + \vv_{t-1}^{\rm a\top}\pphi^{\rm a*} + \xx_{t-1}^{\rm b\top}(\AA^\top\pphi^{\rm a*}+\pphi^{\rm b*}) + \xx_{t-1}^\top\bbeta^* + u_t
  \label{eq:equivalent2}\\
 &= \ww_{t-1}^{\sharp\top} \boldsymbol{\vartheta}^{\sharp*} + u_{t}. \nonumber
\end{align}
The first form (\ref{eq:equivalent1}) is the direct expansion of (\ref{eq: feasible}), and the second form (\ref{eq:equivalent2}) is more convenient for theoretical analysis, since it only involves stationary components and the LURs without cointegrating relationships. The vector $\ww_{t}^\sharp = (\zz_{t}^\top, \vv_{t}^{\rm a\top}, \xx_{t}^{\rm b\top}, \xx_{t}^\top)^\top$ collects the \emph{infeasible} regressors with known  cointegration relationship to transform the cointegrated regressors into stationary components, and $\boldsymbol{\vartheta}^{\sharp*} = (\ggamma^{*\top}, \pphi^{\rm a*\top}, (\AA^\top\pphi^{\rm a*}+\pphi^{\rm b*})^\top, \bbeta^{*\top})^\top$ is the corresponding infeasible parameter,
by substituting $\xx_{t-1}^{\rm a}$ with $\AA \xx_{t-1}^{\rm b} + \vv_{t-1}^{\rm a}$ as in (\ref{eq: triangular}). Note that the feasible $\ww_t$ is a rotation of the infeasible $\ww_{t}^\sharp$, given as
\begin{equation}\label{eq: rotation}
    \ww_t = \QQ \ww_{t}^\sharp,\text{ where }\QQ = \left(\begin{array}{cccc}
       \bfI_{p_z}  &  &  & \\
         & \bfI_{p_{\rm a}} & \AA  &  \\
         &  & \bfI_{p_{\rm b}} & \\
         &  &  & \bfI_{p_x}
    \end{array}\right).
\end{equation}
Correspondingly, the coefficients satisfy $\ttheta^* = (\QQ^{\top})^{-1} \boldsymbol{\vartheta}^{\sharp*}.$

Assumptions \ref{assu:EZu}--\ref{assu:cointegration} below impose regularity conditions on the regression models (\ref{eq:equivalent1}) and (\ref{eq:equivalent2}). In our context, we must characterize both the feasible and the infeasible models to address the complexity of cointegration, presented in the following assumptions.

\begin{assumption}
\label{assu:EZu} $\mathbb{E}(\zz_{t-1}u_{t})={\boldsymbol 0}_{p_{z}}$ and $\mathbb{E}(\vv^{\rm a}_{t-1}u_{t})=\boldsymbol{0}_{p_{\rm a}}$ for all $t\in\mathbb{Z}$.
\end{assumption}

\begin{assumption}
\label{assu:cointegration} (a)
$\| \ttheta^* \|_0
\leq s$; (b) $\max\limits_{\mathcal{I}\subset [p_{\rm a}],|\mathcal{I}|\leq Ks}\sum\limits_{j\in[p_{\rm b}]}\boldsymbol{1}\{\AA_{\mathcal{I},j}\neq \boldsymbol{0}_{|\mathcal{I}|}\} = O(s)$ for some absolute constant $K\geq 1$; (c) $\max\limits_{i\in[p_{\rm a}]}\|\AA_{i,\cdot}\|_{1}+\left[\min\limits_{i\in[p_{\rm a}]}\|\AA_{i,\cdot}\|_2\right]^{-1}=O(1)$; (d) $
\max\limits_{j\in[p_x]}|c_j| + \max\limits_{j\in[p_{\rm b}]} |c_j^{\rm b}| = O(1)$.
\end{assumption} 

Assumption \ref{assu:EZu} imposes moment conditions that the error term and the stationary regressors in (\ref{eq:equivalent2}) are uncorrelated, ensuring that the true parameters are identified. Assumption \ref{assu:cointegration}(a)
entails the sparsity of the feasible model  (\ref{eq:equivalent1}). 
Assumption \ref{assu:cointegration}(b) imposes sparsity of the cointegration system (\ref{eq: triangular}). It means that any submatrix of  $\AA$ comprising $O(s)$ rows has at most $O(s)$ nonzero columns. Consequently,  $\AA^\top\aalpha$ is sparse for any sparse vector $\aalpha$, thereby ensuring the sparsity of the infeasible model (\ref{eq:equivalent2}). Assumption \ref{assu:cointegration}(c) regularizes the cointegration matrix $\AA$ by bounding its norms and singular values.  Assumption \ref{assu:cointegration}(d) controls the magnitudes of the local unit root coefficients for $\xx_t$ and $\xx^{\rm b}_t$.

Finally, Assumption \ref{assu:asym_n} below specifies the relative size of $n$, $p$, and $s$.

\begin{assumption}
\label{assu:asym_n} (a) $p=O(n^{\nu})$ for an absolute
constant $\nu\in\left(0,\infty\right)$; (b)\textcolor{red}{{} }$s=O\left(n^{1/4-\zeta}\wedge p^{1-\zeta}\right)$
for an arbitrarily small constant $\zeta>0$.
\end{assumption}

Assumption \ref{assu:asym_n}(a) allows for an arbitrary polynomial growth of $p$. Proofs can still go through if we relax $p$ to grow at some exponential rate of $n$. But the speed of such rates
will be peculiar to each of our convergence statements, thereby complicating the notations. Assumption \ref{assu:asym_n}(b) requires that the sparsity level $s$ cannot grow faster than $n^{1/4}$ or $p$.

\subsection{The Oracle OLS Estimator} \label{subsec: OLS}
We first establish the convergence rate of the oracle OLS estimator as a benchmark for subsequent analysis, whose theoretical property under cointegration is unexplored in the literature. The oracle estimator $\hat\ttheta^{\rm or}$ is defined as
\begin{equation}\label{eq:oracle-ols}
  \hat\ttheta^{\rm or}_{\mathcal{S}} = \left(\sum_{t=1}^n \ww_{\mathcal{S},t-1}\ww_{\mathcal{S},t-1}^\top\right)^{-1}\sum_{t=1}^n \ww_{\mathcal{S},t-1}y_t \quad \text{ and } \quad \hat\ttheta^{\rm or}_{\mathcal{S}^{c}} = \mathbf{0}_{|\mathcal{S}^{c}|},\end{equation}
where $\ww_{\mathcal{S},t-1} = (w_{j,t-1})_{j\in\mathcal{S}}$. The following proposition establishes the convergence rate of the oracle estimator $\hat\ttheta^{\rm or}$. Define $\mathcal{F}_n = \sigma(\{\vareepsilon_t\}_{t\leq n})$ as the $\sigma$-field generated by the innovations $(\vareepsilon_t)_{t\leq n}$ in (\ref{eq: v=Phiepsilon}).
 Recall that the sparsity level $s$ is defined in Assumption \ref{assu:cointegration}, and the rotation matrix $\QQ$ is defined in (\ref{eq: rotation}). 
\begin{proposition}\label{prop: OLS}
  Suppose that Assumptions \ref{assu:tail}--\ref{assu:asym_n} hold. Then for $q=1,2$,
  \begin{equation}\label{eq: OLS1}\|\RR_n \QQ^{\top} (\hat\ttheta^{\rm or}-\ttheta^*)\|_q
  \stackrel{\mathrm{p}}{\preccurlyeq} s^{(5-q)/2}(\log p)^{2+1/r},
  \end{equation}
  where $\RR_n = {\rm diag}(\sqrt{n}\bfI_{p_z+p_{\rm a}}, n\bfI_{p_{\rm b}+p_x})$.
  As a result, the feasible coefficients satisfy
  \begin{equation}\label{eq: OLS2}\|\hat\ggamma^{\rm or}-\ggamma^*\|_q+\|\hat\pphi^{\rm or}-\pphi^{*}\|_q+\sqrt{n}\|\hat\bbeta^{\rm or}-\bbeta^*\|_q \stackrel{\mathrm{p}}{\preccurlyeq}  
s^{(5-q)/2}(\log p)^{2+1/r} / \sqrt{n},
  \end{equation}
  and the OOS prediction error satisfies
  \begin{equation}\label{eq: oracle prediction}
    \mathbb{E}\left[(y_{n+1} - \ww_n^\top \hat\ttheta^{\rm or})^2|\mathcal{F}_n\right]  - \mathbb{E}\left[u_{n+1}^2|\mathcal{F}_n\right] \overset{\rm p}{\to} 0.
  \end{equation}
\end{proposition}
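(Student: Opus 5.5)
We work in the infeasible rotated coordinates. Let $\mathcal S^\sharp$ be the support of $\boldsymbol\vartheta^{\sharp*}$ enlarged by all coordinates reachable through $\QQ$ from $\mathcal S$. Concretely, $\mathcal S^\sharp$ contains the $\zz$ and $\xx$ indices in $\mathcal S$, the $\vv^{\rm a}$ indices in $\mathcal S\cap[p_{\rm a}]$, and the $\xx^{\rm b}$ indices given by the union of $\mathcal S$ with the nonzero columns of $\AA_{\mathcal S\cap[p_{\rm a}],\cdot}$. By Assumption \ref{assu:cointegration}(b), $|\mathcal S^\sharp|=O(s)$. The key identity is that OLS is equivariant under invertible reparametrization. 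On the support $\mathcal S$ we have $\ww_{\mathcal S,t}=\QQ_{\mathcal S,\mathcal S^\sharp}\ww^\sharp_{\mathcal S^\sharp,t}$. When $\QQ_{\mathcal S,\mathcal S^\sharp}$ is not square, I would instead regress on the span, noting that $\QQ^\top(\hat\ttheta^{\rm or}-\ttheta^*)$ is supported on $\mathcal S^\sharp$.

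Write $\ddelta=\QQ^\top(\hat\ttheta^{\rm or}-\ttheta^*)$, supported on $\mathcal S^\sharp$. Then $\RR_n\ddelta=\hat\GG^{-1}\RR_n^{-1}\sum_t \tilde\ww^\sharp_{t-1}u_t$, where $\hat\GG=\RR_n^{-1}\sum_t\tilde\ww^\sharp_{t-1}\tilde\ww^{\sharp\top}_{t-1}\RR_n^{-1}$. Here $\tilde\ww^\sharp$ is $\ww^\sharp_{\mathcal S^\sharp}$ projected onto the column space generated by $\QQ^\top$ restricted to $\mathcal S$. If necessary, I would handle this by writing the oracle fit as a constrained least squares problem. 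Since $\ww^\sharp$ contains only stationary components and LURs without cointegration, $\hat\GG$ is exactly the normalized Gram matrix analysed in \citetalias{mei2024lasso}/\citetalias{gao2026lasso}. The proof then has two steps.

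(i) \emph{Lower eigenvalue.} I would show $\lambda_{\min}(\hat\GG)\stackrel{\mathrm p}{\succcurlyeq} 1/(s(\log p)^{c})$ over all $O(s)$-sparse index sets. I would import the restricted eigenvalue bound of \citetalias{mei2024lasso}/\citetalias{gao2026lasso} for mixed stationary/LUR regressors, which gives $\lambda_{\min}\gtrsim (s\log p)^{-1}$-type lower bounds. Cross terms between stationary and LUR blocks are $O_p(n^{-1/2}\cdot\text{polylog})$ per entry, hence negligible after multiplying by $s$ because $s=o(n^{1/4})$.

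(ii) \emph{Score bound.} I would show $\|\RR_n^{-1}\sum_t\ww^\sharp_{t-1}u_t\|_\infty\stackrel{\mathrm p}{\preccurlyeq}(\log p)^{1+1/r}$. The stationary part is a mixing sum by Assumption \ref{assu:EZu}, handled with Bernstein-type inequalities for $\alpha$-mixing sub-exponential processes. The LUR part $n^{-1}\sum_t x_{j,t-1}u_t$ is bounded by the deviation inequalities of \citetalias{mei2024lasso}; $u_t$ may correlate with $\ee_t$, but only through the linear process structure. Combining (i) and (ii) with $\|\cdot\|_q\le s^{1/q}\|\cdot\|_\infty$ on the score gives $\|\RR_n\ddelta\|_q\le\lambda_{\min}^{-1}s^{1/q}\cdot$polylog. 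The power $s^{(5-q)/2}$ arises from the crude eigenvalue bound $s^{-1}$ or $s^{-2}$ together with the $\ell_q$–$\ell_\infty$ conversion, and I would tune the bookkeeping to match it.

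For (\ref{eq: OLS2}), invert the rotation: $\hat\ttheta^{\rm or}-\ttheta^*=(\QQ^\top)^{-1}\ddelta$ with $(\QQ^{\top})^{-1}$ block-triangular in $-\AA^\top$. The $\ggamma,\pphi^{\rm a}$ and $\bbeta$ components are read off directly. The $\pphi^{\rm b}$ component equals $\ddelta^{\rm b}-\AA^\top\ddelta^{\rm a}$, which is dominated by the $n^{-1/2}$-rate $\ddelta^{\rm a}$ term; its norm is controlled via $\max_i\|\AA_{i\cdot}\|_1=O(1)$ and the sparsity in Assumption \ref{assu:cointegration}(b).

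For (\ref{eq: oracle prediction}), note that $y_{n+1}-\ww_n^\top\hat\ttheta^{\rm or}=u_{n+1}-\ww_n^{\sharp\top}\ddelta$. The cross term has zero conditional mean by Assumption \ref{assu:EZu} and the $\mathcal F_n$-measurability of $\ddelta$ (the conditional correlation of $u_{n+1}$ with $\ww_n$ is handled via the linear process structure). The remaining quadratic term is bounded by $\|\RR_n^{-1}\ww^\sharp_n\|_\infty\|\RR_n\ddelta\|_1$. Here $\|\RR_n^{-1}\ww^\sharp_n\|_\infty=O_p(n^{-1/2}\sqrt{\log p})$, since stationary entries are polylog over $\sqrt n$ and LUR entries are $O(\sqrt{n\log p})/n$. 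This gives $s^2(\log p)^{c}/\sqrt n\to0$ because $s=O(n^{1/4-\zeta})$.

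The main obstacle is step (i): obtaining a uniform restricted minimum eigenvalue for the rotated, mixed-scale Gram matrix with growing $s$, given an unobserved rotation and the nonsquare support bookkeeping. I would try first to reduce it entirely to the RE results of \citetalias{gao2026lasso} applied to $\ww^\sharp$ on the $O(s)$-sparse set $\mathcal S^\sharp$.
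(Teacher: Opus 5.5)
Your overall route is the paper's. You pass to the rotated regressors $\ww_t^\sharp=\QQ^{-1}\ww_t$, bound the quadratic form from below by the restricted eigenvalue $c/(s\log p)$ of $\hat\SSigma^\sharp$ (Lemma~\ref{lem:RE xz}), bound the score from above by Lemma~\ref{lem:DB}, and then rotate back. Your exponent count for (\ref{eq: OLS1}) is also exact: $\lambda_{\min}^{-1}s^{1/q}$ times the sup-norm of the score gives $s^{3/2}$ for $q=2$ and $s^{2}$ for $q=1$, so no tuning is needed.

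The non-square issue you flag needs no closed form. The normal equations $\WW_{\mathcal S}^\top\WW(\hat\ttheta^{\rm or}-\ttheta^*)=\WW_{\mathcal S}^\top\uu$, together with $\WW=\WW^\sharp\QQ^\top$, give the identity $\dd^\top\hat\SSigma^\sharp\dd=\dd^\top\RR_n^{-1}\WW^{\sharp\top}\uu$ for $\dd=\RR_n\QQ^\top(\hat\ttheta^{\rm or}-\ttheta^*)$ (i.e.\ $\dd=\RR_n\ddelta$ in your notation). Here $\dd$ is supported on $\mathcal T=\{j:\QQ_{\mathcal S,j}\neq\boldsymbol 0\}$ with $|\mathcal T|=O(s)$, which is all the RE argument needs.

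\textbf{The gap: (\ref{eq: OLS2}) for $q=2$, in the $\pphi^{\rm b}$ block.} You write $\hat\pphi^{\rm or,b}-\pphi^{\rm b*}=n^{-1}\dd_{\mathcal C_{\rm b}}-\AA^\top(\hat\pphi^{\rm or,a}-\pphi^{\rm a*})$ and plan to control the last term via $\max_i\|\AA_{i,\cdot}\|_1=O(1)$ and the sparsity of $\AA$.

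\begin{itemize}
\item Those assumptions bound $\AA^\top$ as an operator on $\ell_1$, not on $\ell_2$. If every row of $\AA$ equals $\boldsymbol{e}_1^\top$, the first standard basis vector of $\mathbb{R}^{p_{\rm b}}$ (allowed by Assumption~\ref{assu:cointegration}(b),(c)), then $\|\AA^\top\boldsymbol{\alpha}\|_2=|\sum_i\alpha_i|$. This equals $\sqrt s\|\boldsymbol{\alpha}\|_2$ for an $s$-sparse $\boldsymbol{\alpha}$ with equal entries.
\item So the best available bound is $\|\AA^\top(\hat\pphi^{\rm or,a}-\pphi^{\rm a*})\|_2\lesssim\sqrt s\,\|\hat\pphi^{\rm or,a}-\pphi^{\rm a*}\|_2$.
\item The only bound your argument supplies is $\|\hat\pphi^{\rm or,a}-\pphi^{\rm a*}\|_2\le\|\dd\|_2/\sqrt n\stackrel{\mathrm{p}}{\preccurlyeq}s^{3/2}(\log p)^{2+1/r}/\sqrt n$. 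Combined, you end at $s^{2}(\log p)^{2+1/r}/\sqrt n$, a factor $\sqrt s$ short of the claim.
\end{itemize}

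The missing idea is a sharper bound on the stationary block of $\dd$.

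\begin{itemize}
\item The RE constant $1/(s\log p)$ comes from the LUR block. The stationary block has eigenvalues bounded below by a constant.
\item The stationary--LUR cross term and the stationary sampling error are $O_p(\|\dd\|_1^2(\log p)^{1+1/(2r)}/\sqrt n)$. Using $\|\dd\|_1^2\le Cs\|\dd\|_2^2$ and RE, this is $\stackrel{\mathrm{p}}{\preccurlyeq}\dd^\top\hat\SSigma^\sharp\dd\cdot s^2(\log p)^{2+1/(2r)}/\sqrt n=o_p(\dd^\top\hat\SSigma^\sharp\dd)$.
\item Hence $\|\dd_{\mathcal M_z\cup\mathcal C_{\rm a}}\|_2^2\le C\,\dd^\top\hat\SSigma^\sharp\dd$. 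The identity plus RE then give $\dd^\top\hat\SSigma^\sharp\dd\stackrel{\mathrm{p}}{\preccurlyeq}s^2(\log p)^{3+2/r}$.
\item So $\|\hat\pphi^{\rm or,a}-\pphi^{\rm a*}\|_2\stackrel{\mathrm{p}}{\preccurlyeq}s(\log p)^{3/2+1/r}/\sqrt n$. Only with this does the $\sqrt s$ loss through $\AA^\top$ land at $s^{3/2}$.
\end{itemize}

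You already observe in step (i) that the cross terms are negligible. You need to use that blockwise, not only to obtain the global RE.

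\textbf{A smaller error in the prediction step.} The cross term does not have zero conditional mean. Since $\ww_n^\sharp$ and $\ddelta$ are $\mathcal F_n$-measurable, $\mathbb{E}[u_{n+1}\ww_n^{\sharp\top}\ddelta\mid\mathcal F_n]=\mathbb{E}[u_{n+1}\mid\mathcal F_n]\,\ww_n^{\sharp\top}\ddelta$. In general $\mathbb{E}[u_{n+1}\mid\mathcal F_n]\neq 0$, because the components of $u_{n+1}$ are linear processes in past shocks that are $\mathcal F_n$-measurable. Assumption~\ref{assu:EZu} is only an unconditional orthogonality with $\zz_{t-1}$ and $\vv^{\rm a}_{t-1}$.

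A zero mean is not needed, though. By Jensen and Markov, $\mathbb{E}[u_{n+1}\mid\mathcal F_n]=O_p(1)$. Meanwhile $|\ww_n^{\sharp\top}\ddelta|\le\|\RR_n^{-1}\ww_n^\sharp\|_\infty\|\dd\|_1=O_p(s^2(\log p)^{3+1/r}/\sqrt n)=o_p(1)$. This is how the paper concludes.
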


Given the infeasible rotation matrix $\QQ$, the coefficient
$ \boldsymbol{\vartheta}^{\sharp*} = \QQ^{\top} \ttheta^* $ in (\ref{eq: OLS1})
is separated by two rates in $\RR_n$:  the stationary components converge at the standard $1/\sqrt{n}$ rate (up to a factor involving $s$ and $\log p$), and the local unit roots enjoy
\emph{super-consistency} with a faster convergence rate $1/n$.
Without $\QQ$, the feasible OLS coefficients in \eqref{eq: OLS2} also have two rates: the components tied to the stationary regressors and cointegrating system converge at the $1/\sqrt{n}$ rate, and only those associated with the non-cointegrated LURs converge at the faster $1/n$ rate. Though the cointegrated regressors individually appear as LUR processes, they comove closely together so that their convergence rate falls from $1/n$ to $1/\sqrt{n}$. This phenomenon is similar to the slowdown of convergence rates in (low-dimensional) OLS estimation when the regressors are nearly perfectly collinear.
Finally, prediction consistency is delivered by (\ref{eq: oracle prediction}),  showing that conditionally on the information set $\mathcal{F}_n$ the OOS mean squared prediction error of the oracle estimator converges to the conditional second moment of the innovation $u_{n+1}$.

With the oracle OLS convergence rate as a benchmark, we are ready to investigate the properties of the feasible FCP estimators under cointegration.

\subsection{Restricted Global Optimality}\label{subsec: global}

In this section we establish the error bound for the global optimizer for the constrained FCP (\ref{eq: constrained regularized estimation}) without the beta-min condition. As we have mentioned in the Introduction, the literature has explored alternative paths to address the challenge in characterizing the global optimizer, while these solutions depend on either the RSC condition \citep{loh2015regularized} or a consistent LASSO estimator \citep{fan2014strong,wang2013calibrating}, which are not available under cointegration.
We therefore turn to the constrained optimization problem (\ref{eq: constrained regularized estimation}) that restricts the search space to a region with a bounded penalty which encourages sparser solutions and helps establish the consistency of the global optimizer.

We impose the following condition on the penalty function $J_\lambda(\cdot)$.
\begin{condition}\label{cond:penalty}  Suppose that (a) the continuous function $J_\lambda(t)$ is symmetric around zero (i.e.~$J_\lambda(-t) = J_\lambda(t)$ for all $t \geq 0$), nondecreasing and concave on $[0,\infty)$ with $J_\lambda(0) = 0$; (b) $J_\lambda(t)$ is  differentiable on $(0,\infty)$, with $\lim_{t\to 0^+}J_\lambda'(t) \geq  a_1\lambda$ for some absolute constant $a_1 > 0$; (c) there exist absolute constants  $b > a > 0$ and $a_2\geq a_1$ such that $a_1\lambda \leq J_\lambda'(t) \leq  a_2 \lambda$ for $0 < t \leq a \lambda$, and  $J_\lambda'(t) = 0$ for $t > b \lambda$.
\end{condition}
Condition \ref{cond:penalty} follows the definition of FCPs in \citet{wang2013calibrating} that covers the widely used SCAD and MCP as special cases; see also \citet{fan2011nonconcave} and \citet{fan2014strong}  for alternative specifications of FCPs. The following theorem characterizes the global optimizer. 
\begin{theorem}\label{thm: global}
  Suppose that the penalty function $J_\lambda(\cdot)$ satisfies Condition \ref{cond:penalty}. In addition, there exist sufficiently large absolute constants $(C_1, C_2)$ such that $\lambda = C_1 (\log p)^{1+1/r}/\sqrt{n}$ and $\mu = C_2\cdot s \lambda^2$. Then
  under Assumptions \ref{assu:tail}--\ref{assu:asym_n}, the global minimizer
   $\tilde\ttheta = (\tilde\ggamma^\top, \tilde\pphi^\top,  \tilde\bbeta^\top)^\top$ of the constrained optimization problem \eqref{eq: constrained regularized estimation} satisfies the following error bound: for $q=1,2$,
  \begin{equation*}\left\|\RR_n \QQ^{\top}\bigl(\tilde\ttheta-\ttheta^*\bigr)\right\|_q
  \stackrel{\mathrm{p}}{\preccurlyeq} s^{(5-q)/2}(\log p)^{2+1/r},
  \end{equation*}
  and the feasible coefficients satisfy
    \begin{equation*}\|\tilde\ggamma-\ggamma^*\|_q+\|\tilde\pphi-\pphi^{*}\|_q+\sqrt{n}\|\tilde\bbeta-\bbeta^*\|_q \stackrel{\mathrm{p}}{\preccurlyeq} 
      s^{(5-q)/2}(\log p)^{2+1/r} / \sqrt{n}.
  \end{equation*}
Furthermore, the OOS prediction error satisfies
  \begin{equation*}
    \mathbb{E}\left[(y_{n+1} - \ww_n^\top \tilde\ttheta)^2\middle|\mathcal{F}_n\right]  - \mathbb{E}\left[u_{n+1}^2|\mathcal{F}_n\right] \overset{\rm p}{\to} 0.
  \end{equation*}

\end{theorem}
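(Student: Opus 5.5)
We compare the global minimizer $\tilde\ttheta$ with the oracle estimator $\hat\ttheta^{\rm or}$ of Proposition~\ref{prop: OLS}, which serves as a feasible competitor. First we verify feasibility of $\hat\ttheta^{\rm or}$ for the constraint in \eqref{eq: constrained regularized estimation}. By Condition~\ref{cond:penalty}, $J_\lambda$ is nondecreasing, concave and flat beyond $b\lambda$. Hence $J_\lambda(t)\le a_2 b\lambda^2$ for all $t$. Since $|\mathcal S|\le s$, we get $\sum_j J_\lambda(\hat\sigma_j\hat\theta^{\rm or}_j)\le a_2 b\, s\lambda^2\le \mu$ once $C_2\ge a_2 b$. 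Optimality then gives the basic inequality $\mathcal L_\lambda(\tilde\ttheta)\le \mathcal L_\lambda(\hat\ttheta^{\rm or})$.

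Write $\ddelta=\tilde\ttheta-\hat\ttheta^{\rm or}$ and work in rotated, rescaled coordinates $\tilde\ddelta=\RR_n\QQ^\top\ddelta$. The Gram matrix of $\RR_n^{-1}\ww_t^\sharp$ is then a mixture of stationary and LUR blocks, exactly as in \citetalias{mei2024lasso} and \citetalias{gao2026lasso}. Expanding the squares, the basic inequality becomes
\[
\tilde\ddelta^\top \hat\SSigma^\sharp\tilde\ddelta \le 2\bigl|\tilde\ddelta^\top \RR_n^{-1} n^{-1}\textstyle\sum_t \ww^\sharp_{t-1}\hat u_t\bigr| + \mu,
\]
where $\hat u_t$ is the oracle residual. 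The penalty of $\tilde\ttheta$ is dropped since it is nonnegative, and the oracle penalty is bounded by $\mu$. Because $\hat u_t$ is orthogonal to $\ww_{\mathcal S,t-1}$, only coordinates outside the (rotated) active set contribute to the cross term. The deviation bound from \citetalias{mei2024lasso} (sub-exponential concentration under Assumptions~\ref{assu:tail}--\ref{assu:covMat}) controls the sup-norm of the score, uniformly over the rotation, by $O_p((\log p)^{1+1/r}/\sqrt n)$, which is of order $\lambda$. Sparsity of $\AA$ from Assumption~\ref{assu:cointegration}(b) keeps $\QQ^\top$ mapping $O(s)$-sparse vectors to $O(s)$-sparse vectors.

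The crucial use of the constraint is a cone-type restriction on $\ddelta$. Since $\sum_j J_\lambda(\hat\sigma_j\tilde\theta_j)\le \mu=C_2 s\lambda^2$, and $J_\lambda(t)\ge a_1\min(|t|,a\lambda)\lambda$ by concavity and Condition~\ref{cond:penalty}(b)--(c), at most $O(s)$ coordinates can have $|\hat\sigma_j\tilde\theta_j|\ge a\lambda$. The remaining coordinates have $\sum \hat\sigma_j|\tilde\theta_j|\lesssim s\lambda$. Thus the standardized $\ell_1$ mass of $\ddelta$ is carried, up to an $O(s\lambda)$ remainder, by a subvector of size $O(s)$. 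After rotation by $\QQ^\top$, Assumption~\ref{assu:cointegration}(b),(c) keeps this support at $O(s)$. This is what lets us invoke the restricted eigenvalue condition for the infeasible regressors $\RR_n^{-1}\ww^\sharp_t$. That RE lower bound is of order $1/(s\,\text{polylog})$ in the unit-root case of \citetalias{mei2024lasso}, and we adapt it to LURs via \citetalias{gao2026lasso}. Combining RE with the cross-term bound, we solve the quadratic inequality for $\|\tilde\ddelta\|_2$ and $\|\tilde\ddelta\|_1$. This yields rates no worse than $s^{(5-q)/2}(\log p)^{2+1/r}$, and the triangle inequality with \eqref{eq: OLS1} transfers the bound to $\tilde\ttheta-\ttheta^*$.

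We expect the main obstacle to be this cone step. The standardization $\hat\sigma_j$ is of order $\sqrt n$ for cointegrated coordinates but the relevant scaling in $\RR_n\QQ^\top$ is $\sqrt n$ for $\vv^{\rm a}$ and $n$ for $\xx^{\rm b}$, so the $\ell_1$ control must be moved between the feasible and rotated scales. We would bound $\|\RR_n\QQ^\top\ddelta\|_1$ by $\max_j(\hat\sigma_j^{-1})$-type weights times the standardized $\ell_1$ norm, using Assumption~\ref{assu:cointegration}(c) and $\hat\sigma_j\stackrel{\rm p}{\asymp}\sqrt n$ or $1$, and absorb the extra $s$ factors into the stated rate. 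The feasible-coefficient bounds then follow by inverting $\QQ^\top$, again using Assumption~\ref{assu:cointegration}(b),(c). For the prediction claim, we decompose $y_{n+1}-\ww_n^\top\tilde\ttheta=u_{n+1}-\ww_n^{\sharp\top}\QQ^\top(\tilde\ttheta-\ttheta^*)$. Conditional on $\mathcal F_n$, the cross term vanishes in expectation by the martingale-type structure of $u_{n+1}$. The remaining term is bounded by $\|\RR_n^{-1}\ww_n^\sharp\|_\infty\,\|\RR_n\QQ^\top(\tilde\ttheta-\ttheta^*)\|_1$. The first factor is $O_p(\sqrt{\log p}/\sqrt n)$ after the rescaling, so by Assumption~\ref{assu:asym_n}(b) the product is $o_p(1)$, as in \eqref{eq: oracle prediction}.
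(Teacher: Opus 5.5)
Your skeleton matches the paper's: the constraint leaves at most $O(s)$ coordinates with $\hat\sigma_j|\tilde\theta_j|>a\lambda$ and standardized $\ell_1$ mass $O(s\lambda)$ on the rest, rotation turns this into an approximate cone on an $O(s)$ index set, and RE plus a deviation bound finish. However, two steps as you state them do not give the claimed rates.

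\textbf{First gap: the cross term.} Because you expand around $\hat\ttheta^{\rm or}$, your score involves the oracle residual $\hat\uu=\MM_{\mathcal S}\uu$, and Lemma \ref{lem:DB} does not cover it. With $\tilde\WW^\sharp=\sqrt n\WW^\sharp\RR_n^{-1}$ one has $\tilde\WW^{\sharp\top}\hat\uu/n=\tilde\WW^{\sharp\top}\uu/n-\hat\SSigma^\sharp\RR_n\QQ^\top(\hat\ttheta^{\rm or}-\ttheta^*)/\sqrt n$. The second piece is only controlled by $\|\hat\SSigma^\sharp\|_{\max}\|\RR_n\QQ^\top(\hat\ttheta^{\rm or}-\ttheta^*)\|_1/\sqrt n\lesssim s^2(\log p)^{3+1/r}/\sqrt n$, which is far above $\lambda$. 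Fed into the RE step, it inflates the rate by powers of $s$. The orthogonality $\WW_{\mathcal S}^\top\hat\uu=\boldsymbol 0$ does not rescue this: it kills only $|\mathcal S|$ linear combinations of the rotated scores, not the rotated scores on the $O(s)$ support where you apply RE. The repair is easy. Either compare with $\ttheta^*$ directly, which is feasible by the same bound $J_\lambda\le a_2b\lambda^2$ (this is what the paper does), or use $\|\YY-\WW\hat\ttheta^{\rm or}\|_2\le\|\YY-\WW\ttheta^*\|_2$. Then the score is $\tilde\WW^{\sharp\top}\uu/n$, and the triangle inequality with Proposition \ref{prop: OLS} becomes unnecessary.

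\textbf{Second gap: the $q=2$ feasible bound.} This is the more substantive problem: the $\ell_2$ bound on the feasible coefficients does not follow from ``inverting $\QQ^\top$''. Write $\tilde\DDelta^\sharp=n^{-1/2}\RR_n\QQ^\top(\tilde\ttheta-\ttheta^*)$, so that $\tilde\pphi^{\rm b}-\pphi^{\rm b*}=n^{-1/2}\tilde\DDelta^\sharp_{\mathcal C_{\rm b}}-\AA^\top(\tilde\pphi^{\rm a}-\pphi^{\rm a*})$.

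Assumption \ref{assu:cointegration} controls only the rows of $\AA$. For example, $p_{\rm b}=1$ with $\AA$ a column of ones (a one-factor case of Example \ref{exp:factor}) is allowed. Moreover, $\tilde\pphi^{\rm a}-\pphi^{\rm a*}$ is not sparse. Splitting over the $O(s)$ large coordinates $\mathcal A$ and the rest, the best available bound is therefore $\sqrt s\,\|\tilde\pphi^{\rm a}-\pphi^{\rm a*}\|_2+O(s\lambda)$. With the RE-based bound $\|\tilde\pphi^{\rm a}-\pphi^{\rm a*}\|_2\le\|\tilde\DDelta^\sharp\|_2\lesssim s^{3/2}(\log p)^{2+1/r}/\sqrt n$, this overshoots the target by $\sqrt s$. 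There is no slack to ``absorb'' it, because the RE argument already attains the rotated $\ell_2$ rate exactly.

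The missing idea is the paper's sharper estimate on the stationary block:
\begin{itemize}
\item Under the cone, $\|\tilde\DDelta^\sharp_{\mathcal M_z\cup\mathcal C_{\rm a}}\|_2^2\le C\,\tilde\DDelta^{\sharp\top}\hat\SSigma^\sharp\tilde\DDelta^\sharp$. The stationary--LUR cross block and $\hat\SSigma^{(\zz^\sharp)}-\SSigma^{(\zz^\sharp)}$ are negligible here because $s^2(\log p)^{2+1/(2r)}/\sqrt n\to0$.
\item The basic inequality gives $\tilde\DDelta^{\sharp\top}\hat\SSigma^\sharp\tilde\DDelta^\sharp\lesssim s^2(\log p)^{3+2/r}/n$.
\end{itemize}
Together these yield $\|\tilde\pphi^{\rm a}-\pphi^{\rm a*}\|_2\lesssim s(\log p)^{3/2+1/r}/\sqrt n$, which is what makes the extra $\sqrt s$ affordable.

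\textbf{Smaller corrections.}
\begin{itemize}
\item $\hat\sigma_j\asymp\sqrt n$ does not hold uniformly in $j$. Lemma \ref{lem:sd bound} gives only $\sqrt{n/\log p}\lesssim\hat\sigma_j\lesssim\sqrt{n\log p}$, which costs a factor $\sqrt{\log p}$ in the rotated remainder.
\item In general $\mathbb E(u_{n+1}\mid\mathcal F_n)\neq0$, since $u_t$ is a serially dependent linear process. The prediction cross term therefore does not vanish; it is $O_p(1)\cdot\|\RR_n^{-1}\ww_n^\sharp\|_\infty\|\RR_n\QQ^\top(\tilde\ttheta-\ttheta^*)\|_1=o_p(1)$.
\item Under sub-exponential tails, $\|\RR_n^{-1}\ww_n^\sharp\|_\infty=O_p(\log p/\sqrt n)$, not $O_p(\sqrt{\log p}/\sqrt n)$.
\end{itemize}
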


Theorem \ref{thm: global} provides a theoretical guarantee for the global optimizer of the constrained problem. The global minimizer of the constrained optimization problem (\ref{eq: constrained regularized estimation}) enjoys both the parameter estimation consistency and OOS prediction consistency, with the same convergence rates as the oracle OLS estimator in Proposition \ref{prop: OLS}. This result builds the foundation for using the constrained FCPs for OOS prediction in high-dimensional cointegrated regressions.

Intuitively, the upper bound of the total penalty level in (\ref{eq: constrained regularized estimation}) avoids potential overfitting caused by a large number of small active coefficients in the estimator, thereby safeguarding consistency of the global optimizer. Heuristic discussions of the technical proofs are provided in Remark \ref{rem: RE} below.

To characterize variable selection consistency, the next section establishes the oracle properties of a local minimizer of the vanilla FCP (\ref{eq: regularized estimation}) and the global minimizer of the constrained FCP (\ref{eq: constrained regularized estimation}) under an additional beta-min condition.

\subsection{Oracle Properties}\label{subsec: oracle}
We impose the following condition on the minimum gap of the nonzero
coefficients from zero. Define $\mathcal{M}_x$ as the index set of the LUR regressors $\xx_t$. 

\begin{assumption}\label{assu:beta-min}
 $\left(\sqrt{n}\min_{j\in\mathcal{S}\cap\mathcal{M}_x}|\theta^*_j|\right)\wedge  \left(\min_{j\in\mathcal{S}\backslash\mathcal{M}_x}|\theta^*_j|\right)
 \gg s^{3/2}(\log p)^{3+1/r} / \sqrt{n}.$
\end{assumption}

This assumption requires the nonzero coefficients to be sufficiently large and distinguishable from zero, which is necessary for consistent variable selection. For nonzero coefficients of the active LUR regressors that are not cointegrated, the minimum gap from zero has an order larger than $s^{3/2}(\log p)^{3+1/r}/n$. In contrast, for the nonzero coefficients of the active regressors that are stationary or cointegrated, the minimum gap from zero has an order larger than $s^{3/2}(\log p)^{3+1/r}/\sqrt{n}$. This is reasonable because the LUR regressors are much more persistent than the stationary regressors as well as the cointegrating residuals, and their coefficient estimates enjoy super-consistency at the rate of $1/n$.
Unlike the standard beta-min condition for i.i.d.~data, the lower bound here involves the sparsity level $s$, because the RE of high-dimensional LURs is not a constant but dependent on $s$; see Proposition 3 in \citetalias{mei2024lasso}.

The following theorem establishes the oracle property of FCPs under cointegration.

\begin{theorem}\label{thm: oracle} Suppose that the penalty function $J_\lambda(\cdot)$ satisfies Condition \ref{cond:penalty}, and Assumptions \ref{assu:tail}--\ref{assu:beta-min} hold.
  \begin{enumerate}
  \item[(a)] Let $\mathcal{A}_n(\lambda)$ be the set of local minimizers of the criterion function $\mathcal{L}_\lambda(\ttheta)$ in \eqref{eq: regularized estimation}. Then there exists a sufficiently large absolute constant $C_1$ such that if we set $\lambda = C_1\cdot s(\log p)^{5/2+1/r}/\sqrt{n}$, then $\Pr\{\hat\ttheta^{\rm or} \in \mathcal{A}_n(\lambda)\} \to 1$. 
    \item[(b)] With $\lambda$ specified in Part (a), there exists an absolute constant $C_2$ such that when $\mu = C_2s\lambda^2$, the global minimizer $\tilde\ttheta$ in \eqref{eq: constrained regularized estimation} has the oracle property
   $\Pr\{\tilde\ttheta = \hat\ttheta^{\rm or}\} \to 1$.
   \end{enumerate}
\end{theorem}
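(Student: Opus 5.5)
For part (a) I will use the standard sufficient conditions for a strict local minimizer of a folded-concave penalized least squares problem, in the spirit of \citet{fan2011nonconcave}, checked at $\hat\ttheta^{\rm or}$. Write $\hat\sigma_j$ for the standardizers. Three things must hold with probability approaching one.
\begin{enumerate}
\item[(i)] Flatness on the support: $\min_{j\in\mathcal{S}}\hat\sigma_j|\hat\theta^{\rm or}_j|>b\lambda$. Then $J_\lambda$ is flat at every active coordinate, so the penalty contributes no gradient and no curvature there. The restricted problem on $\mathcal{S}$ near $\hat\ttheta^{\rm or}_{\mathcal{S}}$ is then plain least squares, which $\hat\ttheta^{\rm or}_{\mathcal{S}}$ minimizes. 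Positive definiteness of the oracle Gram matrix follows from the eigenvalue bounds behind Proposition~\ref{prop: OLS}.
\item[(ii)] Subgradient condition off the support: $\max_{j\in\mathcal{S}^c}\bigl|\tfrac{2}{n}\sum_t w_{j,t-1}(y_t-\ww_{\mathcal{S},t-1}^\top\hat\ttheta^{\rm or}_{\mathcal{S}})\bigr|<a_1\lambda\hat\sigma_j$.
\item[(iii)] Local argument: (ii) implies that perturbing the inactive coordinates increases $\mathcal{L}_\lambda$ to first order, since the penalty slope $J_\lambda'(0^+)\ge a_1\lambda$ dominates the score.
\end{enumerate}

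For (i) I treat the two regressor types separately. For stationary and cointegrated $j$, $\hat\sigma_j\stackrel{\mathrm{p}}{\succcurlyeq}1$. For LUR $j$, $\hat\sigma_j\stackrel{\mathrm{p}}{\asymp}\sqrt n$, up to logs, uniformly; I take this from \citetalias{mei2024lasso}'s uniform bounds on sample variances. For the cointegrated block, $\hat\sigma_j$ is also of order $\sqrt n$ because $\xx^{\rm a}_t$ loads on $\xx^{\rm b}_t$ through rows of $\AA$ bounded away from zero (Assumption~\ref{assu:cointegration}(c)).

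Combining Assumption~\ref{assu:beta-min} with the sup-norm bound from \eqref{eq: OLS2} (taking $q=2$ to dominate $\ell_\infty$) gives $|\hat\theta^{\rm or}_j|\ge|\theta^*_j|/2$. Hence $\hat\sigma_j|\hat\theta^{\rm or}_j|\gg s^{3/2}(\log p)^{3+1/r}/\sqrt n$ in both cases, using the $\sqrt n$ scale for LURs. This exceeds $b\lambda=bC_1s(\log p)^{5/2+1/r}/\sqrt n$, since $s^{3/2}\log^{1/2}p\gg s$. I will need to double-check the exponent bookkeeping here, and that the beta-min rate of Assumption~\ref{assu:beta-min} is indeed designed for exactly this comparison.

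For (ii), decompose the residual as $u_t-\ww_{\mathcal{S},t-1}^\top(\hat\ttheta^{\rm or}_{\mathcal{S}}-\ttheta^*_{\mathcal{S}})$.
\begin{itemize}
\item The pure-noise term $\max_j|n^{-1}\sum_t w_{j,t-1}u_t|/\hat\sigma_j\stackrel{\mathrm{p}}{\preccurlyeq}(\log p)^{1+1/r}/\sqrt n$ is the deviation bound of \citetalias{mei2024lasso} and \citetalias{gao2026lasso}. It extends to cointegrated columns via the rotation $\QQ$, because $\ww_t=\QQ\ww^\sharp_t$ and each row of $\QQ$ has bounded $\ell_1$ norm.
\item The estimation-error term equals $\tfrac1n\sum_t w_{j,t-1}\ww^{\sharp\top}_{\mathcal{S}',t-1}\QQ^\top(\hat\ttheta^{\rm or}-\ttheta^*)$. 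I bound it by Hölder as $\max_{j,k}|\tfrac1n\sum_t w_{j,t-1}w^\sharp_{k,t-1}|\RR_{n,kk}^{-1}/\hat\sigma_j$ times $\|\RR_n\QQ^\top(\hat\ttheta^{\rm or}-\ttheta^*)\|_1$.
\item The first factor should be $O_p((\log p)^{1/2}/\sqrt n)$ after standardization, treating the stationary/LUR cross moments as in \citetalias{mei2024lasso}. The second factor is $s^2(\log p)^{2+1/r}$ by Proposition~\ref{prop: OLS}.
\end{itemize}
The product is roughly $s^2(\log p)^{5/2+1/r}/\sqrt n$. This is a factor of $s$ too large relative to $\lambda$, so I would instead use $q=2$ with Cauchy–Schwarz over the $O(s)$ nonzero coordinates, exploiting Assumption~\ref{assu:cointegration}(b). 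That gives $s\cdot s^{3/2}/\sqrt s$, which aligns with $\lambda\propto s(\log p)^{5/2+1/r}/\sqrt n$ for large $C_1$. This matching of rates is where I expect the main obstacle: uniform cross-moment bounds between inactive columns and the rotated active block, including mixed stationary–LUR products, must be sharp enough in $s$.

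For part (b), note that $\hat\ttheta^{\rm or}$ is feasible: its penalty equals $\sum_{j\in\mathcal{S}}J_\lambda(\cdot)\le s\cdot\sup J_\lambda\le s a_2b\lambda^2\le\mu$ once $C_2\ge a_2b$. By Theorem~\ref{thm: global}, and since the $\lambda$ of part (a) still exceeds the threshold used there, $\tilde\ttheta$ lies with probability approaching one in the shrinking neighbourhood given by that error bound. Inside this neighbourhood, the beta-min gap (i) forces $\hat\sigma_j|\tilde\theta_j|>b\lambda$ on $\mathcal{S}$, so the criterion restricted there equals SSR plus a constant. Off $\mathcal{S}$, the local analysis of (ii)–(iii) applies uniformly over the neighbourhood, which uses concavity and $J_\lambda'\ge a_1\lambda$ on $(0,a\lambda]$. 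Hence $\hat\ttheta^{\rm or}$ is the unique minimizer of $\mathcal{L}_\lambda$ over the neighbourhood, so $\tilde\ttheta=\hat\ttheta^{\rm or}$.

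The delicate point is showing that any $\tilde\ttheta$ with nonzero inactive coordinates in the neighbourhood does strictly worse. I would handle it with a first-order expansion along the segment from the projection $(\tilde\ttheta_{\mathcal{S}},\mathbf 0)$ to $\tilde\ttheta$. Along this segment the penalty gains at least $a_1\lambda$ times the standardized $\ell_1$ length, while the score changes by at most half that, using the same uniform bounds as in (ii).
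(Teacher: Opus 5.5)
Your part (a) has the same skeleton as the paper's proof: flatness of $J_\lambda$ on $\mathcal S$, a strict subgradient inequality on $\mathcal S^c$, positive definiteness of the oracle Gram matrix, then a local expansion. Step (i) is fine. The gap is in step (ii).

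\textbf{Part (a), step (ii).} Your coordinatewise Hölder bound runs through $\|\RR_n\QQ^\top(\hat\ttheta^{\rm or}-\ttheta^*)\|_1$, which is of order $s^2(\log p)^{2+1/r}$, so the score bound is of order $s^2\cdot\mathrm{polylog}/\sqrt n$. The proposed repair does not help. Cauchy--Schwarz over the $O(s)$ coordinates replaces the $\ell_1$ norm by $\sqrt{|\mathcal T|}$ times the $\ell_2$ norm, which is again of order $\sqrt s\cdot s^{3/2}=s^2$. Your own expression $s\cdot s^{3/2}/\sqrt s$ also equals $s^2$, not $s$. The loss is structural: any route through the $\ell_1$ or $\ell_2$ norm of the rotated error pays the inverse restricted eigenvalue $s\log p$ in full.

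The missing idea is to keep the error in prediction norm and apply Cauchy--Schwarz in the time dimension.
\begin{itemize}
\item Write $S_j(\hat\ttheta^{\rm or})=n^{-1}\ww_{j,\cdot}^\top\MM_{\mathcal S}\uu$. The estimation-error part is then $n^{-1}|\ww_{j,\cdot}^\top\PP_{\mathcal S}\uu|\le n^{-1}\|\ww_{j,\cdot}\|_2\|\PP_{\mathcal S}\uu\|_2$.
\item Lemma~\ref{lem:sd bound} gives $\|\ww_{j,\cdot}\|_2/(n\hat\sigma_j)=O_p(\log p/\sqrt n)$.
\item Since the column space of $\WW_{\mathcal S}$ lies inside that of $\WW^\sharp_{\mathcal T}$, you get $\|\PP_{\mathcal S}\uu\|_2^2\le|\mathcal T|\,\|\RR_{n,\mathcal T}^{-1}\WW_{\mathcal T}^{\sharp\top}\uu\|_\infty^2\,\|(\hat\SSigma^\sharp_{\mathcal T,\mathcal T})^{-1}\|_2$.
\item The inverse restricted eigenvalue now enters only under a square root. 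Hence $\|\PP_{\mathcal S}\uu\|_2=O_p(s(\log p)^{3/2+1/(2r)})$ and the standardized score is $O_p(s(\log p)^{5/2+1/(2r)}/\sqrt n)$, below $\tfrac12a_1\lambda$.
\end{itemize}

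\textbf{Part (b), the neighbourhood.} Theorem~\ref{thm: global} is proved for $\lambda$ of order $(\log p)^{1+1/r}/\sqrt n$, so it does not transfer as stated. Rerun its argument at the part-(a) value $\lambda$ of order $s(\log p)^{5/2+1/r}/\sqrt n$. Then the $\mu$-term $s^2\log p\cdot\lambda^2$ dominates the $\ell_2$ bound. The neighbourhood radius in the stationary and cointegrated coefficients becomes of order $s^2(\log p)^{3+1/r}/\sqrt n$. Assumption~\ref{assu:beta-min} only guarantees a gap $\gg s^{3/2}(\log p)^{3+1/r}/\sqrt n$, so you cannot deduce $\hat\sigma_j|\tilde\theta_j|>b\lambda$ on $\mathcal S$.

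The paper's Step (1) re-derives the bound instead. In the basic inequality, the penalty difference on $\mathcal S$ is nonzero only on $\mathcal K=\{j\in\mathcal S:\hat\sigma_j|\tilde\theta_j|<b\lambda\}$. The beta-min condition gives $|\mathcal K|\le 4\|\HH\DDelta\|_2^2/m_n^2$. So this term is absorbed into the quadratic form rather than bounded by $2\mu$, which yields the rate $s^{3/2}(\log p)^{2+1/r}/\sqrt n$.

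\textbf{Part (b), inactive coordinates.} Even in the correct neighbourhood, inactive coordinates can have $\hat\sigma_j|\tilde\theta_j|>a\lambda$; for the nonstationary ones $\hat\sigma_j$ is of order $\sqrt n$ up to logs. There Condition~\ref{cond:penalty} gives no lower bound on the slope, and $J_\lambda$ is flat past $b\lambda$. So your claim that the penalty gains at least $a_1\lambda$ per standardized unit along the segment is false.

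This is where the constraint does real work in the paper.
\begin{itemize}
\item The set $\mathcal B=\{j\in\mathcal S^c:\hat\sigma_j|\tilde\theta_j|>a\lambda\}$ satisfies $|\mathcal B|\le\mu/(aa_1\lambda^2)=O(s)$.
\item Each $j\in\mathcal B$ pays penalty at least $aa_1\lambda^2$.
\item Compare $\tilde\ttheta$ with the OLS fit on $\mathcal A=\mathcal S\cup\mathcal B$. The gain in squared residuals is at most $n^{-1}\uu^\top\PP_{\mathcal A}\uu=O_p(s^2(\log p)^{3+1/r}/n)\ll\lambda^2$.
\item Your linear-penalty argument is then applied only to $\mathcal A^c$, where every coordinate lies in $[0,a\lambda]$, with the score evaluated at the OLS fit on $\mathcal A$ rather than at $\hat\ttheta^{\rm or}$.
\end{itemize}
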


Theorem \ref{thm: oracle}(a) shows that the oracle estimator is one of the local minimizers of the unconstrained optimization problem (\ref{eq: regularized estimation}) with FCPs, and therefore distinguishable from other candidate parameters in a neighborhood of the true coefficients. Nevertheless, in practical use it is challenging to determine which local minimizer corresponds to the oracle estimator.
In contrast, Theorem \ref{thm: oracle}(b) guarantees that the global minimizer of the constrained FCP (\ref{eq: constrained regularized estimation}) coincides with the oracle estimator with high probability, thereby providing a more practical solution.
To ensure the oracle property, Theorem \ref{thm: oracle} requires a larger order of $\lambda$ than that in Theorem \ref{thm: global} to distinguish the nonzero coefficients from zero. We adopt data-driven tuning parameter selection for practical implementation; see Section \ref{sec: simulations} for details.

The following remark explains why it is challenging to establish the consistency of the global minimizer of the vanilla FCPs (\ref{eq: regularized estimation}), and how  the additional constraint on the penalty level helps with the technical proofs to verify the consistency of the global minimizer.

\begin{remark}\label{rem: RE}
In the high-dimensional setting with $p>n$, even for i.i.d.\ data, \citet{fan2011nonconcave} note that establishing consistency of the global minimizer under folded-concave penalties (FCPs) is challenging because the estimation error for the inactive regressors may be poorly controlled. Since SCAD and MCP are uniformly bounded at order $O(\lambda^2)$, an estimator with $\hat s=O(\lambda^{-2})$ nonzero coefficients can still incur only $O(1)$ total penalty that may be dominated by the improvement in the mean squared error, increasing the risk of overfitting. This also prevents a direct use of the RE condition, which relies on the cone condition
\begin{equation}\label{eq: inequality RE}
\|\hat\ttheta_{\mathcal{S}^{c}}\|_1
\leq
3\|\hat\ttheta_{\mathcal{S}}-\ttheta^*_{\mathcal{S}}\|_1.
\end{equation}
Such a condition automatically holds for LASSO but does not generally follow under FCPs.
Therefore, the additional constraint in (\ref{eq: constrained regularized estimation}) is essential in bounding the total penalty, which in turn controls the estimation error for the inactive regressors and validates a condition parallel to (\ref{eq: inequality RE}). This allows us to apply the RE condition for LURs in \citetalias{gao2026lasso} and establish consistency of the constrained global minimizer.
\end{remark}

To summarize, our theoretical statements clarify that FCPs can remedy the inconsistency of LASSO under cointegration. The additional constraint on the penalty level further guarantees the consistency and oracle property of the global minimizer of the penalized loss function under regularity conditions. The Monte Carlo simulations in the next section will provide supportive evidence for the theoretical predictions.

\section{Monte Carlo Simulations}\label{sec: simulations}
In this section, we compare the finite-sample performance of LASSO, MCP, SCAD, and the constrained versions MCP-C and SCAD-C. We evaluate estimation accuracy and variable selection under various DGPs that incorporate different types of persistence and cointegration structures. To save space,  in the main text we only display simulations of a DGP with the triangular representation following (\ref{eq: triangular}). Additional simulation designs of stochastic common trends and pairwise cointegration are relegated to the supplementary Section \ref{sec:add-simul-results}.

\subsection{Design}

We generate data from (\ref{eq: feasible}), with $n\in\{200,300,500\}$ and $p\in\{100,300,500\}$. The error term $u_t \overset{\rm i.i.d}{\sim}  \mathcal{N}(0,1)$. The regressors always include $p_z$ stationary regressors $\zz_t \overset{\rm i.i.d}{\sim} \mathcal{N}(\boldsymbol{0}_{p_z},\bfI_{p_z})$.
The other $(p-p_z)$ regressors are nonstationary, either with or without cointegration.  

There are $p_x$ LUR regressors satisfying $\xx_t = \rho_n\xx_{t-1} + \ee_t$, where $\ee_t\overset{\rm i.i.d}{\sim} \mathcal{N}(\bm{0}_{p_x},\bfI_{p_x})$, and the AR coefficient $\rho_n = 1$ or $1-1/n$ to characterize unity or local to unity. These two settings apply to all persistent AR processes in this section. For the cointegrated regressors $\xx_t^{\rm ab}$, we specify the triangular representation in (\ref{eq: triangular}) with $p_{\rm a}= 2$ and $p_{\rm b} = p_{\rm ab} - 2$ accordingly, where the cointegrating residual $\vv_t^{\rm a}\overset{\rm i.i.d}{\sim}  \mathcal{N}(\boldsymbol{0}_{p_{\rm a}},\bfI_{p_{\rm a}})$, the LURs $\xx_t^{\rm b} = \rho_n\xx_{t-1}^{\rm b} + \ee_t^{\rm b}$ with $\ee_t^{\rm b}\overset{\rm i.i.d}{\sim} \mathcal{N}(\bm{0}_{p_{\mathrm{b}}},\bfI_{p_{\rm b}})$. We set the cointegrating matrix as $\AA = (\AA_1,\AA_2)^{\top}$  with the sparse vectors $\AA_1=(-0.5,-0.5,{\bf 0}_{p_{\rm b}-2}^\top)^\top$ and $\AA_2=(0,0,1,{\bf 0}_{p_{\rm b}-3}^\top)^\top$. The true slope coefficients, as in (\ref{eq:equivalent1}), are $\ggamma^*=(1,1,\boldsymbol{0}_{p_z-2}^\top)^\top$, $\pphi^{\rm a*}=(1,0)^\top$, $\pphi^{\rm b*}=(0.5,0.5,\boldsymbol{0}_{p_{\rm b}-2}^\top)^\top$, and $\bbeta^*=(5/\sqrt{n}, \boldsymbol{0}_{p_x-1}^\top)^\top$. The local-to-zero coefficient $5/\sqrt{n}$ in $\bbeta^*$ characterizes weak signals of nonstationary time series in economic and financial applications. The numbers of regressors are $p_x =p/4$, $p_{\rm ab} = 2+p/4$, and thus $p_z = p - p_x - p_{\rm ab}$.

For the FCPs, we set the constant $a=3.7$ for SCAD and $a=3$ for MCP, following prevalent practice \citep{fan2001variable,zhang2010nearly}. Unlike the illustrative simulations in Section \ref{subsec:illustrative} based on a fixed grid over all candidate values of $\lambda$ in (\ref{eq: regularized estimation}), in this section we select $\lambda$ by 10-fold contiguous block cross-validation, which divides the time series into 10 sequentially consecutive blocks. To save space, we relegate the details of data-driven tuning parameter selection to the supplementary Section \ref{subsec: tuning selection}.

\subsection{Results}

\begin{figure}[htbp]
\centering
\includegraphics[width=1\textwidth]{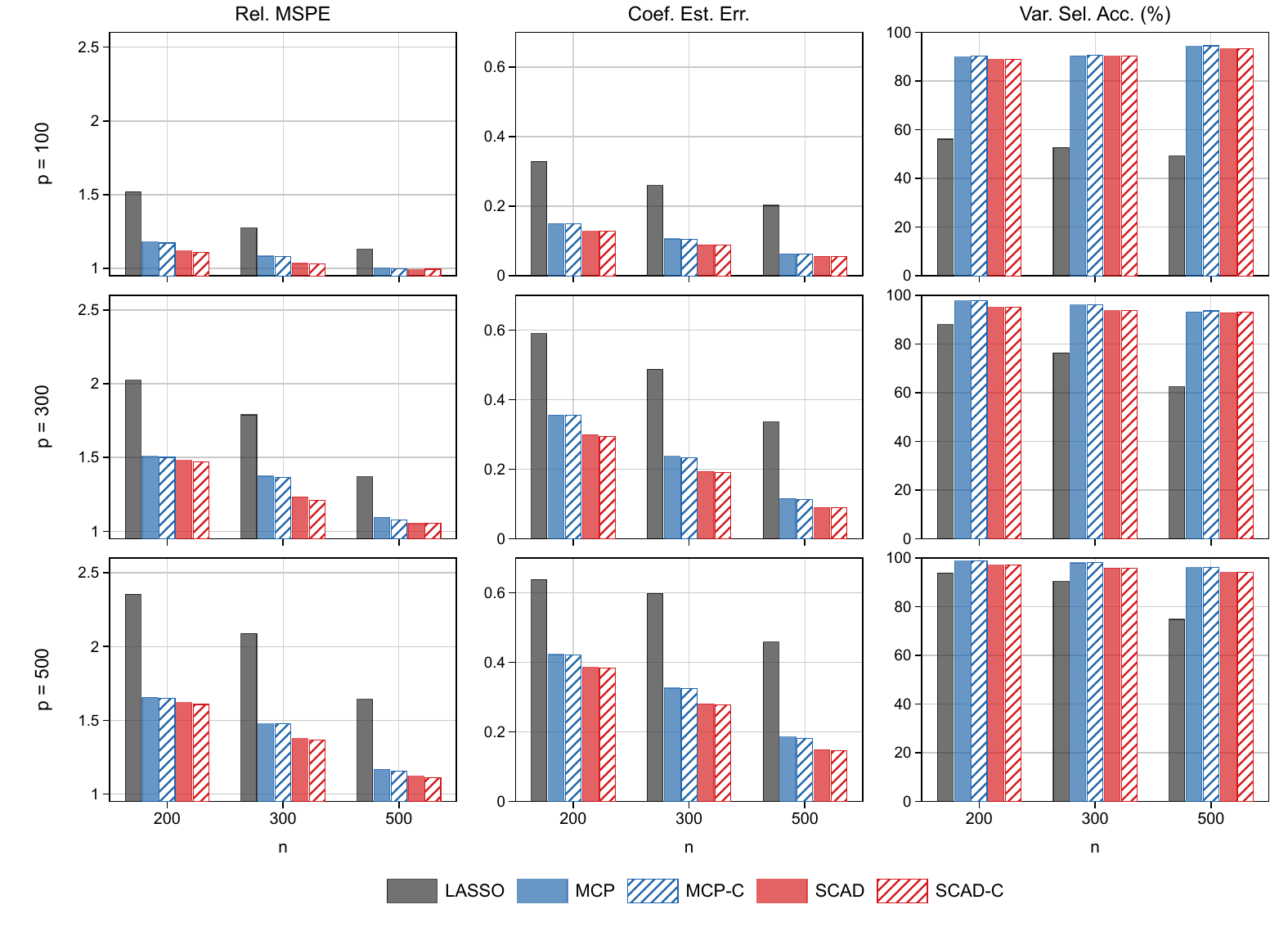}
\caption{Simulation results for unit roots ($\rho_n = 1$)}
{{\begin{flushleft}
\singlespacing\footnotesize \textit{Notes}: The $x$-axis represents the sample size $n$. The left column \texttt{Rel.MSPE} exhibits the OOS mean squared prediction error (MSPE) relative to the oracle OLS, the middle column \texttt{Coef.Est.Err.}~shows the relative $\ell_2$-norm of the coefficient estimation error, and the right column \texttt{Var.Sel.Acc.}~displays the variable selection accuracy.
\end{flushleft}}}
\label{fig:dgp1_ur}
\end{figure}

To save space,  in the main text we report the results for unit roots with $\rho_n = 1$ only. Those for LUR $\rho_n = 1 - 1/n$ are similar and relegated to the supplementary Section \ref{sec:add-simul-results}.

We first evaluate OOS prediction performance, which is the essential task of predictive regressions. We benchmark the regularization methods against the oracle OLS estimator defined in \eqref{eq:oracle-ols}. The performance indicator is the one-step-ahead mean squared prediction error (MSPE) relative to the oracle OLS estimator, defined as $\sum_{m=1}^M(\hat y_{n+1}^{(m)} - y_{n+1}^{(m)})^2/\sum_{m=1}^M( \hat{y}_{n+1}^{\text{or}(m)} - y_{n+1}^{(m)})^2$, where for the $m$-th replication $y^{(m)}_{n+1}$ is the OOS true value of the outcome,  $\hat y^{(m)}_{n+1}$ is the predicted value from a regularization method, and $\hat{y}^{\text{or}(m)}_{n+1}$ is the predicted value from the oracle OLS, respectively. The Monte Carlo is repeated $M = 1000$ times.

The left column \texttt{Rel.MSPE} of Figure \ref{fig:dgp1_ur} reports the MSPEs relative to oracle OLS. Overall, the prediction error decreases as the sample size $n$ increases, while it grows as the number of regressors $p$ increases.  The MSPE of LASSO is obviously larger than those of the regularization methods with FCPs. When the sample size gets big ($n=500$), the relative MSPEs by FCPs are close to one. These results indicate that FCPs are approaching the predictive performance of the oracle OLS, which is not achievable by LASSO.
SCAD-C and MCP-C produce slightly smaller prediction errors than their respective unconstrained counterparts in almost all settings. These findings align with the theoretical results in Section \ref{sec: theory}.

We then examine the quality of the estimated coefficients. We use the relative $\ell_2$-norm of the coefficient estimation error, defined as $\|\hat\ttheta - \ttheta^*\|_2/\|\ttheta^*\|_2$, to measure the accuracy of the estimated coefficients. The variable selection accuracy is measured by the proportion of correctly classified regressors, defined as $p^{-1}\sum_{j=1}^p \textbf{1}\{ {\rm sgn}(|\hat\theta_{j}|) = {\rm sgn}(|\theta_j^*|)\}$, where $\hat\theta_j$ denotes a generic estimator of $\theta_j^*$.

The middle column \texttt{Coef.Est.Err.}~of Figure \ref{fig:dgp1_ur} reports the relative $\ell_2$ error of all five methods, and the right column \texttt{Var.Sel.Acc.}~reports the average variable selection accuracy. The results are largely consistent with the prediction performance: LASSO has the worst parameter estimation and variable selection performance, while the FCPs can achieve substantial improvements with small parameter estimation errors and high variable selection accuracy close to one. These simulation results highlight the importance of using FCPs in high-dimensional predictive regressions.

\section{Predicting the Euro Area Unemployment Rate}\label{sec: empirical}

\defcitealias{smeekes2020unit}{Smeekes and Wijler, 2020}

Applications of machine learning to forecasting real-world time series have been widely documented in the literature
\citep{smeekes2018macroeconomic,medeiros2021forecasting,giannone2021economic}.
While researchers used to transform the data into stationary time series before implementing estimation and prediction, more recently they employ machine learning for nonstationary data and provide empirical evidence for the benefits of using raw data, in particular when the predicted outcome is highly persistent \citepalias{smeekes2020unit,mei2024lasso}. Our empirical application examines the prediction of the euro area unemployment rate using high-dimensional macroeconomic time series.

\subsection{Data}

We use the database \texttt{EA-MD-QD},\footnote{The dataset is available at \url{https://www.barigozzi.eu/EA.html}.}
a euro area (EA) macroeconomic dataset contributed by \citet{barigozzi2026large}.
It contains monthly and quarterly
time series data of aggregate euro area (EA) macroeconomic variables, along with country-level data for the ten largest member
countries in EA, including  Austria, Belgium,
Germany, Greece, Spain, France, Ireland, Italy,
the Netherlands, and Portugal. Our target variable is the monthly euro area total
unemployment rate. We use the monthly data with seasonal adjustment as provided by \citet{barigozzi2026large}, covering the period from January 2000 to September 2025.

\begin{figure}[htbp]
\centering
\includegraphics[width=0.65\textwidth]{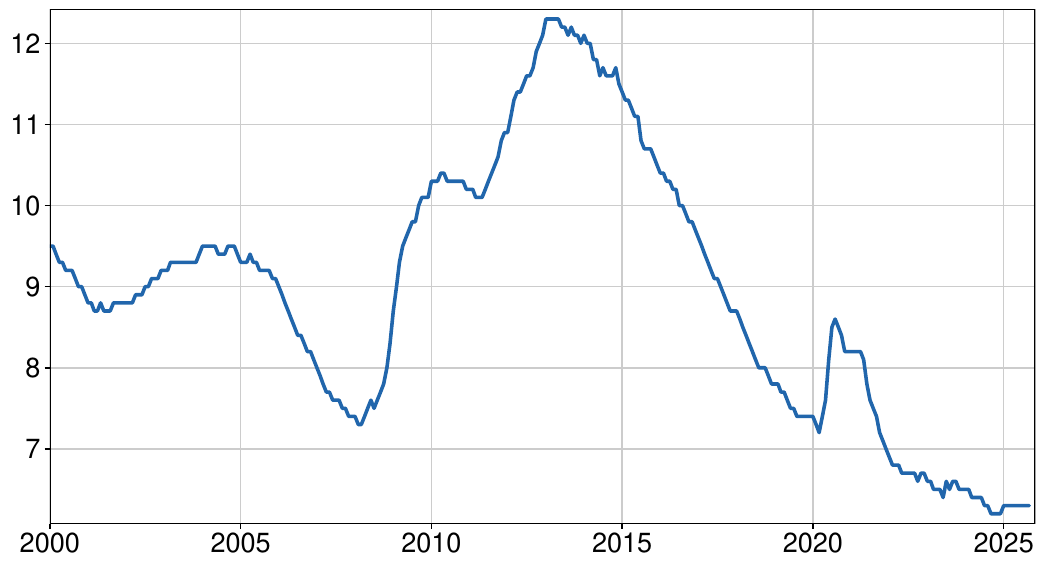}
\caption{Euro Area Unemployment Rate}
\label{fig: unrate}
\end{figure}

Figure \ref{fig: unrate} plots the time series of the euro area unemployment rate. It exhibits a clear downward trend from 2000 to 2008, followed by a sharp increase during the global financial crisis in 2009. The unemployment rate then peaks in 2013 and gradually declines until 2019, before rising again during the COVID-19 pandemic in 2020. The AR(1) coefficient estimated by OLS is 1.0022, which is virtually unity, and the augmented Dickey-Fuller (ADF) test cannot reject the null hypothesis of a unit root ($p$-value 0.7043). These observations indicate that the euro area unemployment rate is highly persistent.

We consider two sets of regressors. The first specification contains $p=38$ regressors of EA-level aggregate data: the lagged EA unemployment rate and 37 other variables with full observations. The second specification contains $p=368$ regressors, including both EA aggregate indicators and country-level variables. This augmentation is motivated by recent advancements in machine learning for macroeconomic prediction \citep{cui2026asset,cui2026forecasting}, showing that using individual level data rather than aggregate data can improve the prediction of essential macroeconomic time series, such as inflation and GDP.

\begin{figure}[htbp]
\centering
\includegraphics[width=0.86\textwidth]{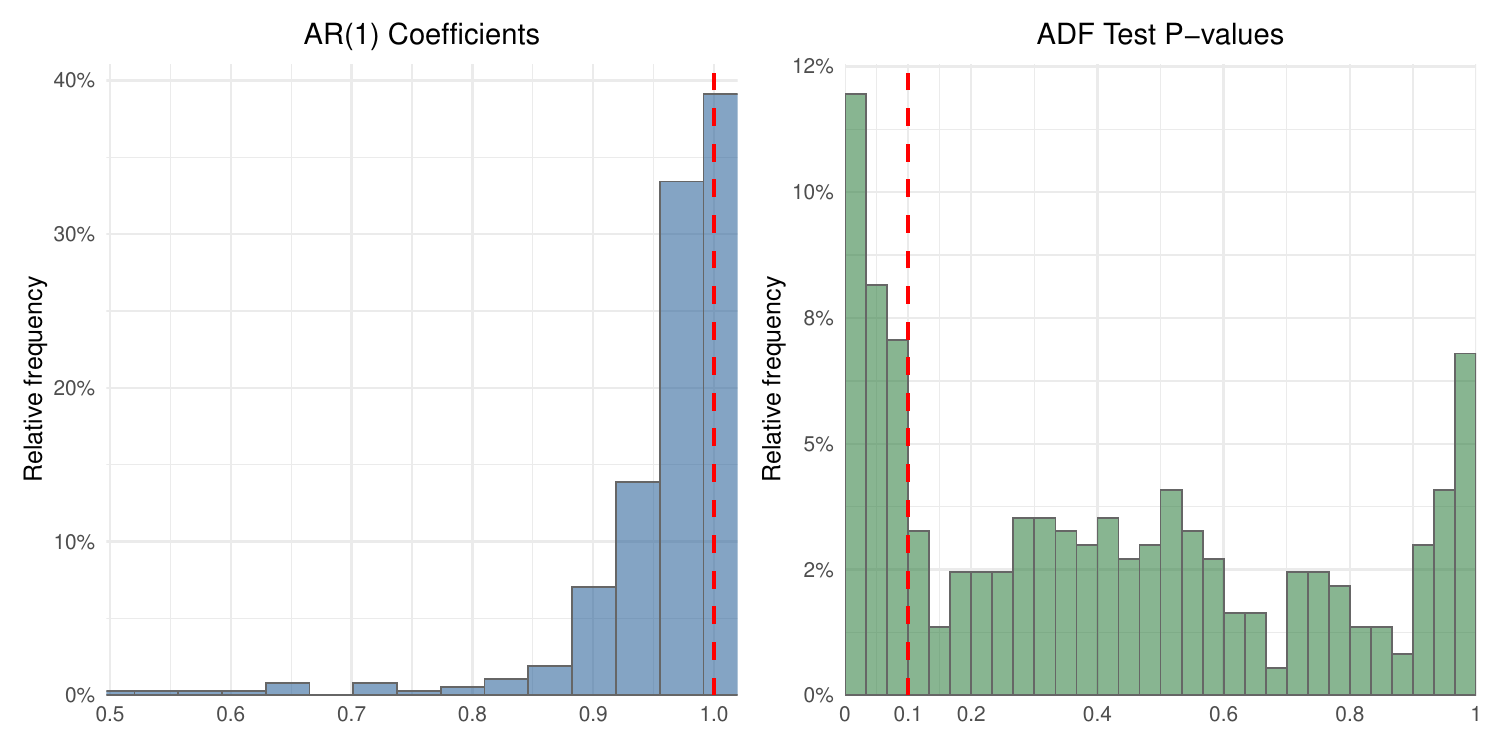}
\caption{Histograms of AR(1) Coefficients and ADF Test $p$-values of Predictors} 
\label{fig: ar1_adf}
\end{figure}

Many macroeconomic series are highly persistent. Figure \ref{fig: ar1_adf} exhibits histograms of their AR(1) coefficients estimated by OLS and their ADF test $p$-values. The left panel shows that a majority of the AR(1) coefficients are close to unity. Specifically, $91.6\%$ of the series have an AR(1) coefficient of at least $0.90$. The right panel shows that approximately $72.8\%$ of the series have ADF test $p$-values above $10\%$, indicating that the unit root null cannot be rejected for most series.

 Cointegration is another important feature of the data.
According to \citet{barigozzi2026large}, the 37 external EA-only predictors can be classified by their macroeconomic content into financial markets and interest rates, industrial production and turnover, prices, and confidence. The time series within each category are closely related and may be cointegrated. Cointegration may also exist across categories of nonstationary time series. We perform the cointegration test for large VAR models proposed by \citet{bykhovskaya2024asymptotics} on each category of EA-only predictors.
Table \ref{tab:cointegration_test} reports the test statistics and $p$-values. The null hypothesis of no cointegration is rejected at the 1\% level for industrial production and turnover, prices, and confidence indicators, which shows statistical evidence of cointegration.  

\begin{table}[t]
    \centering
    \caption{Cointegration Tests for the EA-only Predictor Set}
    \label{tab:cointegration_test}
    \begin{tabular}{lccc}
        \toprule
        Variables & $N$ & Test Statistic & $p$-value \\
        \midrule
        Financial Markets and Interest Rates & 3  & $-1.65$ & 0.63 \\
        Industrial Production and Turnover   & 14 & 23.37 & $\leq 0.01$ \\
        Prices                                & 12 & 59.52 & $\leq 0.01$ \\
        Confidence Indicators                 & 8  & 47.92 & $\leq 0.01$ \\
        \bottomrule
    \end{tabular}
{{\begin{flushleft}
\singlespacing\footnotesize \textit{Notes}: The cointegration test is implemented using the R package \texttt{Largevars} \citep{bykhovskaya2025largevarsrpackagetesting}, with lag order $k=1$ and maximum cointegration rank $r=1$. $N$ is the number of variables included in each test. A statistic above the 1\% critical value of 2.02 rejects the null hypothesis of no cointegration.
\end{flushleft}}}
\end{table}

\subsection{Variable Selection and Prediction}
We conduct $h$-month ahead rolling-window predictions, with $h = 1, 2, 3$. We consider three testing periods: 2010--2025, 2015--2025, and
2020--2025. For 2010--2025, the window length is set to 10 years with $n=120$ months, since only 10 years of data are available before 2010. We consider $n=120$ and 180 for the second testing period 2015--2025, and $n=120,180,240$ for the third period 2020--2025. The implementation of LASSO, SCAD, MCP, SCAD-C and MCP-C follows Section \ref{sec: simulations}.

\begin{table}[!t]
\centering
\caption{Average Numbers of Selected Variables by Category with EA-only Data}
\label{tab:emp-variable-selection}
\setlength{\tabcolsep}{2.5pt}
\begin{tabular}{lccccc}
\toprule
\cmidrule{1-6}
Method & \makecell{Finance\\(3)} & \makecell{Industry\\(14)}
& \makecell{Price\\(12)} & \makecell{Confidence\\(8)}
& \makecell{Total\\(37)} \\
\midrule
LASSO   & 0.037 & 0.048 & 0.041 & 2.051 & 2.18 \\
SCAD    & 1.554 & 2.891 & 1.212 & 2.276 & 7.93 \\
SCAD-C  & 1.580 & 2.935 & 1.179 & 2.299 & 7.99 \\
MCP     & 1.560 & 2.735 & 1.152 & 2.186 & 7.63 \\
MCP-C   & 1.594 & 2.758 & 1.156 & 2.227 & 7.74 \\
\bottomrule
\end{tabular}
{{\begin{flushleft}
\singlespacing\footnotesize \textit{Notes}: The entries are the average numbers of selected predictors across all windows and forecasting scenarios. Parentheses represent the number of candidate external predictors in each category. ``Finance'' refers to financial markets and interest rates; ``Industry'' is industrial production and turnover; ``Price'' denotes price indicators; ``Confidence'' represents confidence indicators.
\end{flushleft}}}
\end{table}

We first examine variable selection. Table \ref{tab:emp-variable-selection} reports the average number of external predictors selected from each economic category with EA-only data, using the same classification as Table \ref{tab:cointegration_test}. All five methods select sparse models. LASSO selects only 2.18 predictors on average, 2.05 of which are confidence indicators. In contrast, the four FCP methods select between 7.63 and 7.99 predictors on average, spanning a broader range of categories including financial markets, industrial activity, confidence indicators, and prices. Given the cointegration documented in Table \ref{tab:cointegration_test},
the observation that LASSO selects a sparser model is consistent with its tendency to overly shrink the coefficients of cointegrated predictors.

\begin{table}[!t]
\caption{Forecast RMSEs for Euro Area Unemployment Rate}
\label{tab:emp-level-rmse}
\small
\setlength{\tabcolsep}{1.5pt}
\onehalfspacing
\begin{center}
\begin{tabular}{cc|c|ccccc|ccccc}
\hline\hline
\multirow{2}{*}{$n$}
& \multirow{2}{*}{$h$}
& \multicolumn{1}{c|}{\multirow{2}{*}{\makecell[c]{Best\\BM}}}
& \multicolumn{5}{c|}{EA data only}
& \multicolumn{5}{c}{EA and country level data}
\\ \cline{4-13}
& & \multicolumn{1}{c|}{}
& LASSO & SCAD & SCAD-C & MCP & \multicolumn{1}{c|}{MCP-C}
& LASSO & SCAD & SCAD-C & MCP & MCP-C
\\
\hline
\multicolumn{13}{c}{\textbf{Testing Sample: 2010--2025}} \\
\hline
\multirow{3}{*}{120} & 1 & 0.0969 & 0.1049 & 0.0870 & \textit{\textbf{0.0862}} & 0.0864 & \textit{\textbf{0.0862}} & 0.1064 & 0.0899 & \textit{0.0884} & 0.0898 & 0.0900 \\
 & 2 & 0.1657 & 0.1777 & 0.1395 & \textit{0.1394} & 0.1435 & 0.1398 & 0.1817 & 0.1355 & \textit{\textbf{0.1332}} & 0.1356 & 0.1371 \\
 & 3 & 0.2494 & 0.2490 & 0.2210 & 0.2151 & \textit{0.2144} & 0.2178 & 0.2561 & 0.2069 & \textit{\textbf{0.2067}} & 0.2141 & 0.2136 \\
\hline
\noalign{\vskip 5pt}
\multicolumn{13}{c}{\textbf{Testing Sample: 2015--2025}} \\
\hline
\multirow{3}{*}{120} & 1 & 0.0997 & 0.1073 & 0.0870 & 0.0860 & 0.0848 & \textit{\textbf{0.0845}} & 0.1093 & 0.0885 & \textit{0.0861} & 0.0870 & 0.0875 \\
 & 2 & 0.1793 & 0.1898 & \textit{0.1459} & 0.1463 & 0.1526 & 0.1489 & 0.1947 & 0.1365 & \textit{\textbf{0.1333}} & 0.1372 & 0.1391 \\
 & 3 & 0.2622 & 0.2673 & 0.2334 & 0.2267 & \textit{0.2249} & 0.2285 & 0.2739 & \textit{\textbf{0.2198}} & 0.2199 & 0.2243 & 0.2213 \\
\hline
\multirow{3}{*}{180} & 1 & 0.0959 & 0.1003 & \textit{0.0823} & 0.0835 & 0.0877 & 0.0869 & 0.1058 & 0.0792 & 0.0819 & 0.0789 & \textit{\textbf{0.0782}} \\
 & 2 & 0.1783 & 0.1736 & 0.1432 & 0.1418 & \textit{0.1322} & 0.1328 & 0.1855 & 0.1245 & 0.1255 & 0.1191 & \textit{\textbf{0.1153}} \\
 & 3 & 0.2537 & 0.2443 & 0.2075 & 0.2145 & 0.2035 & \textit{0.1979} & 0.2610 & 0.1836 & \textit{\textbf{0.1817}} & 0.1920 & 0.1913 \\
\hline
\noalign{\vskip 5pt}
\multicolumn{13}{c}{\textbf{Testing Sample: 2020--2025}} \\
\hline
\multirow{3}{*}{120} & 1 & 0.1161 & 0.1195 & 0.0989 & 0.0974 & 0.0981 & \textit{\textbf{0.0961}} & 0.1198 & 0.1046 & \textit{0.1003} & 0.1027 & 0.1031 \\
 & 2 & 0.2071 & 0.2091 & \textit{0.1638} & 0.1644 & 0.1775 & 0.1707 & 0.2093 & 0.1642 & \textit{\textbf{0.1587}} & 0.1655 & 0.1666 \\
 & 3 & 0.2842 & 0.2874 & 0.2780 & 0.2780 & \textit{\textbf{0.2741}} & 0.2787 & 0.2883 & 0.2796 & 0.2794 & 0.2829 & \textit{0.2782} \\
\hline
\multirow{3}{*}{180} & 1 & 0.1150 & 0.1139 & \textit{0.0929} & 0.0946 & 0.1014 & 0.1001 & 0.1182 & 0.0924 & 0.0957 & 0.0924 & \textit{\textbf{0.0912}} \\
 & 2 & 0.2092 & 0.2002 & 0.1625 & 0.1610 & 0.1451 & \textit{0.1424} & 0.2047 & 0.1502 & 0.1506 & 0.1408 & \textit{\textbf{0.1343}} \\
 & 3 & 0.2877 & 0.2814 & 0.2453 & 0.2560 & 0.2255 & \textit{\textbf{0.2176}} & 0.2845 & 0.2247 & \textit{0.2233} & 0.2336 & 0.2312 \\
\hline
\multirow{3}{*}{240} & 1 & 0.1115 & 0.1149 & 0.0900 & 0.0891 & \textit{\textbf{0.0869}} & 0.0884 & 0.1178 & 0.0950 & 0.0939 & 0.0922 & \textit{0.0891} \\
 & 2 & 0.2086 & 0.1967 & 0.1622 & 0.1646 & 0.1637 & \textit{0.1580} & 0.2049 & 0.1313 & \textit{\textbf{0.1307}} & 0.1429 & 0.1426 \\
 & 3 & 0.2868 & 0.2649 & 0.2405 & 0.2428 & 0.2380 & \textit{0.2378} & 0.2796 & \textit{\textbf{0.1919}} & 0.2083 & 0.2118 & 0.2050 \\
\hline
\hline
\end{tabular}
\end{center}
\footnotesize{
\textit{Notes:}
The column ``Best BM'' reports the lowest RMSE among the random walk (RW), random walk with drift (RWwD), and autoregression (AR) benchmarks for each scenario.
Italic numbers indicate the best performance among the five regularization methods using the same predictor set.
Bold numbers highlight the lowest RMSE in each row, including the benchmark.}
\end{table}

We then turn to the OOS prediction. We compare these regularization methods to three benchmarks: (1) the random walk (RW), which uses the prediction
$\widehat y_{t+h\mid t}^{\mathrm{RW}}=y_t$; (2) the random walk with drift (RWwD)
$\widehat y_{t+h\mid t}^{\mathrm{RWwD}}
=y_t+\{h/n\}(y_t-y_{\rm init})$, where $y_{\rm init}$ is the outcome of the initial period within the rolling window; (3) the autoregression (AR) with the lag order selected by the Bayesian information criterion (BIC). For each scenario we report the result of the best-performing benchmark method with the lowest prediction error, which we call the best benchmark (Best BM).

Table \ref{tab:emp-level-rmse} reports the OOS root mean squared error
(RMSE). In all scenarios, the FCPs outperform the best benchmark and also dominate LASSO. In addition, in 14 out of the 18 scenarios, the lowest RMSE is achieved by a constrained FCP method. These results echo the theoretical findings and simulation evidence that the FCPs are robust to cointegration under which LASSO can be inconsistent, and the additional constraint on the penalty level can further guarantee the consistency of the global optimizers of the criterion functions regularized by FCPs.

Comparison of the results using the predictors of EA level only to those using both the aggregate and country level predictors is informative. Comparing the lowest RMSE achieved under each predictor set, we find that adding country level predictors improves the prediction accuracy in 12 out of the 18 scenarios. The improvement is evident when the horizon $h=2$ and 3, and when the sample size $n$ is relatively large. These results are consistent with the recent findings of the value of individual-level data beyond aggregate data for forecasting \citep{cui2026asset,cui2026forecasting}.
Our FCP-based estimation allows the use of cointegrated data, opening the door for the use of many individual-level time series in predictive exercises.

\section{Conclusion}
While LASSO is inconsistent in high-dimensional predictive regressions with cointegration, we recommend using the FCPs in this environment with a constraint to bound the total penalty level. We establish coefficient estimation consistency and the oracle property for the recommended procedure
when stationary, nonstationary, and cointegrated time series are present.
Our numerical results confirm that FCPs can be a robust alternative to LASSO with regressors of diverse dynamic behaviors.

\onehalfspacing \bibliographystyle{chicagoa}
\small
\bibliography{ref}

\doublespacing
\normalsize

\newpage{}
 
\begin{appendices}
\begin{center}
{\LARGE Supplementary Materials for ``Cointegrated Regressors: Where Folded Concave Penalties Go Beyond LASSO''}
\end{center}

\begin{center}
{\onehalfspacing\large Hongqi Chen$^a$, Chengwang Liao$^b$, Ziwei Mei$^c$, Zhentao Shi$^b$ \\

\vspace{1em}

$^a$Hunan University\\ 
$^b$The Chinese University of Hong Kong\\ 
$^c$University of Macau\par}
\end{center}

\setcounter{footnote}{0}
\setcounter{table}{0}
\setcounter{figure}{0}
\setcounter{equation}{0}
\renewcommand{\thefootnote}{\thesection.\arabic{footnote}}
\renewcommand{\theequation}{\thesection.\arabic{equation}}
\renewcommand{\thefigure}{\thesection.\arabic{figure}}
\renewcommand{\thetable}{\thesection.\arabic{table}}
\renewcommand{\thetheorem}{\thesection.\arabic{theorem}}
\renewcommand{\thelemma}{\thesection.\arabic{lemma}}
\renewcommand{\theremark}{\thesection.\arabic{remark}} 
\renewcommand{\theproposition}{\thesection.\arabic{proposition}}

\defcitealias{mei2024lasso}{MS24}
\defcitealias{gao2026lasso}{GLMS26}

Section \ref{sec:AdditionalDiscussions} provides additional discussions on the theoretical results presented in the main text.
Section \ref{sec:Proofs} includes the technical proofs of all theoretical statements. Section \ref{sec:add-simul} collects additional simulation
results omitted from the main text.

\section{Additional Discussions\protect\label{sec:AdditionalDiscussions}}
\subsection{Inconsistency of LASSO for General Cointegrated Regressors}
\label{subsec:InconsistencyLASSO}

This section formulates the inconsistency of LASSO in general high-dimensional cointegrated predictive regression beyond the bivariate toy model in Section \ref{subsec: inconsistency} of the main text.

In view of the rotation from (\ref{eq:equivalent1}) to (\ref{eq:equivalent2}),
let
$\breve\ttheta = (\ggamma^{*\top}, \breve\pphi^{\top}, \bbeta^{*\top})^\top$ with $\breve\pphi = (\breve\pphi^{\rm a\top}, \breve\pphi^{\rm b\top})^\top$ be an
\emph{alternative coefficient}, such that
\begin{equation}\label{eq: Aphi1+phi2}
  \AA^\top\breve\pphi^{\rm a} + \breve\pphi^{\rm b} = \AA^\top\pphi^{\rm a*} + \pphi^{\rm b*}.
\end{equation}
The average of squared residuals under $\breve\ttheta$ is
\begin{align*}
    \dfrac{1}{n}\sum_{t=1}^n \left(y_t - \zz_{t-1}^\top\ggamma^* - \vv_{t-1}^{\rm a\top}\breve\pphi^{\rm a} - \xx_{t-1}^{\rm b\top}(\AA^\top\breve\pphi^{\rm a}+\breve\pphi^{\rm b}) - \xx_{t-1}^\top\bbeta^*\right)^2 =
     \dfrac{1}{n}\sum_{t=1}^n \left(u_t - \vv_{t-1}^{\rm a\top}(\breve\pphi^{\rm a} - \pphi^{\rm a*})\right)^2,
\end{align*}
which is $O_p(1)$ if $\|\breve\pphi^{\rm a} - \pphi^{\rm a*}\|_1$ is bounded, since $\vv_t^{\rm a}$ and $u_t$ are stationary. Therefore, if the LASSO penalty $J_\lambda(\hat\sigma_j\theta_j) = \lambda\hat\sigma_j|\theta_j|$ satisfies
\begin{equation}
  \lambda \sum_{j=1}^p   \hat\sigma_j|\theta^*_j| - \lambda \sum_{j=1}^p  \hat\sigma_j|\breve\theta_j|\to\infty,\label{eq: penalty difference}
\end{equation}
the criterion function in (\ref{eq: regularized estimation}) satisfies $\mathcal{L}_\lambda(\breve\ttheta) < \mathcal{L}_\lambda(\ttheta^*)$ w.p.a.1, which implies that LASSO prefers the alternative coefficient $\breve\ttheta$ to the true $\ttheta^*$, suggesting its inconsistency. The following Proposition \ref{prop: inconsistent LASSO} shows that coefficients $\breve\pphi^{\rm a}$ and $\breve\pphi^{\rm b}$ satisfying (\ref{eq: Aphi1+phi2}) and (\ref{eq: penalty difference}) exist in a wide class of models, 
indicating that LASSO is generally inconsistent.

For two vectors $\aa$ and $\bb$ of the same dimension, we write $|\aa| \preceq |\bb|$ if $|a_j| \leq |b_j|$ elementwise for all $j$.

\begin{proposition} \label{prop: inconsistent LASSO}
Suppose that Assumptions \ref{assu:tail}--\ref{assu:asym_n} hold, and $J_\lambda(t) = \lambda|t|$ is the LASSO penalty, where the tuning parameter $\lambda$ satisfies
$\sqrt{n}\lambda \to \infty$ as $n\to\infty$. In addition, suppose that $\pphi^*\neq\boldsymbol{0}$, $\max_j|\phi_j^*| = O(1)$, $\min_{j:\phi_j^*\neq 0} |\phi_j^*| \geq c_\phi$ for some absolute constant $c_\phi$, and the sparsity index $s$ is fixed as $n\to\infty$.

  (a) Assume that there exists an alternative coefficient $\breve{\pphi}\in\mathbb{R}^{p_{\rm ab}}$  satisfying \eqref{eq: Aphi1+phi2} and $|\breve{\pphi}| \preceq  c_1|\pphi^{*}|$ for some absolute constant $c_1\in(0,1)$. Then the alternative coefficient $\breve\ttheta = (\ggamma^{*\top}, \breve{\pphi}^{\top}, \bbeta^{*\top})^\top$ satisfies
  \begin{equation}\label{eq: LASSO inconsistency}
    \mathcal{L}_\lambda(\ttheta^*) - \mathcal{L}_\lambda(\breve\ttheta) \stackrel{\mathrm{p}}{\to} \infty.
  \end{equation}
In this case, $\|\hat\ttheta - \ttheta^*\|_\infty \stackrel{\mathrm{p}}{\not\to} 0$.

(b) Let $\odot$ denote the Hadamard (elementwise) product.
If $\pphi^{\rm b*} \odot  (\AA^\top \pphi^{\rm a*}) \preceq \mathbf 0 $ and $(1-c_2)|\pphi^{\rm b*}| \preceq |\AA^\top \pphi^{\rm a*}| \preceq (1+c_2)|\pphi^{\rm b*}|$ with $c_2\in(0,1)$, then
there exists $\breve{\pphi}$ satisfying the conditions in Part (a).

\end{proposition}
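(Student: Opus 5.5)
Part (a) is a comparison of the two criterion values, and part (b) is an explicit construction.

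\textbf{Part (a): the loss difference.} Write $\mathcal{L}_\lambda(\ttheta^*) - \mathcal{L}_\lambda(\breve\ttheta)$ as a difference of sums of squared residuals (SSR) plus a difference of penalties. Since $\breve\ttheta$ agrees with $\ttheta^*$ outside the $\pphi$ block and satisfies \eqref{eq: Aphi1+phi2}, the rotation to \eqref{eq:equivalent2} gives
\[
\frac1n\sum_t\bigl(u_t - \vv_{t-1}^{\rm a\top}(\breve\pphi^{\rm a}-\pphi^{\rm a*})\bigr)^2 - \frac1n\sum_t u_t^2 .
\]
This term is $O_p(1)$. Indeed, $s$ is fixed and $|\breve\pphi|\preceq c_1|\pphi^*|$, so $\breve\pphi^{\rm a}-\pphi^{\rm a*}$ has bounded support size and bounded entries. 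Moreover $\vv^{\rm a}_t$ and $u_t$ are stationary with sub-exponential tails by Assumptions \ref{assu:tail}--\ref{assu:covMat}. Hence $\max_j n^{-1}\sum_t (v^{\rm a}_{j,t-1})^2$ over the finitely many relevant $j$ is $O_p(1)$, and Cauchy--Schwarz handles the cross term.

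\textbf{Part (a): the penalty difference.} The $\ggamma$ and $\bbeta$ blocks cancel, leaving
\[
\lambda\sum_{j\in\text{ab}}\hat\sigma_j\bigl(|\phi^*_j|-|\breve\phi_j|\bigr)\ \ge\ \lambda(1-c_1)\sum_{j:\phi_j^*\ne0}\hat\sigma_j|\phi^*_j|\ \ge\ \lambda(1-c_1)\,c_\phi\,\hat\sigma_{j_0}
\]
for any $j_0$ with $\phi^*_{j_0}\neq 0$; such a $j_0$ exists since $\pphi^*\ne\boldsymbol0$.

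The key step is to show $\hat\sigma_{j_0}\stackrel{\mathrm{p}}{\succcurlyeq}\sqrt n$ for every cointegrated regressor. For $j_0$ in the $\rm b$ block, $x^{\rm b}_{j_0,t}$ is a LUR. For $j_0$ in the $\rm a$ block, $x^{\rm a}_{j_0,t}=\AA_{j_0,\cdot}\xx^{\rm b}_t+v^{\rm a}_{j_0,t}$ with $\|\AA_{j_0,\cdot}\|_2$ bounded below by Assumption \ref{assu:cointegration}(c). The functional CLT for the finitely many relevant coordinates then gives
\[
n^{-1}\hat\sigma_{j_0}^2 \Rightarrow \int_0^1 \bar J^2,
\]
where $\bar J$ is a demeaned Ornstein--Uhlenbeck functional with nondegenerate variance (Assumption \ref{assu:covMat}(a) together with $|\psi_j(1)|>c_\psi$). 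This limit is a.s.\ positive, so $\hat\sigma_{j_0}\ge \delta\sqrt n$ with probability at least $1-\epsilon$ for small $\delta$. The stationary part contributes only $O_p(1)$ relative to this.

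Combining with $\sqrt n\lambda\to\infty$, the penalty difference diverges while the SSR difference is $O_p(1)$, which yields \eqref{eq: LASSO inconsistency}.

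\textbf{Part (a): inconsistency of $\hat\ttheta$.} Suppose $\|\hat\ttheta-\ttheta^*\|_\infty\stackrel{\mathrm{p}}{\to}0$ along a subsequence. Because $s$ is fixed and $\min_{j:\phi_j^*\ne0}|\phi_j^*|\ge c_\phi$, we would have $|\hat\phi_{j_0}|\ge c_\phi/2$ w.p.a.1. Then
\[
\mathcal{L}_\lambda(\hat\ttheta)\ \ge\ \lambda\hat\sigma_{j_0}c_\phi/2\ \to\infty,
\]
whereas $\mathcal{L}_\lambda(\hat\ttheta)\le \mathcal{L}_\lambda(\breve\ttheta)$. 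It remains to control $\mathcal{L}_\lambda(\breve\ttheta)$ itself. If it is $O_p(\lambda\sqrt n)$ with a constant strictly below $c_\phi/2$, we obtain a contradiction, but this is not automatic. A cleaner route is to compare $\mathcal{L}_\lambda(\hat\ttheta)$ with the penalty of $\ttheta^*$. By the same rate argument applied to the coordinate $j_0$, the penalty of $\hat\ttheta$ is
\[
\lambda\sum_j\hat\sigma_j|\hat\theta_j| \;\ge\; \lambda\sum_{j\in\mathcal S}\hat\sigma_j|\theta^*_j|\,(1-o_p(1)),
\]
since $\mathcal S$ is finite and $\hat\sigma_j|\hat\theta_j-\theta_j^*|\le \hat\sigma_j\,o_p(1)$. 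The SSR at $\hat\ttheta$ is nonnegative. Hence
\[
\mathcal{L}_\lambda(\hat\ttheta)\ \ge\ \mathcal{L}_\lambda(\ttheta^*) - O_p(1) - o_p(\lambda\sqrt n).
\]
Using part (a)'s gap, which is of order $\lambda\sqrt n$ (indeed at least $\lambda(1-c_1)c_\phi\hat\sigma_{j_0}$), this exceeds $\mathcal{L}_\lambda(\breve\ttheta)$ w.p.a.1, contradicting minimality. Making the $o_p(\lambda\sqrt n)$ term explicit relative to the $(1-c_1)$ gap is the main delicate point. It requires $\max_{j\in\mathcal S}\hat\sigma_j/\sqrt n=O_p(1)$ (an upper bound from the FCLT), so that $o_p(1)\hat\sigma_j\lambda=o_p(\lambda\sqrt n)$.

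\textbf{Part (b): construction.} Set $\breve\pphi^{\rm a}=\kappa\pphi^{\rm a*}$ and $\breve\pphi^{\rm b}=\pphi^{\rm b*}+(1-\kappa)\AA^\top\pphi^{\rm a*}$ for $\kappa\in(0,1)$; this satisfies \eqref{eq: Aphi1+phi2} automatically.

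\emph{The $\rm a$ block.} Here $|\breve\pphi^{\rm a}|=\kappa|\pphi^{\rm a*}|$.

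\emph{The $\rm b$ block.} Coordinatewise, the sign condition $\pphi^{\rm b*}\odot(\AA^\top\pphi^{\rm a*})\preceq\boldsymbol0$ gives
\[
|\breve\phi^{\rm b}_j|=\bigl||\phi^{\rm b*}_j|-(1-\kappa)|(\AA^\top\pphi^{\rm a*})_j|\bigr|.
\]
Using $(1-c_2)|\phi^{\rm b*}_j|\le|(\AA^\top\pphi^{\rm a*})_j|\le(1+c_2)|\phi^{\rm b*}_j|$, this is at most $\max\{1-(1-\kappa)(1-c_2),\,(1-\kappa)(1+c_2)-1\}\,|\phi^{\rm b*}_j|$. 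When $\phi^{\rm b*}_j=0$ the two-sided bound forces $(\AA^\top\pphi^{\rm a*})_j=0$, so the coordinate stays zero.

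\emph{Choice of $\kappa$.} Choose $\kappa$ small, e.g.\ $\kappa=c_2/2$. Then both factors lie in $(-1,1)$: the first equals $\kappa+c_2-\kappa c_2<1$, and the second has absolute value below $1$ for small $\kappa$. Taking $c_1=\max\{\kappa,\ 1-(1-\kappa)(1-c_2),\ |(1-\kappa)(1+c_2)-1|\}<1$ completes the proof.
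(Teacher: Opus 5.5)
Your proposal is correct and follows essentially the same route as the paper. You bound the SSR at $\breve\ttheta$ by $O_p(1)$ through the rotated model, lower-bound the penalty gap by $(1-c_1)\lambda\sum_{j}\hat\sigma_j|\phi_j^*|$ using the nondegenerate $\sqrt{n}$-order of $\hat\sigma_j$ for cointegrated regressors, and refute consistency by comparing $\mathcal{L}_\lambda(\hat\ttheta)$ with $\mathcal{L}_\lambda(\ttheta^*)$ up to an $o_p(\sqrt{n}\lambda)$ error. In part (b) you use the same construction $\breve\pphi^{\rm a}=\kappa\pphi^{\rm a*}$, $\breve\pphi^{\rm b}=\pphi^{\rm b*}+(1-\kappa)\AA^\top\pphi^{\rm a*}$; the only difference is cosmetic, since the paper fixes $\kappa=c_2(2-c_2)$ so that $|\pphi^{\rm b*}|-(1-\kappa)|\AA^\top\pphi^{\rm a*}|\succeq\mathbf{0}$ (giving $c_1=1-(1-c_2)^3$), whereas your two-sided bound works for any small $\kappa$.
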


Proposition \ref{prop: inconsistent LASSO} illustrates that inconsistency of LASSO is a generic phenomenon in high-dimensional predictive regressions with cointegration, when the underlying cointegrating relationship is unknown. The condition $\sqrt{n}\lambda\to\infty$ is a commonly used condition and necessary for the consistency of LURs $\xx_t$ and stationary regressors $\zz_t$. The lower bound on the nonzero coefficients in $\pphi^*$ and the fixed sparsity index $s$ simplify the technical proofs. If LASSO is inconsistent in this restrictive setting, it cannot go through in more general settings with diverging $s$.

 Part (a) shows that if there exists an alternative  vector $\breve\pphi$ satisfying (\ref{eq: Aphi1+phi2}) and $|\breve\pphi| \preceq c_1 |\pphi^{*}|$, the criterion function evaluated under any consistent estimator of the true $\ttheta^*$ is larger than that evaluated under any estimator converging to the alternative coefficient $\breve\ttheta$. Therefore, LASSO always prefers an estimator converging to the alternative coefficient to any consistent estimator of the truth, implying the inconsistency of LASSO.

 Part (b) shows that such an alternative vector $\breve\pphi$ exists when the two vectors $\pphi^{\rm b*}$ and $\AA^\top \pphi^{\rm a*}$ have opposite signs element-wise. A typical scenario is when $\AA^\top\pphi^{\rm a*} + \pphi^{\rm b*} = \mathbf{0}$, so that $\xx_t^{\rm b}$ in (\ref{eq:equivalent2}) has zero coefficients. In this case, the cointegrated regressors $\xx_t^{\rm a}$ and $\xx_t^{\rm b}$ are predictive of $y_t$ only through the cointegrating residual $\vv_t^{\rm a}$, under which the predictive relationship is effectively driven by stationary variation rather than by the common stochastic trend. Therefore, this scenario accommodates the situation where a stationary outcome is predictable using nonstationary regressors, and our theoretical result indicates that LASSO suffers serious drawbacks in this important setting. The bivariate model in Example \ref{ex: bivariate toy} is a special case of this scenario, where $\AA = 1$ and $\pphi^{\rm b*} = -\pphi^{\rm a*}$.  In addition,
 if LASSO is used to provide the initial estimator in \citet{zou2006adaptive}'s adaptive LASSO, the latter becomes inconsistent as well.

\subsection{Inequivalence of the Constrained and Unconstrained FCPs}\label{subsec: inequivalence}

This section discusses why the constrained and unconstrained formulations of FCPs are generally not equivalent, highlighting that the additional constraint makes a genuine difference to the vanilla FCP formulation.

Recall that we consider two penalized estimators in Section \ref{subsec: regularization}, including the vanilla penalized regression
\begin{equation}\label{eq: regularized estimation app}
\min_{\ttheta}  \mathcal{L}_\lambda(\ttheta),\text{ where }\mathcal{L}_\lambda(\ttheta)=\dfrac{1}{n}\sum_{t=1}^n(y_t - \ww_{t-1}^\top \ttheta)^2 + \sum_{j=1}^p J_\lambda(\hat\sigma_j\theta_j),
\end{equation} 
and the constrained regularization
\begin{equation}\label{eq: constrained regularized estimation app}
\min_{\ttheta} \mathcal{L}_\lambda (\ttheta)
    \quad \text{subject to } \sum_{j=1}^p J_\lambda(\hat\sigma_j\theta_j) \leq \mu.
\end{equation}
The solutions to \eqref{eq: regularized estimation app} and \eqref{eq: constrained regularized estimation app} are denoted by
$\hat\ttheta $ and $\tilde\ttheta $, respectively, as in the main text.

The two formulations are not equivalent in general. For LASSO, by the duality, a constrained solution can be reproduced by an unconstrained LASSO with an appropriately adjusted tuning parameter. For FCPs like SCAD and MCP, however, there is generally no guarantee of a value of the tuning parameter in the unconstrained problem that reproduces a given constrained solution. Details are available in the following discussions.
\subsubsection{LASSO} For LASSO with $J_\lambda(t) = \lambda |t|$,  both the objective and the feasible set of the constrained problem are convex. Moreover, because $\mu>0$, the point $\boldsymbol{\theta}=\boldsymbol{0}$ is an interior point of the feasible set. Let $\eta\geq0$ denote the Lagrange multiplier associated with the constraint. The Lagrangian function is
\begin{align*}
\mathcal{L}_\lambda(\ttheta,\eta)
&=
\dfrac{1}{n}\sum_{t=1}^n(y_t - \ww_{t-1}^\top \ttheta)^2
+\sum_{j=1}^p J_\lambda(\hat\sigma_j\theta_j)
+\eta\left\{
\sum_{j=1}^p J_\lambda(\hat\sigma_j\theta_j)-\mu
\right\}\\
&=
\dfrac{1}{n}\sum_{t=1}^n(y_t - \ww_{t-1}^\top \ttheta)^2
+(1+\eta)\lambda
\sum_{j=1}^p\widehat{\sigma}_j|\theta_j|
-\eta\mu.
\end{align*}
The term $-\eta\mu$ does not depend on $\boldsymbol{\theta}$. Therefore, the constrained solution also minimizes an unconstrained LASSO objective by re-defining a tuning parameter
$
\widetilde{\lambda}=(1+\eta)\lambda.
$
If the constraint is inactive, a zero Lagrange multiplier $\eta = 0$ gives $\widetilde{\lambda}=\lambda$. If the constraint is active, then typically $\eta>0$ and $\widetilde{\lambda}>\lambda$.


\subsubsection{FCPs}

For FCPs like SCAD and MCP, the preceding argument does not establish equivalence. First, the feasible set
$
\left\{
\ttheta:
\sum_{j=1}^p J_\lambda(\hat\sigma_j\theta_j) \leq\mu
\right\}
$
is generally nonconvex. Strong duality therefore may be invalid, and thus the KKT conditions do not necessarily characterize a global minimizer.

Second, even if a constrained solution satisfies the KKT conditions with a multiplier $\eta$, there is no guarantee that it corresponds to an unconstrained solution with some tuning parameter $\widetilde{\lambda}$. Specifically, the stationarity condition of the constrained problem involves the Lagrangian
\begin{align*}
\mathcal{L}_\lambda(\ttheta,\eta) =
\dfrac{1}{n}\sum_{t=1}^n(y_t - \ww_{t-1}^\top \ttheta)^2
+(1+\eta) \sum_{j=1}^p J_\lambda(\hat\sigma_j\theta_j)
-\eta\mu.
\end{align*}
For equivalence between the unconstrained and constrained optimization problems, we need a value of tuning parameter $\widetilde{\lambda}$ such that
\begin{equation}\label{eq: equivalence lambda}
  \sum_{j=1}^p J_{\widetilde{\lambda}}(\hat\sigma_j\theta_j)
=
(1+\eta)\sum_{j=1}^p J_\lambda(\hat\sigma_j\theta_j).
\end{equation}
This equality generally does not hold for all $\boldsymbol{\theta}$ when $\eta>0$.

We use MCP as an example for illustration. The derivative of MCP for $t>0$ is
$
J_\lambda'(t)
=
\left(\lambda-t/a \right)_+
$.
Suppose two distinct standardized active coefficients, $t_j\neq t_k$, lie in the concave region of MCP. For the condition (\ref{eq: equivalence lambda}) for equivalence to hold, we must have
\[
J_{\widetilde{\lambda}}'(t_\ell)
=
(1+\eta)J_{\lambda}'(t_\ell),
\qquad
\ell\in\{j,k\}.
\]
If both coefficients lie in the interior concave regions, this becomes
\[
\widetilde{\lambda}-\frac{t_\ell}{a}
=
(1+\eta)
\left(\lambda-\frac{t_\ell}{a}\right),
\qquad
\ell\in\{j,k\}.
\]
Subtracting the equations for $j$ and $k$ yields
$
-(t_j-t_k) =
-(1+\eta) (t_j-t_k).
$
Since $t_j\neq t_k$, this forces $\eta=0$. Consequently, when the constraint is active and $\eta>0$, (\ref{eq: equivalence lambda}) does not hold, and therefore a single value of $\widetilde{\lambda}$ generally cannot reproduce the constrained solution. The same issue arises for SCAD.

The solutions to the constrained and unconstrained regularization problems can coincide in some special cases.  First,
an unconstrained global minimizer $\hat\ttheta$ satisfying
   $ \sum_{j=1}^p J_\lambda(\hat\sigma_j\hat\theta_j)\leq\mu$
 is also a constrained minimizer.
Second, if the constraint is inactive, the constrained problem reduces locally to the unconstrained problem at the same $\lambda$. In general, there is no guarantee that the solutions will coincide, and thus the constrained problem cannot be simply viewed as reformulating the unconstrained problem. Instead, the additional constraint formulates a genuinely different optimization problem that restricts the parameter space to a bounded penalty that favors sparse solutions, and facilitates the technical proofs for characterizing the global minimizer.

\section{Proofs\protect\label{sec:Proofs}}  
\subsection{Notations}
In the proofs, we use $c$ and $C$, without superscripts or subscripts,
to denote generic positive constants that may vary from place to place.
The abbreviation ``w.p.a.1'' means ``with probability approaching one''.
For notational simplicity, in the proofs we assume $p\geq n^{\nu_{1}}$
for some absolute constant $\nu_{1}$, which is reasonable as we focus
on the high dimensional case with a larger $p$ relative to $n$. 
Identity-matrix subscripts may be omitted when there is no ambiguity with matrix dimensions.

Let $[p] = \mathcal{M}_z \cup \mathcal{C}_{\rm a} \cup \mathcal{C}_{\rm b} \cup \mathcal{M}_x$ be a disjoint union of the index sets of different types of regressors. Specifically, $\mathcal{M}_z$ indexes the stationary regressors, $\mathcal{C}_{\rm a}$ and $\mathcal{C}_{\rm b}$ index the cointegrated regressors $\xx_t^{\rm a}$ and $\xx_t^{\rm b}$, and $\mathcal{M}_x$ indexes the local unit root regressors. Define $\mathcal{C} = \mathcal{C}_{\rm a}\cup \mathcal{C}_{\rm b}$. For any $m$ by $n$ matrix $\CC_{m\times n} = (C_{i,j})_{i\in[m],j\in[n]}$ and any $\mathcal{A}\subset[m]$ and $\mathcal{B}\subset[n]$, let $\CC_{\mathcal{A},\mathcal{B}} = (C_{i,j})_{i\in\mathcal{A},j\in\mathcal{B}}$ denote the submatrix of $\CC$ with row indices in $\mathcal{A}$ and column indices in $\mathcal{B}$. For the same matrix $\CC$, define its induced $\ell_1$-norm as $\|\CC\|_1=\max_{j\in[n]}\sum_{i=1}^m|C_{i,j}|$, its spectral norm as $\|\CC\|_2=\sup_{\vv\in\mathbb{R}^n,\,\|\vv\|_2=1}\|\CC\vv\|_2$, and its maximum entry norm as $\|\CC\|_{\max}=\max_{i\in[m],j\in[n]}|C_{i,j}|$.

Define the outcome vector $\YY = (y_1, y_2, \ldots, y_n)^\top$ and the error term vector $\uu = (u_1, u_2, \ldots, u_n)^\top$.
For the regressors, let
$\zz_t^\sharp = (\zz_t^\top, \vv_{t}^{\rm a\top})^\top$ collect all stationary components in the infeasible regression, and $\xx_t^\sharp = (\xx_t^{\rm b\top},\xx_t^{\top})^\top$ collect all LUR components in the infeasible regression. Therefore,
$\ww_t^\sharp = (\zz_t^{\sharp\top}, \xx_t^{\sharp\top})^\top = \QQ^{-1}\ww_t$
with $\QQ$ defined in (\ref{eq: rotation}).
Let $\WW = (\ww_0, \ww_1, \ldots, \ww_{n-1})^\top$ and $\WW^\sharp = (\ww_0^\sharp, \ww_1^\sharp, \ldots, \ww_{n-1}^\sharp)^\top$ be the $n\times p$ matrices collecting the regressors $\ww_t$ in the feasible regression and $\ww_t^\sharp$ in the infeasible regression, respectively. For any  $\mathcal{A} \subseteq [p]$ with $|\mathcal{A}| < n$, define $\WW_{\mathcal{A}} = (\ww_{\mathcal{A},0},\ww_{\mathcal{A},1},\dots,\ww_{\mathcal{A},n-1})^\top$ and $\WW_{\mathcal{A}}^\sharp = (\ww_{\mathcal{A},0}^\sharp,\ww_{\mathcal{A},1}^\sharp,\dots,\ww_{\mathcal{A},n-1}^\sharp)^\top$.
Furthermore, define the projection matrices and residual matrices as  $\PP_{\mathcal{A}} = \WW_{\mathcal{A}}(\WW_{\mathcal{A}}^\top \WW_{\mathcal{A}})^{-1}\WW_{\mathcal{A}}^\top$
the projector to the linear space spanned by $\WW_{\mathcal{A}}$, and  $\MM_{\mathcal{A}} = \II_{n} - \PP_{\mathcal{A}}$.

\subsection{Technical Lemmas}
The first lemma shows the deviation bound (DB) condition for the cross-product of the regressors and the error term in the high-dimensional setting.
\begin{lemma}
\label{lem:DB} Under Assumptions \ref{assu:tail}-\ref{assu:asym_n},
we have
\begin{equation}
\|n^{-1}\sum_{t=1}^{n}\zz_{t-1}^\sharp u_{t}\|_{\infty}+\|n^{-3/2}\sum_{t=1}^{n}\xx_{t-1}^\sharp u_{t}\|_{\infty}+\|n^{-3/2}\sum_{t=1}^{n}\xx^{\sharp}_{t-1}\zz_{t-1}^{\sharp\top}\|_{\max}+\|n^{-3/2}\sum_{t=1}^{n}\xx^{\rm ab}_{t-1}u_{t}\|_{\infty}\lep\dfrac{(\log p)^{1+\frac{1}{2r}}}{\sqrt{n}},\label{eq:DB u}
\end{equation}
Therefore,
\begin{equation}\label{eq:DB u sharp}
\|n^{-1/2}\RR_n^{-1}\sum_{t=1}^{n}\ww_{t-1}^\sharp u_{t}\|_{\infty} \lep\dfrac{(\log p)^{1+\frac{1}{2r}}}{\sqrt{n}}.
\end{equation}
\end{lemma}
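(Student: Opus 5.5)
The plan is to bound each of the four terms in \eqref{eq:DB u} separately, by reducing each one to the deviation-bound results already available for pure stationary and local-unit-root regressors. These are in \citetalias{mei2024lasso} (unit roots) and \citetalias{gao2026lasso} (LURs), and both rely on the same Assumptions \ref{assu:tail}--\ref{assu:asym_n}. The key observation is that every component of $\zz_t^\sharp=(\zz_t^\top,\vv_t^{\rm a\top})^\top$ is an entry of the stationary vector $\vv_t=\PPhi\vareepsilon_t$ in \eqref{eq: v=Phiepsilon}. Likewise, every component of $\xx_t^\sharp=(\xx_t^{\rm b\top},\xx_t^\top)^\top$ is a scalar LUR process $x_{j,t}=(1+c_j/n)x_{j,t-1}+e_{j,t}$ driven by an entry $e_{j,t}$ of $\vv_t$, with $|c_j|=O(1)$ by Assumption \ref{assu:cointegration}(d). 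The infeasible design is therefore exactly the ``stationary plus LUR, no cointegration'' setting of those papers, and $p_z+p_{\rm a}+p_{\rm b}+p_x=p$.

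\textbf{First three terms.} For $n^{-1}\sum_t z^\sharp_{j,t-1}u_t$, Assumption \ref{assu:EZu} gives a mean of zero. Each summand is a product of two sub-exponential $\alpha$-mixing linear processes; the mixing comes from Assumption \ref{assu:alpha} together with \eqref{eq:density}, and the tails from \eqref{eq:innovtail} and Assumption \ref{assu:covMat}(b). Each summand is therefore sub-Weibull with a geometric-type mixing rate. A Bernstein inequality for $\alpha$-mixing sequences (Merlevède--Peligrad--Rio), followed by a union bound over $p$ indices, yields the rate $(\log p)^{1+1/(2r)}/\sqrt n$ with probability approaching one, using $p=O(n^\nu)$. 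For $n^{-3/2}\sum_t x^\sharp_{j,t-1}u_t$ and $n^{-3/2}\sum_t x^\sharp_{j,t-1}z^\sharp_{k,t-1}$, I would write $x_{j,t-1}=\sum_{s<t}(1+c_j/n)^{t-1-s}e_{j,s}+O_p(1)$. Since $(1+c_j/n)^{t}$ is bounded uniformly for $t\le n$, the argument of \citetalias{gao2026lasso} (Lemma on DB for LURs) applies verbatim. The quadratic form splits into a martingale-like part and a part absorbed by the long-run covariance. The latter has order at most $n$, which after dividing by $n^{3/2}$ is $O(n^{-1/2})$. The former is controlled by Hanson--Wright-type or blocking concentration, giving $(\log p)^{1+1/(2r)}/\sqrt n$ uniformly over at most $p^2$ pairs. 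I would cite those lemmas, checking only that the index set has size $O(p)$ and that the initial value $\|\ww_0\|_\infty=O_p(1)$ contributes $O_p(\sqrt{\log p}/n)$.

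\textbf{Fourth term.} Here $\xx^{\rm ab}_{t-1}=(\xx^{\rm a\top}_{t-1},\xx^{\rm b\top}_{t-1})^\top$. The $\rm b$-block is already covered. For the $\rm a$-block, $\xx^{\rm a}_{t-1}=\AA\xx^{\rm b}_{t-1}+\vv^{\rm a}_{t-1}$, so
\[
\Bigl\|n^{-3/2}\sum_t\xx^{\rm a}_{t-1}u_t\Bigr\|_\infty\le \max_i\|\AA_{i,\cdot}\|_1\,\Bigl\|n^{-3/2}\sum_t\xx^{\rm b}_{t-1}u_t\Bigr\|_\infty+n^{-1/2}\Bigl\|n^{-1}\sum_t\vv^{\rm a}_{t-1}u_t\Bigr\|_\infty .
\]
The first piece is bounded by Assumption \ref{assu:cointegration}(c) and the previous step, and the second is even smaller.

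\textbf{Consequence \eqref{eq:DB u sharp}.} This follows by noting that $n^{-1/2}\RR_n^{-1}$ multiplies the stationary rows by $n^{-1}$ and the LUR rows by $n^{-3/2}$. The left side of \eqref{eq:DB u sharp} is then the maximum of the first two terms of \eqref{eq:DB u}.

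The main obstacle is the LUR cross terms, where uniformity over $p$ (or $p^2$) must hold under serial dependence of $e_{j,t}$ with $u_t$ and $z_{k,t}$, including possible contemporaneous correlation through $\PPhi$. I would handle this by importing the corresponding lemma of \citetalias{gao2026lasso}, after verifying that its proof depends on the regressors only through the representation $\vv_t=\PPhi\vareepsilon_t$ and bounded $c_j$. If needed, the nonzero bias term $\mathbb{E}\sum_t x_{j,t-1}z_{k,t-1}$ is bounded by $O(n)$ using summability of autocovariances, which the normalization $n^{-3/2}$ absorbs.
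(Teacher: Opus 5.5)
Your proposal is correct and follows essentially the paper's proof. The paper likewise obtains the first three terms by invoking the deviation-bound argument of \citetalias{gao2026lasso} (its Lemma A.9) for the infeasible stationary-plus-LUR design, handles the fourth term with exactly your decomposition $\xx^{\rm a}_{t-1}=\AA\xx^{\rm b}_{t-1}+\vv^{\rm a}_{t-1}$ (with $\max_i\|\AA_{i,\cdot}\|_1=O(1)$ from Assumption \ref{assu:cointegration}(c)), and reads off \eqref{eq:DB u sharp} from the first two terms via the block structure of $\RR_n$. Your extra sketch of the underlying concentration steps (zero mean from Assumption \ref{assu:EZu}, the $O(n)$ bias absorbed by $n^{-3/2}$, the initial-value contribution) is consistent with the paper's one-line citation, just more detailed.
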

\begin{proof}[Proof of Lemma \ref{lem:DB}] By definition $\ww_t^\sharp = (\zz_t^{\sharp\top}, \xx_t^{\sharp\top})^\top$, and thus (\ref{eq:DB u sharp}) follows directly from the bound of $\|n^{-1}\sum_{t=1}^{n}\zz_{t-1}^\sharp u_{t}\|_{\infty}+\|n^{-3/2}\sum_{t=1}^{n}\xx_{t-1}^\sharp u_{t}\|_{\infty}$ given by (\ref{eq:DB u}). It thus suffices to prove (\ref{eq:DB u}).

The bound of $\|n^{-1}\sum_{t=1}^{n}\zz_{t-1}^\sharp u_{t}\|_{\infty}+\|n^{-3/2}\sum_{t=1}^{n}\xx_{t-1}^\sharp u_{t}\|_{\infty}+\|n^{-3/2}\sum_{t=1}^{n}\xx^{\sharp}_{t-1}\zz_{t-1}^{\sharp\top}\|_{\max}$ follows from the arguments for Lemma A.9 of \citetalias{gao2026lasso}. For $\|n^{-3/2}\sum_{t=1}^{n}\xx^{\rm ab}_{t-1}u_{t}\|_{\infty}$, note that 
\[
\begin{aligned}
\left\|n^{-3/2}\sum_{t=1}^n \xx_{t-1}^{\rm a}u_t\right\|_\infty
&\le
\max_j\|\AA_{j,\cdot}\|_1
\left\|n^{-3/2}\sum_{t=1}^n \xx_{t-1}^{\rm b}u_t\right\|_\infty 
 +
n^{-1/2}
\left\|n^{-1}\sum_{t=1}^n \vv_{t-1}^{\rm a}u_t\right\|_\infty,
\end{aligned}
\]
and $\xx_t^{\rm b}$ is a subvector of $\xx_t^\sharp$. The  bound of $\|n^{-3/2}\sum_{t=1}^n \xx_{t-1}^{\rm b}u_t\|_\infty$ thus follows from that of $\|n^{-1}\sum_{t=1}^{n}\zz_{t-1}^\sharp u_{t}\|_{\infty}+\|n^{-3/2}\sum_{t=1}^{n}\xx_{t-1}^\sharp u_{t}\|_{\infty}$.
\end{proof}

The next Lemma is the Gaussian approximation result for the sum of weakly dependent random variables, which is useful for deriving the asymptotic distribution of high-dimensional test statistics.

\begin{lemma}[\citetalias{mei2024lasso}'s Lemma B.4]
\label{lem:Gaussian Approximation} Under Assumptions \ref{assu:tail}--\ref{assu:asym_n},
there exist mutually independent standard Brownian motions $\{\mathcal{B}_{k}(t)\}$ with
independent increments $\mathcal{B}_{k}(t)-\mathcal{B}_{k}(s)\sim\mathcal{N}(0,t-s)$ for $t\geq s\geq0$ such that
\[
\sup_{k\in[p+1],t\in[n]}\left|\frac{1}{\sqrt{n}}\left(\sum_{s=0}^{t-1}\varepsilon_{k,s}-\mathcal{B}_{k}(t\cdot V_{k}^{*})\right)\right|=O_{p}\left(\dfrac{\log p}{\sqrt{n}}\right),
\]
where $V_{k}^{*}=\mathbb{E}\left[\sum_{d=-\infty}^{\infty}\varepsilon_{k,t}\varepsilon_{k,t-d}\right]$
is the long-run variance of $\{\varepsilon_{k,t}\}.$
\end{lemma}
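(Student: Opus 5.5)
The proof goes through the Beveridge--Nelson (BN) decomposition: it reduces the partial sums of each linear process $\varepsilon_{k,t}$ to partial sums of the i.i.d.\ innovations $\eta_{k,t}$ plus a telescoping remainder. The KMT strong approximation is then applied to the i.i.d.\ part, and the remainder is controlled by a maximal inequality over the $(p+1)n$ indices. Throughout, I use the proof convention $p\ge n^{\nu_1}$, so that $\log(pn)\lesssim \log p$.

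\textbf{Step 1 (BN decomposition).} For each $k$ write
\[
\varepsilon_{k,s}=\psi_k(1)\eta_{k,s}+\tilde\varepsilon_{k,s-1}-\tilde\varepsilon_{k,s},
\qquad \tilde\varepsilon_{k,s}=\sum_{d\ge0}\tilde\psi_{k,d}\eta_{k,s-d},\quad \tilde\psi_{k,d}=\sum_{m>d}\psi_{k,m}.
\]
By Assumption \ref{assu:alpha}, $|\tilde\psi_{k,d}|\le C\sum_{m>d}\exp(-c_\psi m^r)$, which still decays fast enough that $\sum_d|\tilde\psi_{k,d}|\le C'$ uniformly in $k$. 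Summing over $s=0,\dots,t-1$ gives
\[
\sum_{s=0}^{t-1}\varepsilon_{k,s}=\psi_k(1)\sum_{s=0}^{t-1}\eta_{k,s}+\tilde\varepsilon_{k,-1}-\tilde\varepsilon_{k,t-1}.
\]
Also, $\psi_k(1)$ is bounded, and it is bounded away from zero because $|\psi_k(z)|>c_\psi$ on the closed unit disk.

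\textbf{Step 2 (remainder).} Each $\tilde\varepsilon_{k,s}$ is a weighted sum of sub-exponential variables with absolutely summable weights. By Minkowski's inequality on $\psi_1$-Orlicz norms and \eqref{eq:innovtail}, it is sub-exponential with a uniform constant. A union bound over $k\in[p+1]$ and $s\in\{-1,\dots,n-1\}$ then gives $\max_{k,s}|\tilde\varepsilon_{k,s}|=O_p(\log(pn))=O_p(\log p)$.

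\textbf{Step 3 (KMT on the i.i.d.\ part).} For a fixed $k$, the $\eta_{k,s}$ are i.i.d.\ with mean zero, unit variance and a finite exponential moment near zero. The Komlós--Major--Tusnády theorem therefore provides a standard Brownian motion $W_k$ on a suitably enlarged probability space such that, for all $x>0$,
\[
\Pr\Bigl\{\max_{t\le n}\Bigl|\sum_{s<t}\eta_{k,s}-W_k(t)\Bigr|>C\log n+x\Bigr\}\le K e^{-\kappa x}.
\]
The constants $C,K,\kappa$ depend only on the common law of $\eta$, so they are uniform in $k$. Take $x=M\log p$ with $M$ large and apply a union bound over $k$; the approximation error is then $O_p(\log p)$ uniformly in $k$ and $t$. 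Since the sequences $(\eta_{k,\cdot})_k$ are independent across $k$, the couplings can be constructed separately on a product space, which makes the $W_k$ mutually independent.

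\textbf{Step 4 (rescaling and conclusion).} Because $\mathbb{E}\eta^2=1$, the long-run variance is $V_k^*=\psi_k(1)^2$. Define $\mathcal{B}_k(u)=\psi_k(1)W_k(u/V_k^*)$; by Brownian scaling this is a standard Brownian motion, and $\mathcal{B}_k(tV_k^*)=\psi_k(1)W_k(t)$. Combining Steps 1--3 with the boundedness of $\psi_k(1)$ gives
\[
\sup_{k,t}\Bigl|\sum_{s<t}\varepsilon_{k,s}-\mathcal{B}_k(tV_k^*)\Bigr|=O_p(\log p),
\]
and dividing by $\sqrt n$ yields the claim.

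The main obstacle is the uniformity of the KMT constants over the $p+1$ series together with the union bound. This works because the innovations share a single distribution with exponential tails, so the exponential KMT tail absorbs a $\log p$ penalty.
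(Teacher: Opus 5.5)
Your proposal is correct. The Beveridge--Nelson reduction, the KMT coupling with constants uniform in $k$ plus a union bound over $k\in[p+1]$ (with $p\ge n^{\nu_1}$ absorbing the $\log n$ term), the sub-exponential maximal bound on the telescoping remainder, and the rescaling $\mathcal{B}_k(u)=\psi_k(1)W_k(u/V_k^*)$ with $V_k^*=\psi_k(1)^2>0$ all go through. The paper does not reprove this lemma but imports it from MS24's Lemma B.4; the $O_p(\log p)$ rate is the signature of this same BN-plus-KMT route, so your argument is presumably essentially the one behind that result.
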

The next lemma establishes the result that the LUR regressors $\xx_{t}$ and $\xx_t^{\rm b}$
with weakly dependent innovations can be approximated by another
vector of LUR processes with normally distributed innovations. Partition the rows of $\PPhi$ conformably with the stationary innovation vector $\vv_t = (u_t,\zz_t^\top,\vv_t^{\rm a\top},\ee_t^{\rm b\top},\ee_t^\top)^\top$; see Equation~\eqref{eq: v=Phiepsilon} in the main text. Then let $\PPhi^x$ be the $p_x\times (p+1)$ submatrix corresponding to $\ee_t$ and $\PPhi^{\rm b}$ be the $p_{\rm b}\times (p+1)$ submatrix corresponding to $\ee_t^{\rm b}$.  
\begin{lemma}
\label{lem:LUR Gaussian Approx} Suppose that Assumptions
\ref{assu:tail}--\ref{assu:asym_n} hold. There exist independent,
normally distributed vectors $\eeta_t \overset{\rm i.i.d}{\sim} \mathcal{N}({\bf 0}_{p+1},{\rm diag}(V_k^*)_{k\in[p+1]})$
such that the LUR processes $\xxi_{t} = \left(\bfI + \frac{\CC}{n}\right)\xxi_{t-1}+\PPhi^x\eeta_t$ and $\xxi^{\rm b}_{t} = \bigl(\bfI + \frac{\CC^{\mathrm{b}}}{n}\bigr)\xxi^{\rm b}_{t-1}+\PPhi^{\rm b}\eeta_t$ satisfy
\[
\sup_{t\in[n]}\left(\|\xx_t - \xxi_t\|_\infty + \|\xx_t^{\rm b} - \xxi_t^{\rm b}\|_\infty \right) = O_{p}(\log p).
\]
\end{lemma}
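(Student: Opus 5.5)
The plan is to build $\xxi_t$ and $\xxi^{\rm b}_t$ directly from the Brownian motions of Lemma \ref{lem:Gaussian Approximation}, and then pass from partial sums of the innovations to the LUR recursion by summation by parts. Define $\eeta_t$ componentwise by the Brownian increments, $\eta_{k,t}^{\rm G} := \mathcal{B}_k(tV_k^*) - \mathcal{B}_k((t-1)V_k^*)$. By independence of the $\mathcal{B}_k$ and of their increments, these vectors are i.i.d.\ $\mathcal{N}(\mathbf 0, {\rm diag}(V_k^*))$, as the statement requires. Write $S_{k,t} = \sum_{s=0}^{t-1}\varepsilon_{k,s}$ and $G_{k,t} = \mathcal{B}_k(tV_k^*)$. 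Lemma \ref{lem:Gaussian Approximation} then gives $\max_{k,t}|S_{k,t}-G_{k,t}| = O_p(\log p)$, since the displayed bound is stated after dividing by $\sqrt n$. The off-by-one indexing between $\varepsilon_{k,s}$ for $s\le t-1$ and $\eeta_t$ costs one extra innovation, whose maximum is $O_p(\log p)$ by the sub-exponential tails (\ref{eq:innovtail}) together with a union bound over $p$ and $n = O(p^{1/\nu_1})$.

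Next, unroll the recursions. With $\rho_j = 1 + c_j/n$, we have $x_{j,t} = \rho_j^t x_{j,0} + \sum_{s=1}^t \rho_j^{t-s} e_{j,s}$ and $e_{j,s} = \PPhi^x_{j,\cdot}\vareepsilon_s$, and similarly for $\xi_{j,t}$ with $\PPhi^x_{j,\cdot}\eeta_s$. I take $\xxi_0 = \xx_0$, or $\mathbf 0$ at the cost of an $O_p(1)$ term since $\|\ww_0\|_\infty = O_p(1)$. The difference is then $\sum_{s=1}^t\rho_j^{t-s}\PPhi^x_{j,\cdot}(\vareepsilon_s-\eeta_s)$. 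Let $D_s = S_s - G_s$ denote the vector of partial-sum discrepancies. Abel summation gives
\[
\sum_{s=1}^t \rho_j^{t-s}\Delta D_s = D_t - \rho_j^{t-1}D_0 + \sum_{s=1}^{t-1}(\rho_j^{t-s}-\rho_j^{t-s-1})D_s .
\]
Under Assumption \ref{assu:cointegration}(d) we have $\sup_{s\le n}\rho_j^{s}\le e^{C}$ and $|\rho_j^{t-s}-\rho_j^{t-s-1}|\le C/n$ uniformly in $j$. Hence the full sum is at most $C\max_{s\le n}|\PPhi^x_{j,\cdot}D_s|$. Assumption \ref{assu:covMat}(b) gives $|\PPhi^x_{j,\cdot}D_s|\le C_L\|D_s\|_\infty$, so the difference is $O_p(\log p)$ uniformly in $j\in[p_x]$ and $t\in[n]$. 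The same argument with $\PPhi^{\rm b}$ and $c^{\rm b}_j$ handles $\xx^{\rm b}_t$. Because the same $\eeta_t$ drives both $\xxi_t$ and $\xxi^{\rm b}_t$, the joint statement follows by the triangle inequality.

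The main obstacle is bookkeeping rather than a conceptual difficulty. The Gaussian approximation controls only the partial sums $S_{k,t}$ and not the individual increments $\varepsilon_{k,t}-\eta_{k,t}$, so I must never bound increments termwise; the Abel-summation step is exactly what avoids this. I also need the bound in Lemma \ref{lem:Gaussian Approximation} to be uniform over $k\in[p+1]$ simultaneously. That uniformity is what lets $\|\PPhi_{j,\cdot}\|_1\le C_L$ convert a sup-norm bound on $D_s$ into one on each linear combination without any further union bound.
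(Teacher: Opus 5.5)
Your proposal is correct and is essentially the paper's argument. The paper writes $x_{j,t}-\xi_{j,t}=\PPhi^x_{j,\cdot}\sum_{\ell=1}^t(\vareepsilon_\ell-\eeta_\ell)+(\rho_j-1)\sum_{\ell=1}^{t-1}\rho_j^{t-1-\ell}\PPhi^x_{j,\cdot}\sum_{s=1}^\ell(\vareepsilon_s-\eeta_s)$, which is exactly your summation by parts, and bounds it using $\max_j\|\PPhi^x_{j,\cdot}\|_1=O(1)$, $\max_{j,t}|\rho_j^{t-1}-1|=O(1)$, and Lemma \ref{lem:Gaussian Approximation}. Your explicit construction of $\eeta_t$ from the Brownian increments, and your handling of the index shift and of nonzero initial values, supply details that the paper skips (it simply assumes zero initial values).
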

\begin{proof}[Proof of Lemma \ref{lem:LUR Gaussian Approx}]We focus on $\xx_t$, as the approximation of $\xx_t^{\rm b}$ is similar. Let $\rho_j = 1 + \frac{c_j}{n}$ denote the LUR coefficient of $x_{j,t}$. For simplicity, we assume zero initial values. Then we can decompose the approximation error as 
\[x_{j,t} - \xi_{j,t} =  \PPhi_{j,\cdot}^x\sum_{\ell=1}^t(\vareepsilon_{\ell} - \eeta_{\ell}) + (\rho_j - 1)\sum_{\ell = 1}^{t-1} \rho_j^{t-1-\ell} \PPhi_{j,\cdot}^x \sum_{s=1}^\ell (\vareepsilon_{s} - \eeta_{s}).\]
By Assumption \ref{assu:covMat} we have $\max_{j}\|\PPhi_{j,\cdot}^x\|_1 = O(1)$. By Assumption \ref{assu:cointegration}(d) that bounds $\max_j|c_j|$, we have $\max_{j}|\rho_j| = O(1)$, $\max_{j,t}|\rho_j|^t = O(1),\text{ and }$
\[\max_{j,t} \left|(\rho_j - 1)\sum_{\ell = 1}^{t-1}  \rho_j ^{t-1-\ell}\right| =\max_{j,t} \left| \rho_j^{t-1}-1 \right| = O(1).\]
Therefore,
\[\begin{aligned}
  \max_{j,t}|x_{j,t} - \xi_{j,t}| &\leq  \left(1 + \max_{j,t}\left|(\rho_j - 1)\sum_{\ell = 1}^{t-1}  \rho_j ^{t-1-\ell}\right|\right)\cdot \max_{j }\|\PPhi_{j,\cdot}^x\|_1\cdot\max_{t\leq n}\left\|\sum_{\ell=1}^t(\vareepsilon_{\ell} - \eeta_{\ell})\right\|_\infty\\
  &=O_p(\log p),
\end{aligned} \]
where the last step applies Lemma \ref{lem:Gaussian Approximation}. This concludes the proof of Lemma \ref{lem:LUR Gaussian Approx}.
\end{proof}

For any $p\times p$ matrix $\SSigma$, the restricted eigenvalue is defined as
\[\kappa(\SSigma,L,s) = \min_{\substack{\mathcal{J}\subseteq[p],\\|\mathcal{J}|\leq s}}\, \, \min_{\substack{\vv\neq 0,\\\|\vv_{\mathcal{J}^c}\|_1\leq L\|\vv_{\mathcal{J}}\|_1}}\frac{\vv^\top \SSigma \vv}{\|\vv\|_2^2}.\]
This is an important concept in high-dimensional statistics for the analysis of Lasso-type estimators \citep{bickel2009simultaneous},
which characterizes the minimum eigenvalue of the covariance matrix restricted to sparse vectors.

Define the sample Gram matrices of the LURs and stationary regressors
as $\hat{\SSigma}^{(\xx^\sharp)}=\frac{1}{n}\sum_{t=1}^{n}\xx^\sharp_{t-1}\xx_{t-1}^{\sharp\top}$
and $\hat{\SSigma}^{(\zz^\sharp)}=\frac{1}{n}\sum_{t=1}^{n}\zz^\sharp_{t-1}\zz_{t-1}^{\sharp\top}$,
respectively.
The following Lemma establishes the RE condition for LURs without
standardization. We focus on the demeaned regressors $\xx^\sharp_{t-1}-\bar{\xx}^\sharp$
where $\bar{\xx}^\sharp=n^{-1}\sum_{t=1}^{n}\xx^\sharp_{t-1}$, since the results will
be helpful to bound the standard deviations. For a generic vector of time series $\xxi_t$, define $\bar\xxi = n^{-1}\sum_{t=1}^{n}\xxi_{t-1}$,  $\ddot\xxi_{t-1} = \xxi_{t-1}-\bar{\xxi}$, and $\ddot{\SSigma}^{(\xxi)}=n^{-1}\sum_{t=1}^{n}\ddot\xxi_{t-1}\ddot\xxi_{t-1}^{\top}$.
\begin{lemma}
\label{lem:RE LUR generic L}Suppose that $s=o(n\wedge p)$
as $n\to\infty$. 
Then under Assumptions \ref{assu:tail}-\ref{assu:asym_n}, for any fixed $L\geq1$ the inequality
\begin{equation}
\frac{\kappa(\ddot{\SSigma}^{(\xx^\sharp)},L,s)}{n}\geq\dfrac{c_{0,\kappa}^{(\xx^\sharp)}}{L^{2}s\log p}\label{eq:RE-unit-1-1}
\end{equation}
holds w.p.a.1.
\end{lemma}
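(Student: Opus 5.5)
The plan is to reduce the claim to the restricted-eigenvalue bound for demeaned LURs with Gaussian innovations, already established in \citetalias{gao2026lasso} (and, for exact unit roots, Proposition 3 of \citetalias{mei2024lasso}). Two steps are needed: a Gaussian approximation via Lemma \ref{lem:LUR Gaussian Approx}, and a perturbation argument showing that the approximation error is negligible on the restricted cone.

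\textbf{Step 1: Gaussian coupling.} Note that $\xx_t^\sharp=(\xx_t^{\rm b\top},\xx_t^\top)^\top$ is a $(p_{\rm b}+p_x)$-vector of LURs. It is driven by $\PPhi^\sharp\vareepsilon_t$, where $\PPhi^\sharp$ stacks $\PPhi^{\rm b}$ and $\PPhi^x$. Assumption \ref{assu:covMat}(a) implies that $\PPhi^\sharp\PPhi^{\sharp\top}$, a principal submatrix of $\OOmega$, has eigenvalues bounded away from zero and infinity. Assumption \ref{assu:cointegration}(d) keeps the local-to-unity parameters bounded. Lemma \ref{lem:LUR Gaussian Approx} supplies Gaussian LURs $\xxi_t^\sharp=(\xxi_t^{\rm b\top},\xxi_t^\top)^\top$ with
\[
\sup_{t}\|\xx_t^\sharp-\xxi_t^\sharp\|_\infty=O_p(\log p).
\]
For $\xxi^\sharp$, I will invoke the RE lower bound of \citetalias{gao2026lasso} for demeaned Gaussian LURs:
\[
\kappa(\ddot\SSigma^{(\xxi^\sharp)},L,s)/n\ge c/(L^2 s\log p)\quad\text{w.p.a.1.}
\]
That result requires the long-run covariance $\PPhi^\sharp{\rm diag}(V_k^*)\PPhi^{\sharp\top}$ to be well conditioned. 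This holds because Assumption \ref{assu:alpha} bounds $V_k^*=\psi_k(1)^2$ above and below.

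\textbf{Step 2: perturbation on the cone.} Write $\ddot\xx_{t-1}^\sharp=\ddot\xxi_{t-1}^\sharp+\ddot\dd_{t-1}$. Demeaning at most doubles the sup norm, so $\max_t\|\ddot\dd_{t-1}\|_\infty=O_p(\log p)$. For $\vv$ in the cone $\|\vv_{\mathcal J^c}\|_1\le L\|\vv_{\mathcal J}\|_1$ with $|\mathcal J|\le s$, we have
\[
\|\vv\|_1\le (1+L)\sqrt s\,\|\vv\|_2.
\]
Hence
\[
\tfrac1n\sum_t(\ddot\dd_{t-1}^\top\vv)^2\le \max_t\|\ddot\dd_{t-1}\|_\infty^2\|\vv\|_1^2=O_p\bigl(L^2 s(\log p)^2\bigr)\|\vv\|_2^2.
\]
Using $(a+b)^2\ge a^2/2-b^2$, we get
\[
\vv^\top\ddot\SSigma^{(\xx^\sharp)}\vv\ge \tfrac12\vv^\top\ddot\SSigma^{(\xxi^\sharp)}\vv-O_p\bigl(L^2s(\log p)^2\bigr)\|\vv\|_2^2.
\]
The first term is at least $c\,n\|\vv\|_2^2/(2L^2 s\log p)$. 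The second term is of smaller order whenever
\[
L^4s^2(\log p)^3=o(n).
\]
This follows from Assumption \ref{assu:asym_n}, since $s=O(n^{1/4-\zeta})$ and $\log p=O(\log n)$. The slack is ample: $s^2(\log p)^3\lesssim n^{1/2}$. Since the perturbation bound is uniform over $\mathcal J$ and $\vv$, taking the minimum over both gives \eqref{eq:RE-unit-1-1} after adjusting the constant to $c_{0,\kappa}^{(\xx^\sharp)}$.

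\textbf{Main obstacle.} The main difficulty is Step 1. I must confirm that the cited Gaussian RE bound covers the stacked system $\xxi^\sharp$ with a general, non-diagonal mixing matrix $\PPhi^\sharp$ and heterogeneous $c_j$. I must also confirm that the bound is uniform over the $\binom{p}{s}$ supports with only a $\log p$ loss. If the citation only covers $\PPhi^\sharp={\bf I}$, I would reduce to that case. Writing
\[
\xxi^\sharp_t=\PPhi^\sharp{\rm diag}(V^{*})^{1/2}\ttau_t+(\text{correction from differing } c_j),
\]
I would bound the Gram quadratic form below via $\lambda_{\min}(\PPhi^\sharp{\rm diag}(V^*)\PPhi^{\sharp\top})$, combined with a Brownian-bridge small-ball argument for the demeaned LUR functional $\int(\vv^\top \ddot J_c(r))^2dr$. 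The differing $c_j$ do not obstruct this: the Ornstein--Uhlenbeck kernel with bounded $c_j$ is uniformly comparable to the Brownian kernel.
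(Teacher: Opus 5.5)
Your proposal is correct and takes essentially the same route as the paper. The paper handles the Gaussian case by appeal to Part (a) of the proof of Lemma A.4 in \citetalias{gao2026lasso}, and the general case by combining Lemma \ref{lem:LUR Gaussian Approx} with Part (b) of that proof; your Step 2 simply writes out the cone-restricted perturbation bound that the paper leaves to the citation, and the rate it needs, $s^2(\log p)^3=o(n)$, is comfortably implied by Assumption \ref{assu:asym_n}.
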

\begin{proof}[Proof of Lemma \ref{lem:RE LUR generic L}]When the innovations generating the LURs are i.i.d normally distributed, the proof follows Part (a) of the proof of Lemma A.4 in \citetalias{gao2026lasso}. When the innovations are weakly dependent without the normal assumption, with Lemma \ref{lem:LUR Gaussian Approx} the proof follows Part (b) of the proof of Lemma A.4 in \citetalias{gao2026lasso}.
\end{proof}
The following Lemma establishes the RE condition for LUR regressors without demeaning.
\begin{lemma}
\label{lem:RE LUR-1} Under Assumptions \ref{assu:tail}-\ref{assu:asym_n},
w.p.a.1
\begin{equation}
\kappa(\hat{\SSigma}^{(\xx^\sharp)},3,s)\geq\dfrac{c_{\kappa}^{(\xx^\sharp)}n}{s\cdot\log p}.\label{eq:RE I LUR}
\end{equation}
\end{lemma}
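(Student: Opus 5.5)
The plan is to reduce the non-demeaned restricted eigenvalue to the demeaned one in Lemma \ref{lem:RE LUR generic L}. The key identity is
\[
\hat{\SSigma}^{(\xx^\sharp)}=\ddot{\SSigma}^{(\xx^\sharp)}+\bar{\xx}^\sharp\bar{\xx}^{\sharp\top}.
\]
It holds because $n^{-1}\sum_{t}\ddot\xx^\sharp_{t-1}=\mathbf{0}$, which kills the cross terms. Hence, for every vector $\vv$,
\[
\vv^\top\hat{\SSigma}^{(\xx^\sharp)}\vv=\vv^\top\ddot{\SSigma}^{(\xx^\sharp)}\vv+(\bar{\xx}^{\sharp\top}\vv)^2\geq\vv^\top\ddot{\SSigma}^{(\xx^\sharp)}\vv .
\]
Taking the minimum over the same index sets $\mathcal{J}$ with $|\mathcal{J}|\leq s$ and over the same cone $\|\vv_{\mathcal{J}^c}\|_1\leq 3\|\vv_{\mathcal{J}}\|_1$ gives
\[
\kappa(\hat{\SSigma}^{(\xx^\sharp)},3,s)\geq\kappa(\ddot{\SSigma}^{(\xx^\sharp)},3,s).
\]
No probabilistic control of the mean term is needed, since it is positive semidefinite and only helps.

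Next I would verify the hypotheses of Lemma \ref{lem:RE LUR generic L} with $L=3$. The lemma needs $s=o(n\wedge p)$, and this follows from Assumption \ref{assu:asym_n}(b): $s=O(n^{1/4-\zeta}\wedge p^{1-\zeta})$ with $\zeta>0$. The lemma is stated for the LUR block $\xx^\sharp_t=(\xx_t^{\rm b\top},\xx_t^\top)^\top$ of the infeasible regressors, which is exactly the block appearing here. Its local-to-unity coefficients are bounded by Assumption \ref{assu:cointegration}(d), and its innovations are rows of $\PPhi\vareepsilon_t$, so the Gaussian approximation of Lemma \ref{lem:LUR Gaussian Approx} applies as used in that lemma's proof. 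The lemma then yields, w.p.a.1,
\[
\kappa(\ddot{\SSigma}^{(\xx^\sharp)},3,s)\geq\frac{c_{0,\kappa}^{(\xx^\sharp)}\,n}{9\,s\log p}.
\]
Setting $c_{\kappa}^{(\xx^\sharp)}=c_{0,\kappa}^{(\xx^\sharp)}/9$ gives \eqref{eq:RE I LUR}.

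The only step needing care is the dimension bookkeeping. Here $\xx^\sharp$ has dimension $p_{\rm b}+p_x\leq p$, and $\log(p_{\rm b}+p_x)\leq\log p$, so the bound stated with $\log p$ remains valid. If the RE in Lemma \ref{lem:RE LUR generic L} is interpreted over the $p$-dimensional ambient space, the restriction to the LUR subblock is immediate. I therefore expect no genuine obstacle: all the difficulty lies in the demeaned result already taken as given, and the present statement is a monotonicity corollary of it.
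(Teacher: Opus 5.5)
Your proposal is correct and matches the paper's proof: the lemma follows from Lemma~\ref{lem:RE LUR generic L} with $L=3$ and $c_{\kappa}^{(\xx^\sharp)}=c_{0,\kappa}^{(\xx^\sharp)}/9$, combined with $\kappa(\hat{\SSigma}^{(\xx^\sharp)},3,s)\geq\kappa(\ddot{\SSigma}^{(\xx^\sharp)},3,s)$. Your identity $\hat{\SSigma}^{(\xx^\sharp)}=\ddot{\SSigma}^{(\xx^\sharp)}+\bar{\xx}^\sharp\bar{\xx}^{\sharp\top}$ just makes explicit the monotonicity step that the paper asserts without proof.
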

\begin{proof}[Proof of Lemma \ref{lem:RE LUR-1}]
 This lemma is a direct corollary of Lemma \ref{lem:RE LUR generic L}
by taking $L=3$ and $c_{\kappa}^{(\xx^\sharp)}=c_{0,\kappa}^{(\xx^\sharp)}/9$, and the fact
that $\kappa(\hat{\SSigma}^{(\xx^\sharp)},3,s)\geq\kappa(\ddot{\SSigma}^{(\xx^\sharp)},3,s)$.
\end{proof}
The following Lemma establishes the RE condition for stationary regressors. Define $\hat{\SSigma}^{(\zz^\sharp)}=n^{-1}\sum_{t=1}^{n}\zz^\sharp_{t-1} \zz_{t-1}^{\sharp\top}$.
\begin{lemma}
\label{lem:RE k} Under Assumptions \ref{assu:tail}-\ref{assu:asym_n}, there exists some absolute constant $c_{\kappa}^{(z)}$ such that
\begin{equation}
\kappa(\hat{\SSigma}^{(\zz^\sharp)},3,s)\geq c_{\kappa}^{(z)}\label{eq:RE ST MI}
\end{equation}
holds w.p.a.1.
\end{lemma}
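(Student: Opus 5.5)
The plan is the standard two-step route for restricted eigenvalues of stationary designs: first lower-bound the population Gram matrix, then control the sample-to-population deviation in max norm, and transfer that deviation to cone vectors through an $\ell_1$ bound. The stationary vector $\zz^\sharp_t=(\zz_t^\top,\vv_t^{\rm a\top})^\top$ is a subvector of $\vv_t=\PPhi\vareepsilon_t$ in \eqref{eq: v=Phiepsilon}. Its population covariance $\SSigma^{(\zz^\sharp)}=\mathbb{E}[\zz^\sharp_t\zz_t^{\sharp\top}]$ is therefore a principal submatrix of $\PPhi\,\mathrm{diag}(\mathbb{E}\varepsilon_{k,t}^2)\PPhi^\top$. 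By Assumption \ref{assu:alpha}, $\mathbb{E}\varepsilon_{k,t}^2=\sum_d\psi_{k,d}^2\geq\psi_{k,0}^2=1$. Hence $\lambda_{\min}(\SSigma^{(\zz^\sharp)})\geq\lambda_{\min}(\OOmega)\geq c_{\OOmega}$ by Assumption \ref{assu:covMat}(a), since eigenvalue interlacing gives a lower bound for principal submatrices.

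Next, for any $\vv$ in the cone $\|\vv_{\mathcal{J}^c}\|_1\leq3\|\vv_{\mathcal{J}}\|_1$ with $|\mathcal{J}|\leq s$, we have $\|\vv\|_1\leq4\|\vv_{\mathcal{J}}\|_1\leq4\sqrt{s}\|\vv\|_2$. This gives
\[
\vv^\top\hat\SSigma^{(\zz^\sharp)}\vv\geq \vv^\top\SSigma^{(\zz^\sharp)}\vv-\|\hat\SSigma^{(\zz^\sharp)}-\SSigma^{(\zz^\sharp)}\|_{\max}\|\vv\|_1^2\geq\left(c_{\OOmega}-16s\|\hat\SSigma^{(\zz^\sharp)}-\SSigma^{(\zz^\sharp)}\|_{\max}\right)\|\vv\|_2^2 .
\]
It then suffices to show $s\|\hat\SSigma^{(\zz^\sharp)}-\SSigma^{(\zz^\sharp)}\|_{\max}\stackrel{\mathrm{p}}{\to}0$, after which we take $c_\kappa^{(z)}=c_{\OOmega}/2$.

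The main step is the entrywise concentration of the sample covariance. Each product $z^\sharp_{i,t}z^\sharp_{j,t}$ is a centered-after-subtraction product of two linear combinations of the $\varepsilon_{k,t}$, with $\|\PPhi_{i,\cdot}\|_1\leq C_L$. The $\eta$'s are sub-exponential by \eqref{eq:innovtail}, so the products are sub-Weibull with tail parameter $1/2$. The process is $\alpha$-mixing with coefficients decaying like $\exp(-cd^{r})$, which follows from Assumption \ref{assu:alpha} combined with \eqref{eq:density}.

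I would invoke the Bernstein-type inequality for sub-Weibull $\alpha$-mixing sequences (as in Merlevède--Peligrad--Rio, already used for Lemma \ref{lem:DB} via Lemma A.9 of \citetalias{gao2026lasso}). Combined with a union bound over the $O(p^2)$ entries, this yields
\[
\|\hat\SSigma^{(\zz^\sharp)}-\SSigma^{(\zz^\sharp)}\|_{\max}\lep(\log p)^{c}/\sqrt{n}
\]
for some finite $c$ depending on $r$ (one expects roughly $2+1/r$). Alternatively, one can cite the analogous stationary RE lemma in \citetalias{mei2024lasso} or \citetalias{gao2026lasso} directly, since their design covers linear transformations $\PPhi\vareepsilon_t$.

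Finally, Assumption \ref{assu:asym_n} gives $s=O(n^{1/4-\zeta})$ and $\log p=O(\log n)$. Therefore $s(\log p)^c/\sqrt{n}\to0$, which completes the argument. The obstacle I anticipate is verifying the mixing and tail conditions for the product process uniformly in $(i,j)$. I expect to handle this by truncating the linear process at lag $\asymp(\log n)^{1/r}$ and bounding the remainder using the exponential decay of $\psi_{j,d}$.
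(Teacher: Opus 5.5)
Your proposal is correct and follows the same route as the paper. The paper's one-line proof invokes standard concentration inequalities for weakly dependent sub-exponential series (as in (B.30) of \citetalias{mei2024lasso}) to get $\|\hat{\SSigma}^{(\zz^\sharp)}-\SSigma^{(\zz^\sharp)}\|_{\max}\lep\sqrt{\log p/n}$, and pairs it with a population eigenvalue lower bound through the same $\ell_1$-cone transfer you wrote, which appears explicitly in the paper's proof of Proposition \ref{prop: OLS}. Your write-up fills in details the paper leaves implicit: $\lambda_{\min}(\SSigma^{(\zz^\sharp)})\geq c_{\OOmega}$ via $\mathbb{E}\varepsilon_{k,t}^2\geq\psi_{k,0}^2=1$, and lag truncation to get mixing uniformly in $(i,j)$; any polylogarithmic deviation rate suffices because $s=O(n^{1/4-\zeta})$.
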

\begin{proof}[Proof of Lemma \ref{lem:RE k}]
 The proof follows standard arguments using concentration inequalities
for weakly dependent time series, like (B.30) in \citetalias{mei2024lasso}.
\end{proof}

The following Lemma establishes the RE condition for the combined set of stationary and LUR regressors in the infeasible regression (\ref{eq:equivalent2}). Define $\hat\SSigma^\sharp=\RR_n^{-1}\sum_{t=1}^{n}\ww^\sharp_{t-1} \ww_{t-1}^{\sharp\top}\RR_n^{-1} = \RR_n^{-1}\WW^{\sharp\top}\WW^\sharp\RR_n^{-1}$.
\begin{lemma}
\label{lem:RE xz} Under Assumptions \ref{assu:tail}-\ref{assu:asym_n},  there exists some absolute constant $c_{\kappa}$ such that
\begin{equation}
\kappa(\hat{\SSigma}^{\sharp},3,s)\geq \frac{c_{\kappa}}{s\log p}\label{eq:RE xz}
\end{equation}
holds w.p.a.1.
\end{lemma}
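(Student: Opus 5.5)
Let $\vv=(\vv_z^\top,\vv_x^\top)^\top$ be a vector in the restricted cone, partitioned conformably with $\ww^\sharp_t=(\zz_t^{\sharp\top},\xx_t^{\sharp\top})^\top$, so $\|\vv_{\mathcal{J}^c}\|_1\le 3\|\vv_{\mathcal{J}}\|_1$ with $|\mathcal{J}|\le s$. The plan is to expand
\[
\vv^\top\hat\SSigma^\sharp\vv=\vv_z^\top\hat\SSigma^{(\zz^\sharp)}\vv_z+\tfrac{1}{n}\vv_x^\top\hat\SSigma^{(\xx^\sharp)}\vv_x+2\,\vv_z^\top\Big(n^{-3/2}\textstyle\sum_{t}\zz^\sharp_{t-1}\xx_{t-1}^{\sharp\top}\Big)\vv_x ,
\]
since the scaling by $\RR_n^{-1}$ normalizes the stationary block by $n$ and the LUR block by $n^2$. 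The two diagonal blocks are handled by Lemmas \ref{lem:RE k} and \ref{lem:RE LUR-1}, and the cross term by the deviation bound in Lemma \ref{lem:DB}.

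The cone restriction on $\vv$ does not pass to the subvectors $\vv_z$ and $\vv_x$ with constant $3$. So I would not apply the blockwise RE at $L=3$ directly. Instead I would use Lemma \ref{lem:RE LUR generic L} (valid for any fixed $L$) and its stationary analog.

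For the stationary block I would use the standard restricted-eigenvalue argument for weakly dependent data. Concentration gives $\|\hat\SSigma^{(\zz^\sharp)}-\mathbb{E}\hat\SSigma^{(\zz^\sharp)}\|_{\max}\lep (\log p)^{c}/\sqrt n$, so
\[
\vv_z^\top\hat\SSigma^{(\zz^\sharp)}\vv_z\ge c\|\vv_z\|_2^2-C(\log p)^{c}\|\vv_z\|_1^2/\sqrt n .
\]
For the LUR block, Lemma \ref{lem:RE LUR generic L} gives the analogous lower bound $c\|\vv_x\|_2^2/(s\log p)$ on cone-type vectors. To cover both blocks at once, I would pass to the "sparse-approximation" form: every cone vector satisfies
\[
\|\vv\|_1\le 4\|\vv_{\mathcal{J}}\|_1\le 4\sqrt s\,\|\vv\|_2 ,
\]
so it suffices to lower bound $\vv^\top\hat\SSigma^\sharp\vv$ over $\{\|\vv\|_1\le 4\sqrt s\|\vv\|_2\}$. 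The LUR lemma in GLMS26 is in fact proved over such $\ell_1/\ell_2$ balls, with a constant depending polynomially on $L$. I would also check that $\vv_z$ and $\vv_x$ each satisfy $\|\cdot\|_1\le 4\sqrt s\|\vv\|_2$.

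The main obstacle is the case where the $\ell_2$ mass of one block is small relative to $\|\vv\|_1$, so that the ratio $\|\vv_x\|_1/\|\vv_x\|_2$ is not controlled. I would split into two cases.

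\begin{itemize}
\item If $\|\vv_x\|_2^2\ge \delta\|\vv\|_2^2/(s\log p)$, I would use a weaker restricted-eigenvalue-type bound for the LUR part with an $\ell_1$ penalty term. Then, because the stationary part alone contributes at least $c\|\vv_z\|_2^2$, the sum is at least $c\|\vv\|_2^2/(s\log p)$.
\item Otherwise, the stationary block dominates and yields order $\|\vv\|_2^2\ge \|\vv\|_2^2/(s\log p)$.
\end{itemize}

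The cross term is bounded by
\[
2\|\vv_z\|_1\|\vv_x\|_1\cdot\big\|n^{-3/2}\textstyle\sum_t\xx^\sharp_{t-1}\zz_{t-1}^{\sharp\top}\big\|_{\max}\lep \|\vv\|_1^2(\log p)^{1+1/(2r)}/\sqrt n\lep s(\log p)^{1+1/(2r)}\|\vv\|_2^2/\sqrt n ,
\]
which is $o(\|\vv\|_2^2/(s\log p))$ because $s^2(\log p)^{2+1/(2r)}=o(\sqrt n)$ under Assumption \ref{assu:asym_n}(b). Absorbing it and taking $c_\kappa$ small completes the bound.

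The delicate point is making the LUR block's lower bound robust when $\vv_x$ itself is not in a cone. For this I would invoke the Gaussian approximation of Lemma \ref{lem:LUR Gaussian Approx} and reuse the GLMS26 argument over $\ell_1$-balls, rather than Lemma \ref{lem:RE LUR-1} verbatim.
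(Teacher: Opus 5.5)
Your architecture is the right one: split $\vv^\top\hat\SSigma^\sharp\vv$ into the stationary, LUR and cross blocks, control the cross block by $\|\vv\|_1^2$ times the max-norm bound of Lemma \ref{lem:DB}, and use $\|\vv\|_1\le 4\sqrt{s}\|\vv\|_2$. This is what the paper outsources to the proof of Proposition B.6 in MS24. The paper runs that argument on the demeaned $\ddot\SSigma^\sharp$ and then uses $\hat\SSigma^\sharp\succeq\ddot\SSigma^\sharp$. Working on the raw Gram as you do is equally fine, since $\hat\SSigma^{(\xx^\sharp)}\succeq\ddot\SSigma^{(\xx^\sharp)}$.

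The gap is the step you flag yourself. With your $\ell_2$-mass split, Case 1 needs a lower bound on the LUR block at a vector $\vv_x$ known only to satisfy $\|\vv_x\|_1\le 4\sqrt{s}\|\vv\|_2$. Lemma \ref{lem:RE LUR generic L} does not give this, for three reasons:
\begin{itemize}
\item The lemma is stated on cones $\|\uu_{\mathcal{J}^c}\|_1\le L\|\uu_{\mathcal{J}}\|_1$ with bounded $L$.
\item An $\ell_1/\ell_2$ ball $\{\|\uu\|_1\le K\sqrt{s}\|\uu\|_2\}$ is not contained in any such cone. For example, set one coordinate equal to $1$ and spread $\ell_1$ mass of order $K\sqrt{s}$ thinly over all other coordinates.
\item Passing to a cone with $L\asymp\sqrt{s}$ makes the bound $c/(L^2s\log p)$ lose a factor $s$.
\end{itemize}
Your threshold $\delta\|\vv\|_2^2/(s\log p)$ makes things worse: it only yields $\|\vv_x\|_1\le 4s\sqrt{\log p/\delta}\,\|\vv_x\|_2$, which would cost an extra factor $s\log p$. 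So as written the proof rests on an unverified ``$\ell_1$-ball'' version of the GLMS26 argument. If you do prove RE with constant $\alpha$ on $\{\|\uu\|_1\le K\sqrt{s}\|\uu\|_2\}$, positive semidefiniteness upgrades it for free to $\uu^\top\SSigma\uu\ge\alpha\|\uu\|_2^2-\alpha\|\uu\|_1^2/(K^2s)$ for all $\uu$. Any $K>4$ then makes the case split unnecessary.

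You can close the argument with the lemmas exactly as stated by splitting on $\ell_1$ mass over $\mathcal{J}$ instead. Let $\mathcal{I}_z=\mathcal{M}_z\cup\mathcal{C}_{\rm a}$ and $\mathcal{I}_x=\mathcal{C}_{\rm b}\cup\mathcal{M}_x$ index the stationary and LUR coordinates of $\ww_t^\sharp$. Set $\mathcal{J}_z=\mathcal{J}\cap\mathcal{I}_z$ and $\mathcal{J}_x=\mathcal{J}\cap\mathcal{I}_x$.

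\textbf{Case (i):} $\|\vv_{\mathcal{J}_x}\|_1\ge\|\vv_{\mathcal{J}_z}\|_1$. Then $\|\vv_{\mathcal{I}_x\setminus\mathcal{J}_x}\|_1\le\|\vv_{\mathcal{J}^c}\|_1\le3\|\vv_{\mathcal{J}}\|_1\le6\|\vv_{\mathcal{J}_x}\|_1$. Lemma \ref{lem:RE LUR generic L} with $L=6$ bounds the LUR block below by $c\|\vv_x\|_2^2/(s\log p)$. Your concentration bound together with $\|\vv_z\|_1\le4\sqrt{s}\|\vv\|_2$ gives a stationary block of at least $c\|\vv_z\|_2^2-o_p(\|\vv\|_2^2/(s\log p))$.

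\textbf{Case (ii):} $\|\vv_{\mathcal{J}_z}\|_1>\|\vv_{\mathcal{J}_x}\|_1$. Then $\|\vv\|_1\le4\|\vv_{\mathcal{J}}\|_1<8\|\vv_{\mathcal{J}_z}\|_1\le8\sqrt{s}\|\vv_z\|_2$, so $\|\vv_z\|_2^2\ge\|\vv\|_2^2/(64s)$. The same concentration bound, now with $\|\vv_z\|_1^2\le 64s\|\vv_z\|_2^2$, makes the stationary block alone at least a constant times $\|\vv\|_2^2/s$. The LUR block only needs to be nonnegative.

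In both cases your cross-term bound $O_p\bigl(s(\log p)^{1+1/(2r)}/\sqrt{n}\bigr)\|\vv\|_2^2=o_p\bigl(\|\vv\|_2^2/(s\log p)\bigr)$ finishes the proof.
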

\begin{proof}[Proof of Lemma \ref{lem:RE xz}] Define $\ddot{\SSigma}^\sharp = \RR_n^{-1}\sum_{t=1}^{n}(\ww^\sharp_{t-1} - \bar{\ww}^\sharp)(\ww_{t-1}^{\sharp} - \bar{\ww}^\sharp)^\top\RR_n^{-1}$.
 The same arguments in the proof of Proposition B.6 in \citetalias{mei2024lasso} apply to showing that
 \[\kappa(\ddot{\SSigma}^{\sharp},3,s)\geq \frac{c_{\kappa}}{s\log p},\]
 and Lemma \ref{lem:RE xz} follows from the fact that $\kappa(\hat{\SSigma}^{\sharp},3,s) \geq \kappa(\ddot{\SSigma}^{\sharp},3,s)$.
\end{proof}

The following lemma delivers the bounds of the standard deviations
used for scaling the penalties.
\begin{lemma}
\label{lem:sd bound} Under Assumptions \ref{assu:tail}-\ref{assu:asym_n},
\begin{itemize}
\item[(a)] For stationary regressors, there exist some absolute constants $\sigma_{\min}<\sigma_{\max}$,
such that w.p.a.1
\begin{equation}
\sigma_{\min}\leq\min_{j\in\mathcal{M}_{z}}\hat{\sigma}_{j}\leq\max_{j\in\mathcal{M}_{z}}\hat{\sigma}_{j}\leq\sigma_{\max}.\label{eq:sd bound ST MI}
\end{equation}
\item[(b)] For LUR and cointegrated regressors,
\begin{equation}
\sqrt{\dfrac{n}{\log p}}\lep\min_{j\in\mathcal{M}_{x}\cup\mathcal{C}}\hat{\sigma}_{j}\leq\max_{j\in\mathcal{M}_{x}\cup\mathcal{C}}\hat{\sigma}_{j}\lep\sqrt{n\log p}.\label{eq:sd bound LUR}
\end{equation}
\end{itemize}
\end{lemma}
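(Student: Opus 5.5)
For part (a), each stationary $z_{j,t}$ is a linear process with exponentially decaying coefficients and sub-exponential innovations. I will bound $|\hat\sigma_j^2-\mathbb{E}z_{j,t}^2|$ uniformly over $j\in\mathcal{M}_z$ by a concentration inequality for $\alpha$-mixing sub-exponential sequences, as in (B.30) of \citetalias{mei2024lasso}. Applied to $n^{-1}\sum_t z_{j,t-1}^2$ and to $\bar z_j$, it gives a uniform error of order $(\log p)^{c}/\sqrt n=o(1)$ after a union bound over $p=O(n^\nu)$ indices. The population variance $\mathbb{E}z_{j,t}^2$ is a diagonal entry of the long-run-type covariance $\PPhi\,\mathrm{diag}(\sum_d\psi_{k,d}^2)\PPhi^\top$. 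It is bounded above by Assumption \ref{assu:covMat} and Assumption \ref{assu:alpha}. It is bounded below because $\sum_d\psi_{k,d}^2\ge\psi_{k,0}^2=1$ and $\lambda_{\min}(\OOmega)\ge c_{\OOmega}$. Taking $\sigma_{\min}$ and $\sigma_{\max}$ as half of the lower and twice the upper population bound yields (\ref{eq:sd bound ST MI}).

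For part (b), I treat the pure LURs $j\in\mathcal{M}_x\cup\mathcal{C}_{\rm b}$ first. These are coordinates of $\xx_t^\sharp$, so $\hat\sigma_j^2=\ddot\Sigma^{(\xx^\sharp)}_{jj}=\ee_j^\top\ddot{\SSigma}^{(\xx^\sharp)}\ee_j$. The unit vector $\ee_j$ is $1$-sparse and trivially satisfies the cone restriction. Lemma \ref{lem:RE LUR generic L} with $s=1$ and $L=1$ therefore gives $\hat\sigma_j^2\ge c\,n/\log p$ uniformly in $j$ w.p.a.1, which is the lower bound. For the upper bound, $\hat\sigma_j^2\le n^{-1}\sum_t x_{j,t-1}^2\le\max_t x_{j,t}^2$. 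Lemma \ref{lem:LUR Gaussian Approx} replaces $x_{j,t}$ by the Gaussian LUR $\xi_{j,t}$ up to an $O_p(\log p)$ error. Each $\xi_{j,t}$ is a Gaussian sum with variance $O(n)$, uniformly in $t$ since $|c_j|=O(1)$. A Gaussian maximal inequality over $(j,t)$ then gives $\max_{j,t}|\xi_{j,t}|\lep\sqrt{n\log p}$.

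For the cointegrated $j\in\mathcal{C}_{\rm a}$, write $x^{\rm a}_{i,t}=\AA_{i,\cdot}\xx^{\rm b}_t+v^{\rm a}_{i,t}$. The upper bound follows from $\|\AA_{i,\cdot}\|_1=O(1)$, the uniform bound on $\xx^{\rm b}$ above, and the sub-exponential maximum $\max_{i,t}|v^{\rm a}_{i,t}|=O_p(\log p)$. The lower bound is the main obstacle. I will use $\hat\sigma_{x^{\rm a}_i}\ge\hat\sigma_{\AA_{i,\cdot}\xx^{\rm b}}-\hat\sigma_{v^{\rm a}_i}$ (Minkowski for the demeaned $\ell_2$ norm), where the second term is $O_p(1)$ uniformly by part (a). The first term satisfies $\hat\sigma^2_{\AA_{i,\cdot}\xx^{\rm b}}=\AA_{i,\cdot}\ddot{\SSigma}^{(\xx^{\rm b})}\AA_{i,\cdot}^\top$. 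Assumption \ref{assu:cointegration}(b) makes $\AA_{i,\cdot}$ $O(s)$-sparse, hence in the cone with sparsity $O(s)$. Applying Lemma \ref{lem:RE LUR generic L} and Assumption \ref{assu:cointegration}(c) gives only the bound $\gtrsim n\|\AA_{i,\cdot}\|_2^2/(s\log p)$, which loses a factor $s$.

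To avoid that loss, I would instead lower-bound by a single direction. First, $\ddot{\SSigma}^{(\xx^{\rm b})}$ evaluated at a fixed vector $\aa$ with $\|\aa\|_2\asymp1$ behaves like a single LUR with variance $\aa^\top\OOmega^{\rm b}\aa\ge c$. Second, the probability that $n^{-1}\aa^\top\ddot{\SSigma}\aa$ falls below $c/\log p$ is at most $p^{-K}$ by the small-ball estimate for a demeaned Gaussian random-walk functional, used in Part (a) of Lemma A.4 of \citetalias{gao2026lasso}. Third, a union bound over the $p_{\rm a}\le p$ rows, after the Gaussian coupling of Lemma \ref{lem:LUR Gaussian Approx}, makes the bound uniform in $i$. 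I expect this small-ball step to be where the work lies. Combining the three cases gives (\ref{eq:sd bound LUR}).
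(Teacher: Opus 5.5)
Your arguments for part (a), for the pure LURs $j\in\mathcal{M}_x\cup\mathcal{C}_{\rm b}$, and for the $\mathcal{C}_{\rm a}$ upper bound and the Minkowski split $\hat\sigma_{x^{\rm a}_i}\ge\hat\sigma_{\AA_{i,\cdot}\xx^{\rm b}}-\hat\sigma_{v^{\rm a}_i}$ match the paper. For the first two, the paper cites Lemma A.7 and (A.26) of \citetalias{gao2026lasso}. The $O_p(1)$ bound on $\max_i\hat\sigma_{v^{\rm a}_i}$ comes from the part (a) concentration argument applied to the block $\vv^{\rm a}_t$, not from part (a) as stated.

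You diverge only at the $\mathcal{C}_{\rm a}$ lower bound, and the obstacle you see there is not real. The cone in Lemma \ref{lem:RE LUR generic L} restricts $\ell_1$ mass, not support, and the lemma holds for any fixed $L$. Take $J_0=\{k_i\}$ with $k_i\in\arg\max_k|A_{i,k}|$. Since $\|\AA_{i,\cdot}\|_2^2\le\|\AA_{i,\cdot}\|_\infty\|\AA_{i,\cdot}\|_1$, you get $\|\AA_{i,J_0^c}\|_1\le\|\AA_{i,\cdot}\|_1\le\bigl(\|\AA_{i,\cdot}\|_1^2/\|\AA_{i,\cdot}\|_2^2\bigr)\|\AA_{i,J_0}\|_1$. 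The bracket is bounded by an absolute constant $L$ by Assumption \ref{assu:cointegration}(c). So every row lies in a cone with singleton support and fixed $L$. Lemma \ref{lem:RE LUR generic L} with sparsity level $1$ then gives $n^{-1}\sum_t(\AA_{i,\cdot}\ddot\xx^{\rm b}_{t-1})^2\ge\|\AA_{i,\cdot}\|_\infty^2\,\kappa(\ddot{\SSigma}^{(\xx^\sharp)},L,1)\ge cn/\log p$ simultaneously for all $i$. This is exactly the paper's proof. You lose the factor $s$ only because you place $\AA_{i,\cdot}$ in the cone through its $O(s)$-sparsity rather than its bounded $\ell_1/\ell_\infty$ ratio.

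Your small-ball route would also work in principle. A per-direction lower-tail bound of order $p^{-K}$ survives a union bound over $p_{\rm a}\le p$ rows. The coupling error $\|\AA_{i,\cdot}\|_1\,O_p(\log p)$ is negligible next to $\sqrt{n/\log p}$. The cost is that you must re-derive, and you leave unproved, the per-direction estimate that sits inside the RE proof. That includes directions where $\AA_{i,\cdot}\xxi^{\rm b}_t$ mixes LURs with different $c^{\rm b}_k$, so it is not literally a single LUR as you assume. The uniform RE statement already covers all rows at once with no further probabilistic work.
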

 \begin{proof}[Proof of Lemma \ref{lem:sd bound}] The bounds for $j \in \mathcal{M}_z$ and $j \in \mathcal{M}_x\cup\mathcal{C}_{\rm b}$ follow those in Lemma A.7 of \citetalias{gao2026lasso}. For $j\in\mathcal{C}_{\rm a}$, by (A.26) of \citetalias{gao2026lasso} we have
\begin{equation}\label{eq: x sharp upper bound}
\sup_{t\in[n]}\|\xx_{t-1}^\sharp\|_\infty = O_p(\sqrt{n\log p}),
\end{equation} 
and therefore
\begin{align}
  \sup_{t\in[n]}\|\xx^{\rm ab}_{t-1}\|_\infty &\leq \sup_{t\in[n]}\|\xx_{t-1}^{\mathrm{a}}\|_\infty + \sup_{t\in[n]}\|\xx_{t-1}^{\mathrm{b}}\|_\infty \leq \sup_{t\in[n]}\|\AA\xx_{t-1}^{\mathrm{b}}+\vv_{t-1}^{\rm a}\|_\infty + \sup_{t\in[n]}\|\xx_{t-1}^{\sharp}\|_\infty \nonumber \\
  &\leq \left(1+ \max_j \|\AA_{j\cdot}\|_1\right)\cdot \sup_{t\in[n]}\|\xx_{t-1}^\sharp\|_\infty + \sup_{t\in[n]}\|\vv_{t-1}^{\rm a}\|_\infty= O_p(\sqrt{n\log p}). \label{eq: xab upper bound}
\end{align} 
The upper bound then follows from the inequality
  \[\max_{j\in\mathcal{C}_{\rm a}}\hat{\sigma}_{j} \leq \sqrt{n\max_{t\in [n]}\|\xx_{t-1}^{\rm ab}\|_\infty^2/n} = O_p(\sqrt{n\log p}),\]
  where the last step applies the upper bound in \eqref{eq: xab upper bound}. As for the lower bound, let $\AA_{j,\cdot}$ denote the $j$-th row of $\AA$. Then for $j\in\mathcal{C}_{\rm a}$, we have
  \begin{equation}\label{eq:sd lower bound Ca}
    \hat{\sigma}_{j} = \sqrt{\frac{1}{n}\sum_{t=1}^n (\ddot x_{j,t-1}^{\rm a})^2} \geq \sqrt{\frac{1}{n}\sum_{t=1}^n (\AA_{j,\cdot} \ddot\xx_{t-1}^{\rm b})^2} - \sqrt{\frac{1}{n}\sum_{t=1}^n (\ddot v_{j,t-1}^{\rm a})^2}.
  \end{equation}
  Since $v_{j,t}^{\rm a}$ is stationary, it follows that $\max_{j\in\mathcal{C}_{\rm a}}\frac{1}{n}\sum_{t=1}^n (\ddot v_{j,t-1}^{\rm a})^2 = O_p(1)$. Moreover, for each $j$, choose any $k_j\in\arg\max_k |A_{j,k}|$ and let $J_0 = \{k_j\}$. Therefore, $|J_0| = 1$ and
  \[ \max_{j\in\mathcal{C}_{\rm a}}\frac{\|\AA_{j,J_0^c}\|_1}{\|\AA_{j,J_0}\|_1}  \leq \max_{j\in\mathcal{C}_{\rm a}}\frac{\|\AA_{j,\cdot}\|_1}{\|\AA_{j,\cdot}\|_\infty} \leq \max_{j\in\mathcal{C}_{\rm a}}\frac{\|\AA_{j,\cdot}\|_1^2}{\|\AA_{j,\cdot}\|_2^2} = O(1),\]
  where the second step applies the inequality $\|\AA_{j,\cdot}\|_2^2 \leq \|\AA_{j,\cdot}\|_\infty \cdot \|\AA_{j,\cdot}\|_1$, and the last step applies Assumption \ref{assu:cointegration}. Therefore, for any $j\in\mathcal{C}_{\rm a}$, the vector $\AA_{j,\cdot}$ satisfies
$\|\AA_{j,J_0^c}\|_1 \leq L \|\AA_{j,J_0}\|_1$
  for some fixed $L>1$, with $|J_0| = 1$. Using Lemma \ref{lem:RE LUR generic L},
  \[\min_{j\in\mathcal{C}_{\rm a}}{\frac{1}{n}\sum_{t=1}^n (\AA_{j,\cdot} \ddot\xx_{t-1}^{\rm b})^2} \geq \min_{j\in\mathcal{C}_{\rm a}}\|\AA_{j,J_0}\|_2^2  \cdot \kappa(\ddot\SSigma^{(\xx^\sharp)},L,1) \geq \frac{cn}{\log p}.\]
  Together with \eqref{eq:sd lower bound Ca}, we obtain the desired lower bound for $\min\limits_{j\in\mathcal{C}_{\rm a}}\hat{\sigma}_{j} \geq \sqrt{cn/\log p}$ w.p.a.1. We conclude the proof of Lemma \ref{lem:sd bound}.
 \end{proof}
The last lemma characterizes the FCP function $J_\lambda(\cdot)$.
\begin{lemma}\label{lem: FCP}
  Under Condition \ref{cond:penalty}, the penalty function $J_\lambda(\cdot)$ satisfies
  \begin{equation}\label{eq: FCP lower derivative}
    J_\lambda(t) \geq a_1\lambda |t|\text{ for any } t \in [-a\lambda, a\lambda],
  \end{equation}
  and
  \begin{equation}\label{eq: FCP bound lambda2}
    J_\lambda(t) \leq C\lambda^2 \text{ for any } t \in \mathbb{R}.
  \end{equation}
\end{lemma}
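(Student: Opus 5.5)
By symmetry $J_\lambda(-t)=J_\lambda(t)$, so it suffices to treat $t\ge 0$. Both bounds come from integrating the derivative bounds in Condition \ref{cond:penalty}, using $J_\lambda(0)=0$ and continuity.

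For \eqref{eq: FCP lower derivative}, fix $t\in(0,a\lambda]$. Since $J_\lambda$ is continuous on $[0,t]$ and differentiable on $(0,t)$, the mean value theorem gives some $\xi\in(0,t)$ with
\[
J_\lambda(t)=J_\lambda(0)+J_\lambda'(\xi)\,t=J_\lambda'(\xi)\,t .
\]
Because $0<\xi\le a\lambda$, Condition \ref{cond:penalty}(c) yields $J_\lambda'(\xi)\ge a_1\lambda$, hence $J_\lambda(t)\ge a_1\lambda t$. The case $t=0$ is trivial.

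There is an alternative route via concavity. Concavity on $[0,\infty)$ with $J_\lambda(0)=0$ gives $J_\lambda(t)\ge (t/(a\lambda))\,J_\lambda(a\lambda)$, and $J_\lambda(a\lambda)\ge a_1\lambda\cdot a\lambda$ by the same mean value step. Either route works.

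For \eqref{eq: FCP bound lambda2}, $J_\lambda$ is nondecreasing on $[0,\infty)$, and $J_\lambda'(t)=0$ for $t>b\lambda$. So $J_\lambda$ is constant on $[b\lambda,\infty)$, using continuity at $b\lambda$ together with the mean value theorem on $[b\lambda,t]$. Therefore $\sup_{t}J_\lambda(t)=J_\lambda(b\lambda)$.

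It remains to bound $J_\lambda(b\lambda)$. Concavity makes $J_\lambda'$ nonincreasing on $(0,\infty)$, so for $0<u\le b\lambda$ we have $J_\lambda'(u)\le \lim_{v\to0^+}J_\lambda'(v)$. This limit is at most $a_2\lambda$, since $J_\lambda'\le a_2\lambda$ on $(0,a\lambda]$ by Condition \ref{cond:penalty}(c). The mean value theorem on $[0,b\lambda]$ then gives
\[
J_\lambda(b\lambda)\le a_2\lambda\cdot b\lambda=a_2b\,\lambda^2 .
\]
We may take $C=a_2b$, an absolute constant.

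The only delicate point is justifying $J_\lambda'\le a_2\lambda$ on $(a\lambda,b\lambda]$, where Condition \ref{cond:penalty}(c) gives no direct bound. This is handled by the monotonicity of the derivative of a concave function, which is why concavity in Condition \ref{cond:penalty}(a) is used.
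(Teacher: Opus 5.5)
Your proof is correct and follows essentially the same route as the paper. For \eqref{eq: FCP lower derivative} it applies the mean value theorem with $J_\lambda(0)=0$; for \eqref{eq: FCP bound lambda2} it uses concavity to extend $J_\lambda'\le a_2\lambda$ from $(0,a\lambda]$ to $(0,b\lambda]$, then the mean value theorem on $[0,b\lambda]$ and the flatness of $J_\lambda$ beyond $b\lambda$, giving $C=a_2b$.
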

\begin{proof}[Proof of Lemma \ref{lem: FCP}] It suffices to show the bounds for $t\geq 0$; the case for $t<0$ follows similarly by considering $J_\lambda(-t)$.

For (\ref{eq: FCP lower derivative}), by the mean value theorem and $J_\lambda(0) = 0$, for any $t \in [0,a\lambda] $ we have \[J_\lambda(t) = J_\lambda(0) + J'_\lambda(\xi) t = J'_\lambda(\xi)t\] for some $\xi \in (0,t)$. Under Condition \ref{cond:penalty}, we have $J'_\lambda(\xi) \geq a_1 \lambda$ for $\xi \in [0, a\lambda]$, which immediately gives \eqref{eq: FCP lower derivative} for $0\leq t \leq a\lambda$.

For \eqref{eq: FCP bound lambda2}, as $J_\lambda(t)$ is concave we have $\max_{a\lambda \leq t \leq b\lambda  }J'_\lambda(t) \leq \max_{0\leq t \leq a\lambda }J'_\lambda(t)$, and therefore
\[ J'_\lambda(t) \leq a_2\lambda\text{ for all } 0< t\leq b\lambda.\]
Applying the mean value theorem again, we have
\[J_\lambda(t) = J_\lambda(0) + J'_\lambda(\xi) t \leq a_2\lambda t\leq a_2 b \lambda^2 \text{ for any } t \in [0, b\lambda].\]
In addition, since $J'_\lambda(t) = 0$ when $t > b\lambda$, we have
\begin{equation}\label{eq: J t flat}
  J_\lambda(t) = J_\lambda(b\lambda) \leq a_2 b \lambda^2\text{ for any $t > b\lambda$. }
\end{equation}  This completes the proof of \eqref{eq: FCP bound lambda2}.
\end{proof}

\subsection{Proofs of Main Results}

\subsubsection{Proof of Proposition \ref{prop: OLS}}
Define
\begin{equation}\label{eq: def T set}
  \mathcal{T} = \{j\in[p]:\QQ_{\mathcal{S},j} \neq \boldsymbol{0}_{|\mathcal{S}|}\}.
\end{equation}  By Assumption \ref{assu:cointegration}(b), $|\mathcal{T}| = O(s)$. Furthermore, define $\dd =  \RR_n\QQ^\top(\hat\ttheta^{\rm or} - \ttheta^*)$. Then $\{j\in[p]:d_j \neq 0\} \subset \mathcal{T}$, and thus $\|\dd\|_0 \leq |\mathcal{T}| = O(s)$.

We first show that
\begin{equation}\label{eq: quadratic equality}
  \dd^\top \hat\SSigma^\sharp \dd = \dd^\top \RR_n^{-1}\WW^{\sharp\top}\uu,
\end{equation}
where $\hat\SSigma^\sharp = \RR_n^{-1}\WW^{\sharp\top} \WW^\sharp \RR_n^{-1}$.
We can deduce that
\begin{equation}\label{eq: deduce equality 1}
  \begin{aligned}
   \dd^\top \hat\SSigma^\sharp \dd &= (\hat\ttheta^{\rm or} - \ttheta^*)^\top \QQ \WW^{\sharp\top} \WW^\sharp \QQ^\top (\hat\ttheta^{\rm or} - \ttheta^*) \\
   &= (\hat\ttheta^{\rm or} - \ttheta^*)^\top \WW^{\top} \WW (\hat\ttheta^{\rm or} - \ttheta^*) \\
    &= (\hat\ttheta^{\rm or} - \ttheta^*)_{\mathcal{S}}^\top \WW_{\mathcal{S}}^{\top} \WW (\hat\ttheta^{\rm or} - \ttheta^*), \\
  \end{aligned}
\end{equation}
where the second row applies $\WW^\sharp \QQ^\top = \WW$, and the third row follows from the fact that $(\hat\ttheta^{\rm or} - \ttheta^*)_{\mathcal{S}^c} = \boldsymbol{0}$. To step further, note that $(\hat\ttheta^{\rm or} - \ttheta^*)_{\mathcal{S}^c} = \boldsymbol{0}$, and thus
\[\begin{aligned}
  \WW_{\mathcal{S}}^\top(\uu - \WW(\hat\ttheta^{\rm or} - \ttheta^*))  &= \WW_{\mathcal{S}}^\top\uu - \WW_{\mathcal{S}}^\top \WW_{\mathcal{S}}(\hat\ttheta^{\rm or} - \ttheta^*)_{\mathcal{S}} \\
  &=  \WW_{\mathcal{S}}^\top\uu - \WW_{\mathcal{S}}^\top \WW_{\mathcal{S}}(\WW_{\mathcal{S}}^\top \WW_{\mathcal{S}})^{-1} \WW_{\mathcal{S}}^\top\uu = \boldsymbol{0}_{|\mathcal{S}|},
\end{aligned}\]
which implies
\begin{equation}\label{eq: deduce equality 2}
    \WW_{\mathcal{S}}^\top\WW(\hat\ttheta^{\rm or} - \ttheta^*) = \WW_{\mathcal{S}}^\top\uu.
\end{equation}
Together with \eqref{eq: deduce equality 1}, we have
\[\begin{aligned}
   \dd^\top \hat\SSigma^\sharp \dd &= (\hat\ttheta^{\rm or} - \ttheta^*)_{\mathcal{S}}^\top \WW_{\mathcal{S}}^{\top} \uu \\
     &= (\hat\ttheta^{\rm or} - \ttheta^*)^\top \WW^{\top} \uu \\
     &= \left[(\hat\ttheta^{\rm or} - \ttheta^*)^\top \QQ \RR_n\right]\RR_n^{-1}\QQ^{-1}\WW^{\top} \uu = \dd^\top \RR_n^{-1}\WW^{\sharp\top}\uu,\\
\end{aligned}\]
where the first row follows from \eqref{eq: deduce equality 2}, the second row uses the fact that $(\hat\ttheta^{\rm or} - \ttheta^*)_{\mathcal{S}^c} = \boldsymbol{0}$, and the last step applies $\WW^\sharp \QQ^\top = \WW$. Then \eqref{eq: quadratic equality} is proved, which implies
\begin{equation}\label{eq:bound quadratic}
  \dd^\top \hat\SSigma^\sharp \dd \leq \|\dd\|_1 \|\RR_n^{-1}\WW^{\sharp\top}\uu\|_\infty = O_p\left(\sqrt{s}\|\dd\|_2\cdot(\log p)^{1+1/r}\right),
\end{equation}
where the last step applies $\|\dd\|_1 \leq \sqrt{|\mathcal{T}|}\cdot \|\dd\|_2 = O(\sqrt{s}\|\dd\|_2)$ and the DB condition in Lemma \ref{lem:DB}. In addition, since $\|\dd\|_0 = O(s)$, by the RE condition in Lemma \ref{lem:RE xz} we have $
\dd^\top \hat\SSigma^\sharp \dd \geq c \|\dd\|_2^2 / (s\log p)$ for some absolute constant $c$. Together with \eqref{eq:bound quadratic}, this implies that
\[
\|\dd\|_2^2 = O_p\left( s\log p \cdot \|\dd\|_2\cdot\sqrt{s}(\log p)^{1+1/r} \right),
\]
and thus
\[
\|\RR_n\QQ^\top(\hat\ttheta^{\rm or} - \ttheta^*)\|_2 = \|\dd\|_2 = O_p\left( s^{3/2}(\log p)^{2+1/r}\right).
\]
By the fact that $\|\dd\|_0 = O(s)$, we also have
\[
\|\RR_n\QQ^\top(\hat\ttheta^{\rm or} - \ttheta^*)\|_1 = O\left(\sqrt{s}\|\dd\|_2\right) = O_p\left( s^{2}(\log p)^{2+1/r} \right).
\]
This completes the proof of (\ref{eq: OLS1}). 

For (\ref{eq: OLS2}) when $q=2$, we first prove the following inequality: 
\begin{equation}\label{eq: dd_subset_bound}
  \|\dd_{\mathcal{M}_z \cup \mathcal{C}_{\rm a}}\|_2^2 \leq C \dd^\top \hat\SSigma^\sharp \dd 
\end{equation}
w.p.a.1. Note that 
\[\dd^\top \hat\SSigma^\sharp \dd = \dd_{\mathcal{M}_z \cup \mathcal{C}_{\rm a}}^\top \hat\SSigma^{(\zz^{\sharp})}  \dd_{\mathcal{M}_z \cup \mathcal{C}_{\rm a}} + \frac{1}{n}\dd_{(\mathcal{M}_z \cup \mathcal{C}_{\rm a})^c}^\top \hat\SSigma^{(\xx^{\sharp})}   \dd_{(\mathcal{M}_z \cup \mathcal{C}_{\rm a})^c} + 2  \dd_{\mathcal{M}_z \cup \mathcal{C}_{\rm a}}^\top \frac{1}{n^{3/2}}\sum_{t=1}^n \zz_{t-1}^{\sharp}\xx_{t-1}^{\sharp\top} \dd_{(\mathcal{M}_z \cup \mathcal{C}_{\rm a})^c}.\] 
Given that $\zz_t^\sharp$ is stationary, subexponential and $\alpha$-mixing with exponentially decaying mixing coefficients, applying standard concentration inequalities yields $\|\hat\SSigma^{(\zz^\sharp)} - \SSigma^{(\zz^\sharp)}\|_{\max} \lep \sqrt{\log p/n}$  with $\SSigma^{(\zz^\sharp)} = \mathbb{E}[\zz_t^\sharp \zz_t^{\sharp\top}]$, and thus 
\[\begin{aligned}
  \dd_{\mathcal{M}_z \cup \mathcal{C}_{\rm a}}^\top \hat\SSigma^{(\zz^{\sharp})}  \dd_{\mathcal{M}_z \cup \mathcal{C}_{\rm a}} &\geq  \dd_{\mathcal{M}_z \cup \mathcal{C}_{\rm a}}^\top  \SSigma^{(\zz^{\sharp})}  \dd_{\mathcal{M}_z \cup \mathcal{C}_{\rm a}} - \|\dd_{\mathcal{M}_z \cup \mathcal{C}_{\rm a}}\|_1^2\cdot \|\hat\SSigma^{(\zz^\sharp)} - \SSigma^{(\zz^\sharp)}\|_{\max} \\
  &\geq c \|\dd_{\mathcal{M}_z \cup \mathcal{C}_{\rm a}}\|_2^2 - \|\dd\|_1^2\cdot C \sqrt{\log p/n}  
\end{aligned}\]
w.p.a.1. Therefore, 
\[\begin{aligned}
  \dd^\top \hat\SSigma^\sharp \dd &\geq c \|\dd_{\mathcal{M}_z \cup \mathcal{C}_{\rm a}}\|_2^2 - \|\dd\|_1^2\cdot \left(\left\|n^{-3/2}\sum_{t=1}^n \zz_{t-1}^{\sharp}\xx_{t-1}^{\sharp\top}\right\|_{\max}+C\sqrt{\log p/n}\right) \\
  &\geq c \|\dd_{\mathcal{M}_z \cup \mathcal{C}_{\rm a}}\|_2^2 - O_p(\|\dd\|_2^2 \cdot s(\log p)^{1+\frac{1}{2r}} / \sqrt{n} )\\
  &\geq c \|\dd_{\mathcal{M}_z \cup \mathcal{C}_{\rm a}}\|_2^2 - O_p( \dd^\top \hat\SSigma^\sharp \dd \cdot s^2(\log p)^{2+\frac{1}{2r}} / \sqrt{n} ) \\
\end{aligned}\]
w.p.a.1, where the second row applies $\|\dd\|_1^2 = O(s\|\dd\|_2^2)$ and Lemma \ref{lem:DB}, and the last row applies the RE condition $\dd^\top \hat\SSigma^\sharp\dd \geq c\|\dd\|_2^2/(s\log p)$ w.p.a.1 by Lemma \ref{lem:RE xz}. Given that $s^2(\log p)^{2+\frac{1}{2r}} / \sqrt{n} \to 0$, we have 
\[
   c\|\dd_{\mathcal{M}_z \cup \mathcal{C}_{\rm a}}\|_2^2 \leq C \dd^\top \hat\SSigma^\sharp \dd
\]
w.p.a.1, and therefore (\ref{eq: dd_subset_bound}) holds. 

Also, (\ref{eq:bound quadratic}) together with the RE condition $\|\dd\|_2 \lep \sqrt{s\log p\cdot\dd^\top \hat\SSigma^\sharp \dd}$ implies that 
\[\dd^\top \hat\SSigma^\sharp\dd  \lep  {s^2 (\log p)^{3+\frac{2}{r}}},\]
which together with (\ref{eq: dd_subset_bound}) implies that 
\[\|\dd_{\mathcal{M}_z \cup \mathcal{C}_{\rm a}}\|_2^2 \lep s^2 (\log p)^{3+\frac{2}{r}}.\]
Therefore, 
\begin{equation}\label{eq:or_rate 1}
  \|\hat\ggamma^{\rm or} - \ggamma^*\|_2 +  \|\hat\pphi^{\rm or,a} - \pphi^{\rm a*}\|_2 \leq \frac{2\|\dd_{\mathcal{M}_z \cup \mathcal{C}_{\rm a}}\|_2}{\sqrt{n}} \lep \frac{s (\log p)^{3/2+\frac{1}{r}}}{\sqrt{n}}.
\end{equation} 
By definition, $ 
\hat\pphi^{\rm or,b}-\pphi^{\rm b*}
=
\frac{\dd_{\mathcal C_{\rm b}}}{n}
-\AA^\top(\hat\pphi^{\rm or,a}-\pphi^{\rm a*}).$ By the fact that the oracle estimation error is zero outside \(\mathcal S\), we have 
\[
\begin{aligned}
\|\AA^\top(\hat\pphi^{\rm or,a}-\pphi^{\rm a*})\|_2
&\leq
\sum_{i\in\mathcal S\cap\mathcal C_{\rm a}} \|\AA_{i,\cdot}\|_2\,|\hat\pphi_i^{\rm or,a}-\pphi_i^{\rm a*}| \leq
\max_i\|\AA_{i,\cdot}\|_1
\sqrt{s}\,
\|\hat\pphi^{\rm or,a}-\pphi^{\rm a*}\|_2\\
&\lep  \frac{s^{3/2}(\log p)^{3/2+1/r}}{\sqrt n}.
\end{aligned} 
\] 
Then 
\begin{equation}\label{eq:or_rate 2}
  \|\hat\pphi^{\rm or,b} - \pphi^{\rm b*}\|_2 \lep  \frac{s^{3/2}(\log p)^{3/2+1/r}}{\sqrt n}.
\end{equation} 
Finally, (\ref{eq: OLS1}) under $q=2$ implies that 
\begin{equation}
  \sqrt{n}\|\hat\bbeta^{\rm or} - \bbeta^*\|_2 \lep \frac{s^{3/2} (\log p)^{2+\frac{1}{r}}}{\sqrt{n}}.
\end{equation}
Then (\ref{eq: OLS2}) is verified when $q=2$ by (\ref{eq:or_rate 1}) and (\ref{eq:or_rate 2}).

For $q=1$, the $\ell_1$-norm convergence rate in (\ref{eq: OLS1}) implies
\[\begin{aligned}
  &\|\hat\ggamma^{\rm or}-\ggamma^*\|_1 + \|\hat\pphi^{\rm or,a}-\pphi^{\rm a*}\|_1 \\
  &+ \sqrt{n}\|\AA^\top\hat\pphi^{\rm or,a}+\hat\pphi^{\rm or,b}-\AA^\top\pphi^{\rm a*}-\pphi^{\rm b*}\|_1 +\sqrt{n}\|\hat\bbeta^{\rm or} - \bbeta^*\|_1 \\
  \lep\, &\frac{s^{2}(\log p)^{2+\frac{1}{r}}}{\sqrt{n}}.
\end{aligned}
\]
It also implies that
\[\begin{aligned}\|\hat\pphi^{\rm or,b} - \pphi^{\rm b*}\|_1 &\leq \|\AA^\top\|_1 \|\hat\pphi^{\rm or,a}-\pphi^{\rm a*}\|_1 + \frac{s^{2}(\log p)^{2+\frac{1}{r}}}{n}  \\
&\lep \frac{s^{2}(\log p)^{2+\frac{1}{r}}}{\sqrt{n}}.\end{aligned}\]
In terms of prediction consistency, note that
\[\mathbb{E}\left[\left(\mathbb{E} [ u_{n+1} |\mathcal{F}_n ]\right)^2\right] \leq \mathbb{E}\left[\mathbb{E}\left[ u_{n+1}^2 |\mathcal{F}_n\right]\right] = \mathbb{E}[u_{n+1}^2] = O(1),\]
where the last step applies the fact that  $u_{n+1} = \sum_{\ell = 1}^{p + 1} \Phi_{1,\ell} \varepsilon_{\ell,n+1}$ is sub-exponential.  Therefore, by the Markov inequality we have
\begin{equation}\label{eq:conditional-mean-u}
  \mathbb{E}(u_{n+1}|\mathcal{F}_n) = O_p(1).
\end{equation}
In addition, note that
\begin{align*}
  &\left(y_{n+1} - \ww_{n}^\top\hat\ttheta^{\rm or}\right)^2 - u_{n+1}^2 \\
  ={}& \left(\ww_{n}^\top(\ttheta^* - \hat \ttheta^{\rm or})\right)^2 + 2u_{n+1} \ww_{n}^\top(\ttheta^* - \hat \ttheta^{\rm or}) \\
  ={}& \left((\RR_n^{-1}\ww_{n}^{\sharp})^\top \RR_n\QQ^\top(\hat\ttheta^{\rm or} - \ttheta^*) \right)^2- 2u_{n+1}(\RR_n^{-1}\ww_{n}^{\sharp})^\top\RR_n\QQ^\top(\hat\ttheta^{\rm or} - \ttheta^*).
\end{align*}
We first bound $\RR_n^{-1}\ww_n^{\sharp}$. Note that $\zz_n^\sharp$ is a vector of sub-exponential random variables, and thus $\|\zz_n^\sharp\|_\infty = O_p(\log p)$. Together with the bound of (\ref{eq: x sharp upper bound}), we have 
\begin{equation}\label{eq: R w bound}
\|\RR_n^{-1}\ww_n^{\sharp}\|_\infty = O_p\left(\frac{\log p}{\sqrt{n}}\right).
\end{equation} 
Therefore,
\begin{align*}
  &\left|\mathbb{E}\left[  (y_{n+1} - \ww_{n}^\top\hat\ttheta^{\rm or})^2 - u_{n+1}^2 \, \big |\, \mathcal{F}_n\right]\right|\\
  \leq{}& \|\RR_n^{-1} \ww_n^{\sharp}\|_\infty^2 \| \RR_n\QQ^\top(\hat\ttheta^{\rm or} - \ttheta^*)\|_1^2  + 2|\mathbb{E}(u_{n+1}|\mathcal{F}_n)|\cdot \|\RR_n^{-1} \ww_n^{\sharp}\|_\infty \| \RR_n\QQ^\top(\hat\ttheta^{\rm or} - \ttheta^*)\|_1 \\
  ={}&O_p\left( \dfrac{(\log p)^2}{n} \right) \cdot O_p\left[\left( s^2(\log p)^{2+\frac{1}{r}}\right)^2\right] + O_p\left(1\right)\cdot O_p\left(\frac{\log p}{\sqrt{n}}\right) \cdot O_p\left(s^2(\log p)^{2+\frac{1}{r}}\right) \\
  ={}& o_p(1),
\end{align*}
where the third row applies (\ref{eq:conditional-mean-u}), (\ref{eq: R w bound}), and (\ref{eq: OLS1}) with $q=1$.
This completes the proof of Proposition \ref{prop: OLS}.

\subsubsection{Proof of Theorem \ref{thm: global}}

Define $\mathcal{B}=\{j\in\mathcal{S}^{c}:\hat\sigma_j|\tilde{\theta}_{j}|>a\lambda\}$.
When $\hat\sigma_j|\tilde{\theta}_{j}|>a\lambda$, by Condition \ref{cond:penalty} and (\ref{eq: FCP lower derivative}) in Lemma \ref{lem: FCP}, we have
\[
J_{\lambda}(\hat\sigma_j\tilde{\theta}_{j}) \geq J_\lambda(a\lambda)  \geq a_1\lambda\cdot a\lambda = aa_1\lambda^2.
\]
Then
\[
|\mathcal{B}|\lambda^{2}\leq \frac{1}{aa_1}\sum_{j\in\mathcal{B}}J_{\lambda}(\hat\sigma_j\tilde{\theta}_{j})\leq \frac{\mu}{aa_1}\ \Rightarrow\ |\mathcal{B}|\leq\dfrac{\mu}{aa_1\lambda^{2}} = \dfrac{Cs\lambda^2}{ \lambda^{2}} = Cs.
\]
Define $\mathcal{A} = \mathcal{S}\cup\mathcal{B}$, then
\begin{equation}\label{eq: A set property}
  |\mathcal{A}| \leq(C+1)s\text{  and  }\mathcal{A}^{c}=\{j\in\mathcal{S}^{c}:\hat\sigma_j|\tilde{\theta}_{j}|\leq a\lambda\}.
\end{equation}
Furthermore, define $\DDelta=(\Delta_1,\Delta_2,\ldots,\Delta_p)^\top$, where $\Delta_j = \tilde\theta_j-\theta^{*}_j$. For any $j\in\mathcal{A}^c$, $\hat\sigma_j|\Delta_{j}|=\hat\sigma_j|\tilde{\theta}_{j}|\leq a\lambda$. Also, by Condition \ref{cond:penalty}, when $\hat\sigma_j|\tilde{\theta}_{j}|\leq a\lambda$, we have $J_{\lambda}(\hat\sigma_j\tilde{\theta}_{j}) \geq a_1\lambda\cdot \hat\sigma_j|\tilde{\theta}_{j}|$.
Then
\[
\lambda\sum_{j\in\mathcal{A}^c}\hat\sigma_j|\Delta_{j}|=\lambda\sum_{j\in\mathcal{A}^c}\hat\sigma_j|\tilde{\theta}_{j}|\leq \frac{1}{a_1}\sum_{j\in\mathcal{A}^c}J_{\lambda}(\hat\sigma_j \tilde{\theta}_{j})\leq \dfrac{\mu}{a_1}.
\]
Therefore,
\begin{equation}\label{eq: sum A bound}
\sum_{j\in\mathcal{A}^c}\hat\sigma_j|\Delta_{j}|\leq\dfrac{\mu}{a_1\lambda} = Cs\lambda.
\end{equation}
Define $\mathcal{J} = \{j\in[p]:\QQ_{\mathcal{A},j} \neq \boldsymbol{0}_{|\mathcal{A}|}\}$. Assumption \ref{assu:cointegration}(b) implies that $|\mathcal{J}|= O(s)$. By definition, $\QQ_{\mathcal{A},\mathcal{J}^c} = \boldsymbol{0}_{|\mathcal{A}|\times |\mathcal{J}^c|}$, and thus $(\RR_n\QQ^\top)_{\mathcal{J}^c,\mathcal{A}} = \boldsymbol{0}_{|\mathcal{J}^c|\times |\mathcal{A}|}$. Define
\begin{equation}\label{eq: Delta sharp def}
  \tilde\DDelta^\sharp = n^{-1/2}\RR_n\QQ^\top\DDelta,
\end{equation}  
and $\DD=\operatorname{diag}(\hat\sigma_1,\ldots,\hat\sigma_p)$. Then 
\begin{align}
  \|\tilde\DDelta^\sharp_{\mathcal{J}^c}\|_1 &= n^{-1/2}\|(\RR_n\QQ^\top\DDelta)_{\mathcal{J}^c}\|_1 = n^{-1/2}\|(\RR_n\QQ^\top)_{\mathcal{J}^c,\mathcal{A}^c}\DDelta_{\mathcal{A}^c}\|_1  \nonumber \\
  &=n^{-1/2}\|(\RR_n\QQ^\top\DD^{-1})_{\mathcal{J}^c,\mathcal{A}^c}(\DD\DDelta)_{\mathcal{A}^c}\|_1  \leq n^{-1/2}\|(\RR_n\QQ^\top\DD^{-1})_{\mathcal{J}^c,\mathcal{A}^c}\|_1 \|(\DD\DDelta)_{\mathcal{A}^c}\|_1 \nonumber \\
  &= \max_{j\in\mathcal{A}^c}\frac{\sum_{i\in\mathcal{J}^c}  (\RR_n)_{i,i}|Q_{j,i}|}{\sqrt{n}\hat\sigma_j} \left(\sum_{j\in\mathcal{A}^c}\hat\sigma_j|\Delta_{j}|\right),\label{eq: sum Ac bound 0}
\end{align}
 By the definition of $\RR_n$ in Proposition \ref{prop: OLS} and that of $\QQ$ in (\ref{eq: rotation}), we have
\[
\begin{aligned}
    \frac{\sum_{i\in\mathcal{J}^c}  (\RR_n)_{i,i}|Q_{j,i}|}{\sqrt{n}\hat\sigma_j} \leq\, &\frac{1}{\hat\sigma_j}, &j\in\mathcal{M}_z, \\
  \frac{\sum_{i\in\mathcal{J}^c}  (\RR_n)_{i,i}|Q_{j,i}|}{\sqrt{n}\hat\sigma_j} \leq\, &\frac{1+\sqrt n\,\max_{k}\|A_{k,\cdot}\|_1}{\hat\sigma_j}, &j\in\mathcal{C}_{\rm a}, \\
    \frac{\sum_{i\in\mathcal{J}^c}  (\RR_n)_{i,i}|Q_{j,i}|}{\sqrt{n}\hat\sigma_j} \leq\,&\frac{\sqrt{n}}{\hat\sigma_j}, &j\in\mathcal{C}_{\rm b}\cup \mathcal{M}_x. \\
\end{aligned}
\]
Lemma \ref{lem:sd bound} implies that w.p.a.1
\[
\min_{j\in\mathcal{M}_z}\hat\sigma_j \geq c\text{, and }\min_{j\in\mathcal{M}_x\cup\mathcal{C}_{\rm a}\cup\mathcal{C}_{\rm b}}\hat\sigma_j \geq \frac{c\sqrt{n}}{\sqrt{\log p}}
\]
for some absolute constant $c$. Therefore,
\[
\max_{j\in\mathcal{A}^c}\frac{\sum_{i\in\mathcal{J}^c}  (\RR_n)_{i,i}|Q_{j,i}|}{\sqrt{n}\hat\sigma_j} \leq C\sqrt{\log p}
\]
w.p.a.1, which together with (\ref{eq: sum A bound}) and (\ref{eq: sum Ac bound 0}) implies that
\begin{equation}\label{eq: sum Ac bound}
  \|\tilde\DDelta^\sharp_{\mathcal{J}^c}\|_1 \leq C\sqrt{\log p} \left(\sum_{j\in\mathcal{A}^c}\hat\sigma_j|\Delta_{j}|\right) \leq C\sqrt{\log p}\cdot s\lambda.
\end{equation}
If
\begin{equation}\label{eq: sum A bound case 1}
  \|\tilde\DDelta^\sharp_{\mathcal{J}}\|_1 \leq \frac{C\sqrt{\log p}}{3} \times s\lambda,
\end{equation}
we have
\begin{equation}\label{eq: sum all bound case 1}
 \|\tilde\DDelta^\sharp\|_2 \leq \|\tilde\DDelta^\sharp\|_1 = \|\tilde\DDelta^\sharp_{\mathcal{J}}\|_1 + \|\tilde\DDelta^\sharp_{\mathcal{J}^c}\|_1 \leq \frac{4C\sqrt{\log p}}{3} \times s\lambda,
\end{equation}
which implies consistency. Otherwise, when (\ref{eq: sum A bound case 1}) does not hold, i.e.~$\|\tilde\DDelta^\sharp_{\mathcal{J}}\|_1 > \frac{C\sqrt{\log p}}{3} \times s\lambda$, combining (\ref{eq: sum Ac bound}) we have
\begin{equation}\label{eq: cone condition 1}
\|\tilde\DDelta^\sharp_{\mathcal{J}^c}\|_1 \leq 3\|\tilde\DDelta^\sharp_{\mathcal{J}}\|_1.
\end{equation}
That means the $\ell_{1}$-norm of the scaled rotated estimation error $\tilde\DDelta^\sharp$
is controlled by its subvector indexed by $\mathcal{J}$ with cardinality $|\mathcal{J}|=O(s)$. Then we can use the restricted eigenvalue condition.

By (\ref{eq: FCP bound lambda2}) in Lemma \ref{lem: FCP}, we have $\sum_{j=1}^p J_\lambda(\hat\sigma_j\theta^*_j)\leq s\cdot \sup_x J_\lambda(x) \leq Cs\lambda^2 = \mu$, so the true coefficient $\ttheta^*$ is feasible for the constrained optimization problem. Therefore, by the
fact that $\tilde\ttheta$ is the global minimizer,
\[\frac{1}{n}\|\YY-\WW\tilde\ttheta\|_2^2 + \sum_{j=1}^p J_\lambda(\hat\sigma_j\tilde\theta_j) \leq \frac{1}{n}\|\YY-\WW\ttheta^*\|_2^2 + \sum_{j=1}^p J_\lambda(\hat\sigma_j\theta^*_j).\]
Rearranging the terms, we obtain
\[\DDelta^\top \frac{\WW^\top \WW}{n}\DDelta \leq \frac{2\uu^\top\WW\DDelta}{n} +  \sum_{j=1}^p \left[J_\lambda(\hat\sigma_j\theta^*_j) - J_\lambda(\hat\sigma_j\tilde\theta_j)\right].\]
Define $\tilde \WW^\sharp = \sqrt{n}\WW^\sharp\RR_n^{-1}$, and note that $\WW\DDelta = \tilde\WW^\sharp \tilde\DDelta^\sharp$. Therefore,
\begin{equation}\label{eq: upper bound 1}
  \begin{aligned}
  \tilde\DDelta^{\sharp\top} \frac{\tilde\WW^{\sharp\top} \tilde\WW^\sharp}{n}\tilde\DDelta^\sharp     &\leq \frac{2\uu^\top\tilde\WW^\sharp\tilde\DDelta^\sharp}{n} +  \sum_{j=1}^p \left[J_\lambda(\hat\sigma_j\theta^*_j) - J_\lambda(\hat\sigma_j\tilde\theta_j)\right] \\
  &\leq 2\left\|\frac{\tilde\WW^{\sharp\top}\uu}{n}\right\|_\infty\cdot\|\tilde\DDelta^\sharp\|_1 +  \sum_{j=1}^p \left[J_\lambda(\hat\sigma_j\theta^*_j) - J_\lambda(\hat\sigma_j\tilde\theta_j)\right] \\
  &\leq 2\left\|\frac{\tilde\WW^{\sharp\top}\uu}{n}\right\|_\infty\cdot\|\tilde\DDelta^\sharp\|_1 +  2\mu. \\
  \end{aligned}
\end{equation}
By the RE condition in Lemma \ref{lem:RE xz}, we have $ \tilde\DDelta^{\sharp\top} \frac{\tilde\WW^{\sharp\top} \tilde\WW^\sharp}{n}\tilde\DDelta^\sharp  \geq \frac{c}{s\log p} \|\tilde\DDelta^\sharp\|_2^2$,  and thus
\[\|\tilde\DDelta^\sharp\|_2^2 \leq \frac{s\log p}{c} \left( 2\left\|\frac{\tilde\WW^{\sharp\top}\uu}{n}\right\|_\infty\cdot\|\tilde\DDelta^\sharp\|_1 +  2\mu \right)\leq \frac{s\log p}{c} \left(  \left\|\frac{\tilde\WW^{\sharp\top}\uu}{n}\right\|_\infty C\sqrt{s}\|\tilde\DDelta^\sharp\|_2 +  2\mu \right),\]
where the last inequality uses the fact that
$$\begin{aligned}
  \|\tilde\DDelta^\sharp\|_1 & = \sum_{j=1}^p |\tilde\DDelta^\sharp_j|
  = \sum_{j \in \mathcal{J}} |\tilde\DDelta^\sharp_j| + \sum_{j \notin \mathcal{J}} |\tilde\DDelta^\sharp_j|
  \leq \sum_{j \in \mathcal{J}} |\tilde\DDelta^\sharp_j| + 3 \sum_{j \in \mathcal{J}} |\tilde\DDelta^\sharp_j| \\
  &= 4 \sum_{j \in \mathcal{J}} |\tilde\DDelta^\sharp_j|  \leq 4\sqrt{|\mathcal{J}|} \|\tilde\DDelta^\sharp\|_2 \leq C  \sqrt{s}\|\tilde\DDelta^\sharp\|_2 .
\end{aligned} $$
By the DB condition in (\ref{eq:DB u sharp}) of Lemma \ref{lem:DB}, we have $\left\|\frac{\tilde\WW^{\sharp\top}\uu}{n}\right\|_\infty \leq C\frac{(\log p)^{1+1/r}}{\sqrt{n}}$. Therefore,
\[\begin{aligned}
  \|\tilde\DDelta^\sharp\|_2^2 &\leq \frac{s\log p}{c} \left( C^2 \cdot \frac{\sqrt{s}(\log p)^{1+1/r}}{\sqrt{n}}\cdot\|\tilde\DDelta^\sharp\|_2 +  2\mu \right)\\
  &\leq \frac{s\log p}{c} \left( \frac{C^4 s^2 (\log p)^{3+2/r}}{2cn} + \frac{c}{2s\log p}\|\tilde\DDelta^\sharp\|_2^2 +  2\mu \right) \\
  &\leq   \frac{C^4 s^3 (\log p)^{4+2/r}}{2c^2 n} + \frac{1}{2}\|\tilde\DDelta^\sharp\|_2^2 +  \frac{2s\log p}{c}\cdot \mu,
\end{aligned}\]
and therefore
\begin{equation} \label{eq: final bound L2}
  \|\tilde\DDelta^\sharp\|_2^2 \leq \frac{C^4 s^3 (\log p)^{4+2/r}}{c^2 n} +   \frac{4s\log p}{c}\cdot \mu = O\left(\frac{s^3 (\log p)^{4+2/r}}{n} +  s^2\log p\cdot \lambda^2 \right)
\end{equation}
when $\mu = O\left(s\lambda^2\right)$. It follows that
\begin{equation}\label{eq: final bound}
  \|\tilde\DDelta^\sharp\|_1 = O\left(\sqrt{s}\|\tilde\DDelta^\sharp\|_2\right) = O\left(\frac{s^2 (\log p)^{2+1/r}}{\sqrt{n}} +  s^{3/2}\sqrt{\log p}\cdot \lambda \right).
\end{equation}
The bound in (\ref{eq: final bound}) dominates that in (\ref{eq: sum all bound case 1}) and provides the rate of convergence for the $\ell_1$-norm of $\tilde\DDelta^\sharp$.

The $\ell_1$ and $\ell_2$ bounds for $\sqrt{n}\tilde\DDelta^\sharp = \RR_n\QQ^\top(\tilde\ttheta-\ttheta^*)$ in Theorem \ref{thm: global} are obtained by (\ref{eq: sum all bound case 1}), (\ref{eq: final bound L2}) and (\ref{eq: final bound}) when $\lambda = O\left(\frac{(\log p)^{1+1/r}}{\sqrt{n}}\right)$. The proof of prediction error follows directly from the arguments in the proof of  Proposition \ref{prop: OLS}. 

It remains to show the bounds of $\tilde\ggamma$, $\tilde\pphi$, and $\tilde\bbeta$. When $q=1$, the $\ell_1$-norm convergence rate of $\tilde\DDelta^\sharp$ implies
\[\begin{aligned}
  &\|\tilde\ggamma-\ggamma^*\|_1 + \|\tilde\pphi^{\rm a}-\pphi^{\rm a*}\|_1 \\
  &+ \sqrt{n}\|\AA^\top\tilde\pphi^{\rm a}+\tilde\pphi^{\rm b}-\AA^\top\pphi^{\rm a*}-\pphi^{\rm b*}\|_1 +\sqrt{n}\|\tilde\bbeta - \bbeta^*\|_1 \\
\lep  \,&\frac{s^{2}(\log p)^{2+\frac{1}{r}}}{\sqrt{n}}.
\end{aligned}
\]
It also implies that
\[\begin{aligned}\|\tilde\pphi^{\rm b} - \pphi^{\rm b*}\|_1 &\lep  \|\AA^\top\|_1 \|\tilde\pphi^{\rm a}-\pphi^{\rm a*}\|_1 + \frac{s^{2}(\log p)^{2+\frac{1}{r}}}{n}  \lep  \frac{s^{2}(\log p)^{2+\frac{1}{r}}}{\sqrt{n}}.\end{aligned}\]
For $q=2$, if (\ref{eq: sum A bound case 1}) holds, by the $\ell_1$ bound in (\ref{eq: sum all bound case 1}) we have 
\[\begin{aligned}
  &\|\tilde\ggamma-\ggamma^*\|_2 + \|\tilde\pphi^{\rm a}-\pphi^{\rm a*}\|_2 \\
  &+ \sqrt{n}\|\AA^\top\tilde\pphi^{\rm a}+\tilde\pphi^{\rm b}-\AA^\top\pphi^{\rm a*}-\pphi^{\rm b*}\|_2 +\sqrt{n}\|\tilde\bbeta - \bbeta^*\|_2 \\
\leq\,& \|\tilde\ggamma-\ggamma^*\|_1 + \|\tilde\pphi^{\rm a}-\pphi^{\rm a*}\|_1 \\
  &+ \sqrt{n}\|\AA^\top\tilde\pphi^{\rm a}+\tilde\pphi^{\rm b}-\AA^\top\pphi^{\rm a*}-\pphi^{\rm b*}\|_1 +\sqrt{n}\|\tilde\bbeta - \bbeta^*\|_1 \\
\lep  \,&\frac{s(\log p)^{3/2+\frac{1}{r}}}{\sqrt{n}},
\end{aligned}
\]
and 
\[\begin{aligned}\|\tilde\pphi^{\rm b} - \pphi^{\rm b*}\|_2 &\lep  \|\AA^\top\|_1 \|\tilde\pphi^{\rm a}-\pphi^{\rm a*}\|_1 + \frac{s (\log p)^{3/2+\frac{1}{r}}}{\sqrt{n}} \lep  \frac{s (\log p)^{3/2+\frac{1}{r}}}{\sqrt{n}}.\end{aligned}\]
This establishes the desired $\ell_2$ bound. 
Otherwise, (\ref{eq: cone condition 1}) holds, and therefore $\|\tilde\DDelta^\sharp\|_1^2 = O(s\|\tilde\DDelta^\sharp\|_2^2)$. Following the same arguments for (\ref{eq: dd_subset_bound}), we can show that 
\begin{equation}\label{eq: dd_subset_bound_2}
  \|\tilde\DDelta^\sharp_{\mathcal{M}_z \cup \mathcal{C}_{\rm a}}\|_2^2 \leq C \tilde\DDelta^{\sharp\top}  \hat\SSigma^\sharp \tilde\DDelta^\sharp  
\end{equation}
w.p.a.1.  In addition, the rightmost upper bound in (\ref{eq: upper bound 1}) together with the DB condition in Lemma \ref{lem:DB}, the $\ell_1$ bound in (\ref{eq: final bound}), and $\mu = Cs\lambda^2$ with $\lambda = C(\log p)^{1+1/r}/\sqrt{n}$ suggests that 
\[\tilde\DDelta^{\sharp\top}  \hat\SSigma^\sharp \tilde\DDelta^\sharp \lep \frac{s^2 (\log p)^{3+\frac{2}{r}}}{n}.\] 
Together with (\ref{eq: dd_subset_bound_2}), we obtain 
\begin{equation}\label{eq:l2 st}
\|\tilde\ggamma-\ggamma^*\|_2 + \|\tilde\pphi^{\rm a}-\pphi^{\rm a*}\|_2 \lep \frac{s (\log p)^{3/2+\frac{1}{r}}}{\sqrt{n}}.
\end{equation} 
In addition, note that the set $\mathcal{A}$ in (\ref{eq: A set property}) satisfies $|\mathcal{A}|=O(s)$, and by (\ref{eq: sum A bound}) and the bounds of standard deviations in Lemma \ref{lem:sd bound} we have
\[\sum_{j\in\mathcal{A}^c}|\Delta_j| \lep s\lambda.\]
Therefore, 
\begin{align}
  \|\AA^\top(\tilde\pphi^{\rm a}-\pphi^{\rm a*})\|_2 &\lep \|(\AA_{ \mathcal{A},\cdot})^\top(\tilde\pphi^{\rm a}-\pphi^{\rm a*})_{\mathcal{A}}\|_2+ \|(\AA_{ \mathcal{A}^c,\cdot})^\top(\tilde\pphi^{\rm a}-\pphi^{\rm a*})_{\mathcal{A}^c}\|_2 \notag\\
  &\leq \sum_{i\in\mathcal{A}}\|\AA_{i,\cdot}\|_2 \cdot  |(\tilde\pphi^{\rm a}_i-\pphi^{\rm a*}_i)| +  \max_{i\in\mathcal{A}^c}\|\AA_{i,\cdot}\|_1 \cdot \|(\tilde\pphi^{\rm a}-\pphi^{\rm a*})_{\mathcal{A}^c}\|_1 \label{eq: Aphi bound}\\
  &\leq \max_{i\in\mathcal{A}}\|\AA_{i,\cdot}\|_1 \cdot O(\sqrt{s} \|\tilde\pphi^{\rm a}-\pphi^{\rm a*}\|_2) + \max_{i\in\mathcal{A}^c}\|\AA_{i,\cdot}\|_1 \cdot s\lambda \notag\\
  & \lep \frac{s^{3/2} (\log p)^{3/2+\frac{1}{r}}}{\sqrt{n}} \notag
\end{align}
when $\lambda = C(\log p)^{1+1/r}/\sqrt{n}$. Then combining (\ref{eq: final bound L2}), we have
\begin{equation}\label{eq:l2 b}
  \begin{aligned}
\|\tilde\pphi^{\rm b}-\pphi^{\rm b*}\|_2 &\leq \frac{\|\tilde\DDelta^\sharp_{\mathcal{C}_{\rm b}}\|_2}{\sqrt{n}} + \|\AA^\top(\tilde\pphi^{\rm a}-\pphi^{\rm a*})\|_2  \lep \frac{s^{3/2} (\log p)^{2+\frac{1}{r}}}{\sqrt{n}}.
  \end{aligned}
\end{equation} 
Finally, (\ref{eq: final bound L2}) also implies that
\[\|\sqrt{n}(\tilde\bbeta-\bbeta^*)\|_2 =  \|\tilde\DDelta^\sharp_{\mathcal{M}_x}\|_2 \lep \frac{s^{3/2} (\log p)^{2+\frac{1}{r}}}{\sqrt{n}},\]
which together with (\ref{eq:l2 st}) and (\ref{eq:l2 b}) establishes the desired $\ell_2$ bounds for all components of $\tilde\ttheta-\ttheta^*$. We conclude the proof of Theorem \ref{thm: global}.

\subsubsection{Proof of Theorem \ref{thm: oracle}}
\underline{\bf Proof of Part (a)}.
Define $S_j(\ttheta) = \frac{1}{n}\sum_{t=1}^n  w_{j,t-1}(y_t -  \ww_{t-1}^\top\ttheta)$. We will show
\begin{itemize}
  \item[(A)] $S_j(\hat\ttheta^{\rm or}) = 0$ and $|\hat\sigma_j \hat\theta_j^{\rm or}| > b \lambda$ for $j\in\mathcal{S}$;
  \item[(B)] $|\hat\sigma_j^{-1} S_j(\hat\ttheta^{\rm or})| < \frac12a_1\lambda$ for $j\in\mathcal{S}^{c}$; 
  \item[(C)] The matrix $\hat\SSigma_{\mathcal{S},\mathcal{S}} = \frac{1}{n}\sum_{t=1}^n  \ww_{\mathcal{S},t-1} \ww_{\mathcal{S},t-1}^{\top}$ is positive definite w.p.a.1.
\end{itemize}
Note that Condition (A) implies that $\hat\sigma_j|\hat\theta_j^{\rm or}+\delta_j| \geq b \lambda$ for $j\in\mathcal{S}$ whenever \(\ddelta\) is sufficiently small such that $\max_{j\in\mathcal{S}}\hat\sigma_j|\delta_j| = o(\lambda)$. We then have
\begin{equation}\label{eq:Jl_constant_S}
  J_\lambda(\hat\sigma_j(\hat\theta_j^{\rm or}+\delta_j)) = J_\lambda(\hat\sigma_j\hat\theta_j^{\rm or})  \text{ for all }j\in\mathcal{S},
\end{equation}
and thus
\[
\sum_{j=1}^p {} \Bigl[J_\lambda(\hat\sigma_j(\hat\theta_j^{\rm or}+\delta_j)) - J_\lambda(\hat\sigma_j\hat\theta_j^{\rm or})\Bigr]  = \sum_{j\in\mathcal{S}^{c}} J_\lambda(\hat\sigma_j \delta_j).
\]
Therefore, for any $\ddelta \neq \boldsymbol{0}$ such that $\|\ddelta\|_1$ is sufficiently small, we have
\[\begin{aligned}
  \mathcal{L}_\lambda(\hat\ttheta^{\rm or} + \ddelta) - \mathcal{L}_\lambda(\hat\ttheta^{\rm or}) &= \frac{1}{n}\sum_{t=1}^n\left((y_t - \ww_{t-1}^\top (\hat\ttheta^{\rm or} + \ddelta))^2 - (y_t - \ww_{t-1}^\top\hat\ttheta^{\rm or})^2\right) \\
  &\,\,\,\,\ \ + \sum_{j=1}^{p}J_\lambda\left(\hat\sigma_j (\hat\theta_j^{\rm or}+\delta_j)\right) - \sum_{j=1}^{p}J_\lambda(\hat\sigma_j\hat\theta_j^{\rm or}) \\
  &\geq \frac{1}{n}\ddelta^\top \sum_{t=1}^n \ww_{t-1} \ww_{t-1}^\top \ddelta -2\sum_{j=1}^{p}S_j(\hat\ttheta^{\rm or})\delta_j +  \sum_{j\in\mathcal{S}^{c}} J_\lambda\left(\hat\sigma_j \delta_j\right)    \\
  &\geq  \frac{1}{n}\ddelta^\top \sum_{t=1}^n \ww_{t-1} \ww_{t-1}^\top \ddelta - 2\sum_{j\in\mathcal{S}^c} {} \left|\frac{S_j(\hat\ttheta^{\rm or})}{\hat\sigma_j}\right|\cdot \hat\sigma_j |\delta_j|  + a_1\lambda\sum_{j\in\mathcal{S}^{c}}\hat\sigma_j|\delta_j| \\
  &\geq  \frac{1}{n}\ddelta^\top \sum_{t=1}^n \ww_{t-1} \ww_{t-1}^\top \ddelta +\left(a_1\lambda -2 \max_{j\in\mathcal{S}^c} {} \left|\frac{S_j(\hat\ttheta^{\rm or})}{\hat\sigma_j}\right| \right)\sum_{j\in\mathcal{S}^c} \hat\sigma_j|\delta_j|.
\end{aligned}\]
If $\ddelta_{\mathcal{S}^c} \neq \boldsymbol{0}$, the second term on the right-hand side of the inequality is strictly positive due to Condition (B). Otherwise, when $\ddelta_{\mathcal{S}^c} = \boldsymbol{0}$, the first term
\[\begin{aligned}
  \frac{1}{n}\ddelta^\top \sum_{t=1}^n \ww_{t-1} \ww_{t-1}^\top \ddelta &=  \frac{1}{n}\ddelta_{\mathcal{S}}^\top \sum_{t=1}^n \ww_{\mathcal{S},t-1} \ww_{\mathcal{S},t-1}^\top \ddelta_{\mathcal{S}}
   > 0
\end{aligned}\]
where the last step applies Condition (C). Therefore, $\mathcal{L}_\lambda(\hat\ttheta^{\rm or} + \ddelta) - \mathcal{L}_\lambda(\hat\ttheta^{\rm or}) > 0$ for any $\ddelta \neq \boldsymbol{0}$ such that $\|\ddelta\|_1$ is sufficiently small, which verifies that $\hat\ttheta^{\rm or}$ is a strict local minimizer of $\mathcal{L}_\lambda(\ttheta)$. It then suffices to prove Conditions (A) to (C).

\underline{Proof of Condition (A)}. $S_j(\hat\ttheta^{\rm or}) = 0$ for $j\in\mathcal{S}$ holds trivially by the definition of the oracle OLS estimator, as the regressors are orthogonal to the residuals. Furthermore, note that Proposition \ref{prop: OLS} establishes the error bounds $|\hat\theta_j^{\rm or} - \theta_j^*|$ for $j\in\mathcal{S}$, and Assumption \ref{assu:beta-min} provides the lower bounds of the nonzero coefficients $|\theta_j^*|$ for $j\in\mathcal{S}$. Combining these results with the bounds of $\hat\sigma_j$ in Lemma \ref{lem:sd bound} yields that w.p.a.1,
\[ \begin{aligned}
  \min_{j\in\mathcal{S}\cap\mathcal{M}_x}\hat\sigma_j|\hat\theta_j^{\rm or}| &\geq
  \min_{j\in\mathcal{S}\cap\mathcal{M}_x}\hat\sigma_j\left(|\theta_j^*| -  |\hat\theta_j^{\rm or} - \theta_j^*| \right) \\
  &\geq
  \min_{j\in\mathcal{S}\cap\mathcal{M}_x}\hat\sigma_j\left(\min_{j\in\mathcal{S}\cap\mathcal{M}_x}|\theta_j^*| -  \max_{j\in\mathcal{S}\cap\mathcal{M}_x}|\hat\theta_j^{\rm or} - \theta_j^*| \right) \\
  &\gg  C \sqrt{\frac{n}{\log p}}\cdot \left(\frac{s^{3/2}(\log p)^{3+1/r}}{n}
 -  \left(\frac{s^{3/2}(\log p)^{2+1/r}}{n}\right)\right)\\
  &\geq  C\left(\frac{s^{3/2}(\log p)^{5/2+1/r}}{\sqrt{n}}\right) > b \lambda, \\
\end{aligned}
\]

\[ \begin{aligned}
  \min_{j\in\mathcal{S}\cap\mathcal{M}_z}\hat\sigma_j|\hat\theta_j^{\rm or}| &\geq
  \min_{j\in\mathcal{S}\cap\mathcal{M}_z}\hat\sigma_j\left(\min_{j\in\mathcal{S}\cap\mathcal{M}_z}|\theta_j^*| -  \max_{j\in\mathcal{S}\cap\mathcal{M}_z}|\hat\theta_j^{\rm or} - \theta_j^*| \right)\\
  &\gg  C \cdot \frac{s^{3/2}(\log p)^{3+1/r}}{\sqrt{n}}
 - c \cdot  \left(\frac{s^{3/2}(\log p)^{2+1/r}}{\sqrt{n}}\right)\\
  &\geq  C\cdot \left(\frac{s^{3/2}(\log p)^{5/2+1/r}}{\sqrt{n}}\right) > b \lambda, \\
\end{aligned}
\]
and
\[ \begin{aligned}
  \min_{j\in\mathcal{S}\cap\mathcal{C}}\hat\sigma_j|\hat\theta_j^{\rm or}| &\geq
  \min_{j\in\mathcal{S}\cap\mathcal{C}}\hat\sigma_j\left(\min_{j\in\mathcal{S}\cap\mathcal{C}}|\theta_j^*| -  \max_{j\in\mathcal{S}\cap\mathcal{C}}|\hat\theta_j^{\rm or} - \theta_j^*| \right) \\
    &\gg  C \sqrt{\frac{n}{\log p}}\cdot \left(\frac{s^{3/2}(\log p)^{3+1/r}}{\sqrt{n}}
 -   \frac{s^{3/2}(\log p)^{2+1/r}}{\sqrt{n}}\right)\\
  &\geq  C\left( s^{3/2}(\log p)^{5/2+1/r} \right) > b \lambda. \\
\end{aligned}
\]
This completes the proof of Condition (A).

\underline{Proof of Condition (B)}. Define  $\ww_{j,\cdot} = (w_{j,0}, w_{j,1}, \ldots, w_{j,n-1})^\top$. By Lemma \ref{lem:sd bound}, when $j\in\mathcal{M}_x\cup\mathcal{C}_{\rm a}\cup \mathcal{C}_{\rm b}$, we have $\hat\sigma_j \geq c\sqrt{n/\log p}$. By the fact that $\max\limits_{j\in\mathcal{M}_x\cup\mathcal{C}_{\rm a}\cup \mathcal{C}_{\rm b},t\in[n]}|w_{j,t}| = O_p(\sqrt{n \log p})$, we have
\begin{equation}\label{eq: hat sigma lower bound nonstationary}
  \max\limits_{j\in\mathcal{M}_x\cup\mathcal{C}_{\rm a}\cup \mathcal{C}_{\rm b}}\frac{\|\ww_{j,\cdot}\|_2}{n\cdot\hat\sigma_j} = O_p\left(\frac{\sqrt{n\cdot n\log p}}{n \cdot \sqrt{n/\log p}}\right) = O_p\left( \frac{\log p}{\sqrt{n}}\right).
\end{equation}
Similar arguments yield $ \max\limits_{j\in\mathcal{M}_z}\frac{\|\ww_{j,\cdot}\|_2}{n\cdot\hat\sigma_j} = O_p(\log p/\sqrt{n})$, which together with (\ref{eq: hat sigma lower bound nonstationary}) implies
\begin{equation}\label{eq: hat sigma lower bound}
  \max\limits_{j\in [p]}\frac{\|\ww_{j,\cdot}\|_2}{n\cdot\hat\sigma_j} =O_p\left( \frac{\log p}{\sqrt{n}}\right).
\end{equation}
Therefore,
\begin{equation}\label{eq: hat sigma S bound}
 \begin{aligned}
  \left|\hat\sigma_j^{-1}S_j(\hat\ttheta^{\rm or})\right| =  \left|\dfrac{\ww_{j,\cdot}^\top\MM_{\mathcal{S}}\uu}{n\cdot\hat\sigma_j}\right| &\leq \left|\frac{1}{n}\sum_{t=1}^n \frac{w_{j,t-1}}{\hat\sigma_j}u_t \right| + \left|\dfrac{\ww_{j,\cdot}^\top\PP_{\mathcal{S}}\uu}{n\cdot\hat\sigma_j}\right| \\
  &\leq \left|\frac{1}{n}\sum_{t=1}^n \frac{w_{j,t-1}}{\hat\sigma_j}u_t \right| + \frac{\|\ww_{j,\cdot}\|_2}{n\cdot\hat\sigma_j} \left\|\PP_{\mathcal{S}}\uu\right\|_2 \\
  &= \left|\frac{1}{n}\sum_{t=1}^n \frac{w_{j,t-1}}{\hat\sigma_j}u_t \right| + O_p\left( \frac{\log p}{\sqrt{n} } \right)\left\|\PP_{\mathcal{S}}\uu\right\|_2,
  \end{aligned}
\end{equation}
where the last step applies (\ref{eq: hat sigma lower bound}).
By the DB condition in (\ref{eq:DB u sharp}) of Lemma \ref{lem:DB} and the standard deviation bounds in Lemma \ref{lem:sd bound}, we have
\begin{equation}\label{eq: std DB}
  \max_{j\in\mathcal{S}^c}\left|\frac{1}{n}\sum_{t=1}^n \frac{w_{j,t-1}}{\hat\sigma_j}u_t \right| = O_p\left(\dfrac{(\log p)^{3/2+1/(2r)}}{\sqrt{n}}\right).
\end{equation}
Moreover, recall that we defined $ \mathcal{T} = \{j\in[p]:\QQ_{\mathcal{S},j} \neq \boldsymbol{0}_{|\mathcal{S}|}\}$ in (\ref{eq: def T set}), and we have $|\mathcal{T}| = O(s)$. Since the diagonal entries of $\QQ$, denoted by $Q_{j,j}$, are all equal to one, we have $\QQ_{\mathcal{S},j} \neq \boldsymbol{0}_{|\mathcal{S}|}$ for all $j\in \mathcal{S}$, which means $j \in \mathcal{T}$. Therefore, $\mathcal{S}\subset\mathcal{T}$. Since $\WW_{\mathcal{S}} = \WW_{\mathcal{T}}^\sharp(\QQ_{\mathcal{S},\mathcal{T}})^\top$, the column space of $\WW_{\mathcal{S}}$ is contained within the column space of $\WW_{\mathcal{T}}^\sharp$, which implies that
\[\begin{aligned}
  \uu^\top\PP_{\mathcal{S}}\uu &\leq    \uu^\top\WW_{\mathcal{T}}^\sharp(\WW_{\mathcal{T}}^{\sharp\top} \WW_{\mathcal{T}}^\sharp)^{-1}\WW_{\mathcal{T}}^{\sharp\top}\uu \\
  &\leq \left(\uu^\top\WW_{\mathcal{T}}^\sharp\RR_{n,\mathcal{T}}^{-1}\right)(\RR_{n,\mathcal{T}}^{-1}\WW_{\mathcal{T}}^{\sharp\top} \WW_{\mathcal{T}}^\sharp\RR_{n,\mathcal{T}}^{-1})^{-1}\RR_{n,\mathcal{T}}^{-1}\WW_{\mathcal{T}}^{\sharp\top}\uu \\
  &\leq \left\|\RR_{n,\mathcal{T}}^{-1}\WW_{\mathcal{T}}^{\sharp\top}\uu\right\|_2^2 \cdot \left\|(\hat\SSigma^\sharp_{\mathcal{T},\mathcal{T}})^{-1}\right\|_2  \leq |\mathcal{T}|\cdot \left\|\RR_{n,\mathcal{T}}^{-1}\WW_{\mathcal{T}}^{\sharp\top}\uu\right\|_\infty^2 \cdot \left\|(\hat\SSigma^\sharp_{\mathcal{T},\mathcal{T}})^{-1}\right\|_2, \\
\end{aligned}\]
where for the second line we define $\RR_{n,\mathcal{T}} = (\RR_{n})_{\mathcal{T},\mathcal{T}}$.
By $|\mathcal{T}| = O(s)$ and the RE condition in Lemma \ref{lem:RE xz}, we have $\|(\hat\SSigma^\sharp_{\mathcal{T},\mathcal{T}})^{-1}\|_2 = O_p(s\log p)$ and thus
\begin{equation}\label{eq: u quadratic bound}
  \|\PP_{\mathcal{S}}\uu\|_2 = \sqrt{\uu^\top\PP_{\mathcal{S}}\uu} = O_p\left(\sqrt{s^2 \log p \cdot \left\|\RR_{n,\mathcal{T}}^{-1}\WW_{\mathcal{T}}^{\sharp\top}\uu\right\|_\infty^2}\right)  = O_p\left(s (\log p)^{3/2+1/(2r)} \right),
\end{equation}
where the last step applies the DB condition in (\ref{eq:DB u sharp}) of Lemma \ref{lem:DB}. By (\ref{eq: hat sigma S bound}), (\ref{eq: std DB}) and (\ref{eq: u quadratic bound}), we have
\[\max_{j\in\mathcal{S}^c} \left|\hat\sigma_j^{-1}S_j(\hat\ttheta^{\rm or})\right| = O_p\left(\dfrac{(\log p)^{3/2+1/(2r)}}{\sqrt{n}}\right) + O_p\left(\frac{s (\log p)^{5/2+1/(2r)} }{\sqrt{n}}\right) =  O_p\left(\frac{s (\log p)^{5/2+1/(2r)} }{\sqrt{n}}\right).\]
This upper bound is far smaller than $\lambda$ w.p.a.1. We complete the proof of Condition (B).

\underline{Proof of Condition (C)}. Note that
\[
 \WW_{\mathcal{S}}^\top\WW_{\mathcal{S}}= \QQ_{\mathcal{S},\mathcal{T}}\WW_{\mathcal{T}}^{\sharp\top} \WW_{\mathcal{T}}^\sharp\QQ_{\mathcal{S},\mathcal{T}}^\top = \QQ_{\mathcal{S},\mathcal{T}}(\RR_{n})_{\mathcal{T},\mathcal{T}}\hat\SSigma^\sharp_{\mathcal{T},\mathcal{T}}(\RR_{n})_{\mathcal{T},\mathcal{T}}\QQ_{\mathcal{S},\mathcal{T}}^\top.
\]
The RE condition in Lemma \ref{lem:RE xz} and $|\mathcal{T}| = O(s)$ imply that $\hat\SSigma^\sharp_{\mathcal{T},\mathcal{T}}$ is positive definite w.p.a.1. In addition, both $\RR_n$ and $\QQ$ are invertible, and the fact that $\mathcal{S}\subset \mathcal{T}$ ensures that $(\RR_{n})_{\mathcal{T},\mathcal{T}}\QQ_{\mathcal{S},\mathcal{T}}^\top$ has full column rank. Specifically, since $\QQ_{\mathcal S,\mathcal T^c}= \boldsymbol{0}$ and the selected rows of the invertible matrix $\QQ$ are linearly independent, $\QQ_{\mathcal S,\mathcal T}$ has full row rank. Multiplication by the invertible $(\RR_{n})_{\mathcal{T},\mathcal{T}}$ preserves the full row rank. Therefore, $\WW_{\mathcal{S}}^\top\WW_{\mathcal{S}}$ is positive definite w.p.a.1, thereby completing the proof of Condition (C) and concluding the proof of Theorem \ref{thm: oracle}(a).

\medskip

\underline{\bf Proof of Part (b)}. The proof contains two steps. (1) Bound the estimation error $\DDelta = \tilde\ttheta-\ttheta^*$. (2) Prove Part (b) by contradiction. 

\underline{Step (1). Bound the estimation error.} Recall that $ \tilde\DDelta^\sharp = n^{-1/2}\RR_n\QQ^\top(\tilde\ttheta - \ttheta^*)$ is defined in (\ref{eq: Delta sharp def}). By (\ref{eq: FCP bound lambda2}) in Lemma \ref{lem: FCP}, we have $\sum_{j=1}^p J_\lambda(\hat\sigma_j\theta^*_j)\leq s\cdot \sup_x J_\lambda(x) \leq Cs\lambda^2 = \mu$, the true coefficient $\ttheta^*$ is feasible for the constrained optimization problem. Therefore, by the
fact that $\tilde\ttheta$ is the global minimizer,
\[\frac{1}{n}\|\YY-\WW\tilde\ttheta\|_2^2 + \sum_{j=1}^p J_\lambda(\hat\sigma_j\tilde\theta_j) \leq \frac{1}{n}\|\YY-\WW\ttheta^*\|_2^2 + \sum_{j=1}^p J_\lambda(\hat\sigma_j\theta^*_j).\]
Rearranging the terms, we obtain
\[\DDelta^\top \frac{\WW^\top \WW}{n}\DDelta + \sum_{j\in\mathcal{S}^c} J_\lambda(\hat\sigma_j\tilde\theta_j) \leq \frac{2\uu^\top\WW\DDelta}{n} +  \sum_{j\in\mathcal{S}} \left[J_\lambda(\hat\sigma_j\theta^*_j) - J_\lambda(\hat\sigma_j\tilde\theta_j)\right].\]
Note that in the proof of Theorem \ref{thm: global} we defined
\[\mathcal{B} = \{j\in\mathcal{S}^c : \hat\sigma_j|\tilde\theta_j| > a\lambda\},\,\mathcal{A} = \mathcal{S}\cup \mathcal{B}.\]
By the first inequality in (\ref{eq: sum Ac bound}), we have
\[\begin{aligned}
  \|\tilde\DDelta^\sharp_{\mathcal{J}^c}\|_1 &\leq C\sqrt{\log p}\sum_{j\in\mathcal{A}^c}  \hat\sigma_j|\tilde\theta_j| \\
  &\leq \frac{C\sqrt{\log p}\sum_{j\in\mathcal{A}^c}  J_\lambda(\hat\sigma_j\tilde\theta_j)}{a_1\lambda} \leq \frac{C\sqrt{\log p}\sum_{j\in\mathcal{S}^c}  J_\lambda(\hat\sigma_j\tilde\theta_j)}{a_1\lambda},
\end{aligned}\] 
where the second inequality follows from (\ref{eq: FCP lower derivative}) and the fact that $j\in\mathcal{A}^c$ implies $\hat\sigma_j|\tilde\theta_j| \leq a\lambda$, and the last inequality uses $\mathcal{A}^c \subset \mathcal{S}^c$.
Thus, we further have 
\[\DDelta^\top \frac{\WW^\top \WW}{n}\DDelta + \frac{a_1\lambda\|\tilde\DDelta^\sharp_{\mathcal{J}^c}\|_1}{C  \sqrt{\log p}} \leq \frac{2\uu^\top\WW\DDelta}{n} +  \sum_{j\in\mathcal{S}} \left[J_\lambda(\hat\sigma_j\theta^*_j) - J_\lambda(\hat\sigma_j\tilde\theta_j)\right].\]  
Define $\tilde \WW^\sharp = \sqrt{n}\WW^\sharp\RR_n^{-1}$, and note that $\WW\DDelta = \tilde\WW^\sharp \tilde\DDelta^\sharp$. Therefore,
\begin{equation}\label{eq: basic ineq 1}
  \begin{aligned}
  \tilde\DDelta^{\sharp\top} \frac{\tilde\WW^{\sharp\top} \tilde\WW^\sharp}{n}\tilde\DDelta^\sharp + \frac{a_1\lambda\|\tilde\DDelta^\sharp_{\mathcal{J}^c}\|_1}{C  \sqrt{\log p}}   &\leq \frac{2\uu^\top\tilde\WW^\sharp\tilde\DDelta^\sharp}{n} +  \sum_{j \in \mathcal{S}} \left[J_\lambda(\hat\sigma_j\theta^*_j) - J_\lambda(\hat\sigma_j\tilde\theta_j)\right] \\
  &\leq 2\left\|\frac{\tilde\WW^{\sharp\top}\uu}{n}\right\|_\infty\cdot\|\tilde\DDelta^\sharp\|_1 +  \sum_{j \in \mathcal{S}}  \left[J_\lambda(\hat\sigma_j\theta^*_j) - J_\lambda(\hat\sigma_j\tilde\theta_j)\right].\\
  \end{aligned}
\end{equation}
We now bound $ \sum_{j \in \mathcal{S}}  \left[J_\lambda(\hat\sigma_j\theta^*_j) - J_\lambda(\hat\sigma_j\tilde\theta_j)\right]$. Define the following objects:
\begin{align}
  \mathcal{K} &= \{j\in\mathcal{S}: \hat\sigma_j|\tilde\theta_j| < b\lambda\}, \notag\\ 
  \HH &= {\rm diag}(h_1,\ldots,h_p),\,h_j = \sqrt{n}\textbf{1}{\{j\in\mathcal{M}_x\}} + \textbf{1}{\{j\notin\mathcal{M}_x\}}.\label{eq: def H}
\end{align}
By the standard deviation bounds in Lemma \ref{lem:sd bound}, we have $\max_{j\in[p]} ({h_j}/{\hat\sigma_j}) \leq C\sqrt{\log p}$ 
w.p.a.1, and thus for any $j\in\mathcal{K}$
\[h_j|\tilde\theta_j| \leq C\sqrt{\log p} \hat\sigma_j|\tilde\theta_j| \leq bC\lambda\sqrt{\log p}.\]
Define $m_n := \min_{j\in\mathcal{S}} h_j |\theta_j^*|$. Since $m_n \gg \lambda\sqrt{\log p}$ by Assumption \ref{assu:beta-min}, the upper bound for $h_j|\tilde\theta_j|$ is eventually smaller than $m_n/2$. Therefore, 
\begin{equation}\label{eq: h theta diff}
  \begin{aligned}
     \min_{j\in\mathcal{K}} h_j|\tilde\theta_j - \theta_j^*| &\geq \min_{j\in\mathcal{K}} \left( h_j|\theta_j^*| - h_j|\tilde\theta_j| \right)  \geq   m_n - m_n/2 = m_n/2.
  \end{aligned} 
\end{equation}Summing up (\ref{eq: h theta diff}) over $j\in\mathcal{K}$, we obtain
\begin{equation*}
  \|\HH\DDelta\|_1 \geq \sum_{j\in\mathcal{K}} h_j|\tilde\theta_j - \theta_j^*| \geq |\mathcal{K}|\frac{m_n}{2},\,\,\,\,\|\HH\DDelta\|_2^2 \geq \sum_{j\in\mathcal{K}} h_j^2|\tilde\theta_j - \theta_j^*|^2 \geq |\mathcal{K}|\frac{m_n^2}{4},
\end{equation*}
or equivalently, 
\begin{equation}\label{eq: K bound}
  |\mathcal{K}| \leq \frac{2\|\HH\DDelta\|_1}{m_n}, \quad |\mathcal{K}| \leq \frac{4\|\HH\DDelta\|_2^2}{m_n^2}.
\end{equation} 
Note that for $j \in \mathcal{S}\backslash \mathcal{K}$, we have $\hat\sigma_j|\tilde\theta_j| \geq b\lambda$, and thus $J_\lambda(\hat\sigma_j\tilde\theta_j ) = J_\lambda(b\lambda)$ by (\ref{eq: J t flat}). In addition, 
\[\min_{j\in\mathcal{S}}\hat\sigma_j|\theta_j^*| \geq \frac{c\cdot m_n}{\sqrt{\log p}} \gg b\lambda,\]
and thus for $j \in \mathcal{S}\backslash \mathcal{K}$ we also have  $J_\lambda(\hat\sigma_j \theta_j^* ) = J_\lambda(b\lambda)$, suggesting that 
\[J_\lambda(\hat\sigma_j \theta_j^* ) - J_\lambda(\hat\sigma_j\tilde\theta_j) = 0\text{ for all $j \in \mathcal{S}\backslash \mathcal{K}$}.\] 
Therefore, 
\[ \sum_{j \in \mathcal{S}} \left[J_\lambda(\hat\sigma_j\theta^*_j) - J_\lambda(\hat\sigma_j\tilde\theta_j)\right] = \sum_{j\in\mathcal{K}} \left[J_\lambda(\hat\sigma_j\theta^*_j) - J_\lambda(\hat\sigma_j\tilde\theta_j)\right] \leq C|\mathcal{K}|\lambda^2,\]
where the last inequality applies (\ref{eq: FCP bound lambda2}). Applying (\ref{eq: K bound}), we obtain
\begin{equation}\label{eq: temp}
  \sum_{j \in \mathcal{S}} \left[J_\lambda(\hat\sigma_j\theta^*_j) - J_\lambda(\hat\sigma_j\tilde\theta_j)\right] \leq 4C\lambda^2  \frac{\|\HH\DDelta\|_q^q}{m_n^q}\text{ for } q=1,2.
\end{equation} 
By the definition of $\tilde\DDelta^\sharp = n^{-1/2}\RR_n\QQ^\top\DDelta $ in (\ref{eq: Delta sharp def}), we have  
\[\|\HH\DDelta\|_1 = \|\sqrt{n}\HH(\QQ^\top)^{-1} \RR_n^{-1} \tilde\DDelta^\sharp  \|_1 \leq \|\sqrt{n}\HH (\QQ^\top)^{-1} \RR_n^{-1} \|_1  \cdot \|\tilde\DDelta^\sharp\|_1 \leq C\|\tilde\DDelta^\sharp\|_1,\]
where the last inequality applies the fact that 
\[
\sqrt{n}\HH(\QQ^\top)^{-1}\RR_n^{-1}
=
\begin{pmatrix}
\bfI_{p_z} & \OO & \OO & \OO \\
\OO & \bfI_{p_{\rm a}} & \OO & \OO\\
\OO & -\AA^\top & n^{-1/2}\bfI_{p_{\rm b}} & \OO \\
\OO & \OO & \OO & \bfI_{p_x}
\end{pmatrix},
\]
and thus $\|\sqrt{n}\HH(\QQ^\top)^{-1}\RR_n^{-1}\|_1 \leq 1 + \max_{i}\|\AA_{i\cdot}\|_1 = O(1)$ implied by Assumption \ref{assu:cointegration}. Therefore, combining with (\ref{eq: temp}), we obtain
\begin{equation}\label{eq: temp2}
  \sum_{j \in \mathcal{S}} \left[J_\lambda(\hat\sigma_j\theta^*_j) - J_\lambda(\hat\sigma_j\tilde\theta_j)\right] \leq   \frac{4C^2\lambda^2}{m_n} \|\tilde\DDelta^\sharp\|_1.
\end{equation} 
The DB condition in Lemma \ref{lem:DB} implies that
\begin{equation}\label{eq: DB tilde W}
\left\|\frac{\tilde\WW^{\sharp\top}\uu}{n}\right\|_\infty \leq C\frac{ (\log p)^{1+1/r}}{\sqrt{n}}
\end{equation} 
w.p.a.1. 
By (\ref{eq: basic ineq 1}) and (\ref{eq: temp2}), we have
\begin{equation}\label{eq: basic ineq 2}
 \tilde\DDelta^{\sharp\top} \frac{\tilde\WW^{\sharp\top} \tilde\WW^\sharp}{n}\tilde\DDelta^\sharp +  \frac{a_1\lambda\|\tilde\DDelta^\sharp_{\mathcal{J}^c}\|_1}{C  \sqrt{\log p}} \leq  2\left(C\frac{ (\log p)^{1+1/r}}{\sqrt{n}} \|\tilde\DDelta^\sharp\|_1 +\frac{4C^2\lambda^2}{m_n}\|\tilde\DDelta^\sharp\|_1 \right),
\end{equation} 
which implies 
\[  {\|\tilde\DDelta^\sharp_{\mathcal{J}^c}\|_1} \leq  \frac{2C  \sqrt{\log p}}{a_1\lambda}\left(  \frac{ (\log p)^{1+1/r}}{\sqrt{n}}  + \frac{ \lambda^2 }{m_n} \right)\cdot\|\tilde\DDelta^\sharp\|_1.\] 
By the fact that $(\log p)^{3/2+1/r}/(\sqrt{n}\lambda) \to 0$ and $m_n \gg \lambda\sqrt{\log p}$ by the beta-min condition in Assumption \ref{assu:beta-min} when $\lambda = C_1\frac{s(\log p)^{5/2+1/r}}{\sqrt{n}}$, we have $\|\tilde\DDelta^\sharp_{\mathcal{J}^c}\|_1 \leq 0.75\|\tilde\DDelta^\sharp \|_1 = 0.75\left(\|\tilde\DDelta^\sharp_{\mathcal{J}^c}\|_1 + \|\tilde\DDelta^\sharp_{\mathcal{J}}\|_1\right)$, which implies $ \|\tilde\DDelta^\sharp_{\mathcal{J}^c}\|_1 \leq 3\|\tilde\DDelta^\sharp_{\mathcal{J}}\|_1$ as in \eqref{eq: cone condition 1}. This enables us to again establish the inequality
\begin{equation}\label{eq: dd_subset_bound_3}
  \|\tilde\DDelta^\sharp_{\mathcal{M}_z \cup \mathcal{C}_{\rm a}}\|_2^2 \leq C \tilde\DDelta^{\sharp\top}  \hat\SSigma^\sharp \tilde\DDelta^\sharp 
\end{equation} as in (\ref{eq: dd_subset_bound_2}).

Using the bounds of the standard deviations in Lemma \ref{lem:sd bound} for $j\in\mathcal{C}_{\rm a}$, we have 
\[\|\DDelta_{\mathcal{A}^c \cap \mathcal{C}_{\rm a}}\|_1 \lep \sqrt{\frac{\log p}{n}} \sum_{j\in\mathcal{A}^c \cap \mathcal{C}_{\rm a}}\hat\sigma_j |\DDelta_j| \lep \sqrt{\frac{\log p}{n}}s\lambda,\]
where the last step uses (\ref{eq: sum A bound}). Then similar to (\ref{eq: Aphi bound}), we have  
\begin{align}
  \|\AA^\top(\tilde\pphi^{\rm a}-\pphi^{\rm a*})\|_2 &\lep \max_{i}\|\AA_{i,\cdot}\|_1\left(\| \DDelta_{\mathcal{A}\cap\mathcal{C}_a}\|_1+\|\DDelta_{\mathcal{A}^c\cap\mathcal{C}_a}\|_1\right)\notag  \\
  &\leq \max_{i}\|\AA_{i,\cdot}\|_1 \cdot O(\sqrt{s} \|\tilde\pphi^{\rm a}-\pphi^{\rm a*}\|_2) + \max_{i}\|\AA_{i,\cdot}\|_1 \cdot s\lambda\sqrt{\frac{\log p}{n}} \notag\\
  & \leq C \left( \sqrt{s\tilde\DDelta^{\sharp\top}  \hat\SSigma^\sharp \tilde\DDelta^\sharp} + \frac{s\lambda\sqrt{\log p}}{\sqrt{n}}\right),\label{eq: Aphi bound 2}
\end{align}
where the last step applies (\ref{eq: dd_subset_bound_3}). 
Then by definition of $\HH$ and $\tilde\DDelta^\sharp$, 
\begin{equation}\label{eq: HH_Delta_bound}
  \begin{aligned}
  \|\HH  \DDelta\|_2^2 &= \|\tilde\DDelta^\sharp_{\mathcal{M}_z}\|_2^2 + \|\tilde\DDelta^\sharp_{\mathcal{C}_{\rm a}}\|_2^2 + \left\|\frac{\tilde\DDelta^\sharp_{\mathcal{C}_{\rm b}}}{\sqrt{n}} - \AA^\top(\tilde\pphi^{\rm a} - \pphi^{\rm a*})\right\|_2^2 + \|\tilde\DDelta^\sharp_{\mathcal{M}_{x}}\|_2^2 \\
  &\leq C\|\tilde\DDelta^\sharp\|_2^2 + C\|\AA^\top(\tilde\pphi^{\rm a}-\pphi^{\rm a*})\|_2^2 \\
  &\lep \left(s\log p \tilde\DDelta^{\sharp\top}  \hat\SSigma^\sharp \tilde\DDelta^\sharp\right) + \left(s\tilde\DDelta^{\sharp\top}  \hat\SSigma^\sharp \tilde\DDelta^\sharp + \frac{s^2\lambda^2 \log p}{n}\right) \\ &\leq C \left(s\log p \tilde\DDelta^{\sharp\top}  \hat\SSigma^\sharp \tilde\DDelta^\sharp + \frac{s^2\lambda^2 \log p}{n}\right),
  \end{aligned}
\end{equation}
where the third row applies the RE condition in Lemma \ref{lem:RE xz} that implies $ \tilde\DDelta^{\sharp\top} \frac{\tilde\WW^{\sharp\top} \tilde\WW^\sharp}{n}\tilde\DDelta^\sharp  \geq \frac{c}{s\log p} \|\tilde\DDelta^\sharp\|_2^2$, and the inequality (\ref{eq: Aphi bound 2}). Combining (\ref{eq: HH_Delta_bound}) and (\ref{eq: temp}) with $q=2$, we have 
\begin{equation}\label{eq: temp3}
  \sum_{j \in \mathcal{S}} \left[J_\lambda(\hat\sigma_j\theta^*_j) - J_\lambda(\hat\sigma_j\tilde\theta_j)\right] \leq   \frac{Cs\log p\lambda^2}{m_n^2} \tilde\DDelta^{\sharp\top}  \hat\SSigma^\sharp \tilde\DDelta^\sharp + \frac{Cs^2\lambda^4 \log p}{n\cdot m_n^2}.
\end{equation} 
By (\ref{eq: basic ineq 1}), (\ref{eq: DB tilde W}), and (\ref{eq: temp3}), we have
\begin{equation} \label{eq: temptemp}
 \tilde\DDelta^{\sharp\top} \frac{\tilde\WW^{\sharp\top} \tilde\WW^\sharp}{n}\tilde\DDelta^\sharp +  \frac{a_1\lambda\|\tilde\DDelta^\sharp_{\mathcal{J}^c}\|_1}{C  \sqrt{\log p}} \leq  C\left(\frac{ (\log p)^{1+1/r}}{\sqrt{n}} \|\tilde\DDelta^\sharp\|_1 +  \frac{ s\log p\lambda^2}{m_n^2} \tilde\DDelta^{\sharp\top}  \hat\SSigma^\sharp \tilde\DDelta^\sharp + \frac{ s^2\lambda^4 \log p}{n\cdot m_n^2} \right).
\end{equation} 
Given that $m_n^2 \gg s\log p\lambda^2$ by the beta-min condition in Assumption \ref{assu:beta-min}, we have $\frac{ C s\log p\lambda^2}{m_n^2} \tilde\DDelta^{\sharp\top}  \hat\SSigma^\sharp \tilde\DDelta^\sharp \leq 0.5\tilde\DDelta^{\sharp\top}  \hat\SSigma^\sharp \tilde\DDelta^\sharp$ with sufficiently large $n$. In addition, (\ref{eq: cone condition 1}) implies $\|\tilde\DDelta^\sharp\|_1 \leq C \sqrt{s} \|\tilde\DDelta^\sharp\|_2$ and the RE condition implies that $ \tilde\DDelta^{\sharp\top}  \hat\SSigma^\sharp \tilde\DDelta^\sharp \geq c\|\tilde\DDelta^\sharp\|_2^2/(s\log p)$. Therefore,  
\begin{equation}\label{eq: basic ineq 3}
  \begin{aligned}
\tilde\DDelta^{\sharp\top}  \hat\SSigma^\sharp \tilde\DDelta^\sharp &\leq  C \left(\frac{ (\log p)^{1+1/r}}{\sqrt{n}} \sqrt{s}\|\tilde\DDelta^\sharp\|_2+  \frac{Cs^2\lambda^4 \log p}{n\cdot m_n^2} \right) \\
&\leq \frac{Cs (\log p)^{3/2+1/r}}{\sqrt{n}}\sqrt{\tilde\DDelta^{\sharp\top}  \hat\SSigma^\sharp \tilde\DDelta^\sharp}+ \frac{Cs^2\log p \lambda^4}{n\cdot m_n^2} \\
&\leq 0.5\tilde\DDelta^{\sharp\top}  \hat\SSigma^\sharp \tilde\DDelta^\sharp + \frac{0.5Cs^2(\log p)^{3+2/r}}{n} + \frac{ Cs^2 \log p \lambda^4}{n\cdot m_n^2} \\
&\leq 0.5\tilde\DDelta^{\sharp\top}  \hat\SSigma^\sharp \tilde\DDelta^\sharp + \frac{ Cs^2(\log p)^{3+2/r}}{n}, 
  \end{aligned}
\end{equation}
where the last inequality applies the fact that $m_n^2 \gg \lambda^4$. Therefore, 
\begin{equation}
   \tilde\DDelta^{\sharp\top}  \hat\SSigma^\sharp \tilde\DDelta^\sharp \stackrel{\mathrm{p}}{\preccurlyeq} \left(\frac{s^{2} (\log p)^{3+2/r}}{n} \right).
\end{equation}  
 Combining (\ref{eq: HH_Delta_bound}), when $\lambda = Cs(\log p)^{5/2+1/r}/\sqrt{n}$ we have $\|\HH\DDelta\|_2^2 \lep s^3 (\log p)^{4+2/r}/n$. By the definition of $\HH$ in (\ref{eq: def H}),  we have 
\begin{equation}\label{eq:tildeDDelta_sharp_l2}
 \|\tilde\ggamma-\ggamma^*\|_2 + \|\tilde\pphi-\pphi^*\|_2 + \sqrt{n}\|\tilde\bbeta-\bbeta^*\|_2 \lep s^{3/2} (\log p)^{2+1/r}/\sqrt{n}.
\end{equation}

\underline{Step (2). Prove Part (b) by contradiction.}  Suppose that $\tilde\ttheta \neq \hat\ttheta^{\rm or}$. 
By (\ref{eq: A set property}), we have $|\mathcal{A}| = O(s)$.     
Then following similar arguments in the proof of Condition (A) in Part (a) for $\hat\ttheta^{\rm or}$, we can show that $\min_{j\in\mathcal{S}}|\hat\sigma_j \tilde\theta_j| > b\lambda$ using   (\ref{eq:tildeDDelta_sharp_l2}), the bounds of standard deviations in Lemma \ref{lem:sd bound}, and the beta-min condition in Assumption \ref{assu:beta-min}. Therefore, 
\begin{equation}\label{eq: J eq S}
  J_\lambda\left(\hat\sigma_j \tilde\theta_j\right) = J_\lambda\left(\hat\sigma_j\hat\theta_j^{\rm or}\right)\text{ for any $j\in\mathcal{S}$}.
\end{equation}
In addition, for any $j \in \mathcal{B}$ we have
\[ J_\lambda\left(\hat\sigma_j\tilde\theta_j\right) \geq J_\lambda(a\lambda) \geq aa_1\lambda^2,\]
where the second step applies  (\ref{eq: FCP lower derivative}) in Lemma \ref{lem: FCP}. Therefore,
\begin{equation}\label{eq: sum B J}
  \sum_{j\in\mathcal{B}}J_\lambda\left(\hat\sigma_j\tilde\theta_j\right) \geq aa_1|\mathcal{B}|\lambda^2.
\end{equation}
By (\ref{eq: J eq S}) and (\ref{eq: sum B J}), we have
\begin{equation}\label{eq: J diff lower}
  \sum_{j\in[p]}J_\lambda\left(\hat\sigma_j\tilde\theta_j\right) - \sum_{j\in[p]}J_\lambda(\hat\sigma_j\hat\theta_j^{\rm or}) \geq \sum_{j\in\mathcal{A}^c}J_\lambda\left(\hat\sigma_j\tilde\theta_j\right) + aa_1|\mathcal{B}|\lambda^2.
\end{equation}
Define $\hat\ttheta^{\mathcal{A}}=(\hat\theta_j^{\mathcal{A}})_{j\in[p]}$, where $\hat\theta_j^{\mathcal{A}}$ is the $j$-th component of the OLS estimator of $y_t$ on $\ww_{\mathcal{A},t-1}$ when $j\in\mathcal{A}$, and zero otherwise.
Then
\begin{equation}\label{eq: L diff lower}
  \begin{aligned}
  \mathcal{L}_\lambda(\tilde\ttheta) - \mathcal{L}_\lambda(\hat\ttheta^{\rm or}) &= \frac{1}{n}\sum_{t=1}^n\left((y_t - \ww_{t-1}^\top\tilde\ttheta)^2 - (y_t - \ww_{t-1}^\top\hat\ttheta^{\rm or})^2\right)  + \sum_{j\in[p]}J_\lambda\left(\hat\sigma_j\tilde\theta_j\right) - \sum_{j\in[p]}J_\lambda(\hat\sigma_j\hat\theta_j^{\rm or}) \\
  &\geq \frac{1}{n}\sum_{t=1}^n\left((y_t - \ww_{t-1}^\top\tilde\ttheta)^2 - (y_t - \ww_{t-1}^\top\hat\ttheta^{\mathcal{A}})^2\right) + \frac{1}{n} \uu^\top(\MM_{\mathcal{A}} - \MM_{\mathcal{S}})\uu\\
  &\,\,\,\,\,\,\,\,\,\,   + \sum_{j\in\mathcal{A}^c}J_\lambda\left(\hat\sigma_j\tilde\theta_j\right) + aa_1|\mathcal{B}|\lambda^2\\
  &= \frac{1}{n}(\tilde\ttheta-\hat\ttheta^{\mathcal{A}})^\top \sum_{t=1}^n \ww_{t-1} \ww_{t-1}^\top (\tilde\ttheta-\hat\ttheta^{\mathcal{A}}) -2\sum_{j=1}^{p}S_j(\hat\ttheta^{\mathcal{A}})( \tilde\theta_j - \hat\theta_j^{\mathcal{A}}) \\
  &\,\,\,\,\,\,\,\,\,\,  +\left(aa_1|\mathcal{B}|\lambda^2+\frac{1}{n} \uu^\top(\PP_{\mathcal{S}} - \PP_{\mathcal{A}})\uu\right) +  \sum_{j\in\mathcal{A}^{c}} J_\lambda\left(\hat\sigma_j \tilde\theta_j\right).
\end{aligned}
\end{equation}
Note that $S_j(\hat\ttheta^{\mathcal{A}}) = 0$ for any $j \in \mathcal{A}$, and thus
\begin{equation}\label{eq:Sj sum bound}
  \left|\sum_{j=1}^{p}S_j(\hat\ttheta^{\mathcal{A}})(\hat\theta_j^{\mathcal{A}} - \tilde\theta_j)\right| = \left|\sum_{j\in\mathcal{A}^c}S_j(\hat\ttheta^{\mathcal{A}})\tilde\theta_j\right|\leq \max_{j\in\mathcal{A}^c} \frac{|S_j(\hat\ttheta^{\mathcal{A}})|}{\hat\sigma_j} \sum_{j\in\mathcal{A}^c} \hat\sigma_j|\tilde\theta_j|.
\end{equation}
Also, note that for any $j\in\mathcal{A}^c$, we have $\hat\sigma_j|\tilde\theta_j| \leq a\lambda$ by definition. Therefore, by  (\ref{eq: FCP lower derivative}) in Lemma \ref{lem: FCP}, we have
\[J_\lambda\left(\hat\sigma_j \tilde\theta_j\right) \geq a_1\lambda\hat\sigma_j|\tilde\theta_j|.\]
Together with (\ref{eq:Sj sum bound}), we have
\begin{equation}\label{eq: J minus S lower}
  \sum_{j\in\mathcal{A}^{c}} J_\lambda\left(\hat\sigma_j \tilde\theta_j\right) - 2\sum_{j=1}^{p}S_j(\hat\ttheta^{\mathcal{A}})\tilde\theta_j \geq \left(a_1\lambda - \max_{j\in\mathcal{A}^c} \frac{2|S_j(\hat\ttheta^{\mathcal{A}})|}{\hat\sigma_j}\right) \sum_{j\in\mathcal{A}^c} \hat\sigma_j|\tilde\theta_j| =: d_n \sum_{j\in\mathcal{A}^c} \hat\sigma_j|\tilde\theta_j|,
\end{equation}
where $d_n := a_1\lambda - \max_{j\in\mathcal{A}^c} \frac{2|S_j(\hat\ttheta^{\mathcal{A}})|}{\hat\sigma_j}$. By (\ref{eq: L diff lower}) and (\ref{eq: J minus S lower}), we have
\begin{equation}
   \begin{aligned}
  \mathcal{L}_\lambda(\tilde\ttheta) - \mathcal{L}_\lambda(\hat\ttheta^{\rm or})
  &\geq \frac{1}{n}(\tilde\ttheta-\hat\ttheta^{\mathcal{A}})^\top \sum_{t=1}^n \ww_{t-1} \ww_{t-1}^\top (\tilde\ttheta-\hat\ttheta^{\mathcal{A}})  +  d_n \sum_{j\in\mathcal{A}^c} \hat\sigma_j|\tilde\theta_j| +B_n,
\end{aligned}
\end{equation}
where
$
 B_n := aa_1|\mathcal{B}|\lambda^2+\frac{1}{n} \uu^\top(\PP_{\mathcal{S}} - \PP_{\mathcal{A}})\uu.
$
Recall that $|\mathcal{A}|=O(s)$ and thus the same arguments for Condition (B) in Part (a) yield that $d_n > 0$ w.p.a.1. We then show that
\begin{equation}\label{eq: L diff result positive}
  \mathcal{L}_\lambda(\tilde\ttheta) - \mathcal{L}_\lambda(\hat\ttheta^{\rm or}) > 0.
\end{equation}

\underline{Case I. $|\mathcal{B}|=0$}. In this case, $\mathcal{A}=\mathcal{S}$ and therefore $B_n = 0$ and $\hat\ttheta^{\mathcal{A}} = \hat\ttheta^{\rm or}$. Therefore,
\begin{equation*}
   \begin{aligned}
  \mathcal{L}_\lambda(\tilde\ttheta) - \mathcal{L}_\lambda(\hat\ttheta^{\rm or})
  &\geq \frac{1}{n}(\tilde\ttheta-\hat\ttheta^{\rm or})^\top \sum_{t=1}^n \ww_{t-1} \ww_{t-1}^\top (\tilde\ttheta-\hat\ttheta^{\rm or})  +  d_n \sum_{j\in\mathcal{S}^c} \hat\sigma_j|\tilde\theta_j|.
\end{aligned}
\end{equation*}
If $\tilde\ttheta_{\mathcal{S}^c} \neq \boldsymbol{0}$, the second term on the right-hand side of the inequality is strictly positive since $d_n > 0$. Otherwise, when $\tilde\ttheta_{\mathcal{S}^c} = \boldsymbol{0}$, by $\tilde\ttheta \neq \hat\ttheta^{\rm or}$ we have $\tilde\ttheta_{\mathcal{S}}-\hat\ttheta^{\rm or}_{\mathcal{S}} \neq \boldsymbol{0}$, and thus the first term
\[\begin{aligned}
 \frac{1}{n}(\tilde\ttheta-\hat\ttheta^{\rm or})^\top \sum_{t=1}^n \ww_{t-1} \ww_{t-1}^\top (\tilde\ttheta-\hat\ttheta^{\rm or}) &=  \frac{1}{n}(\tilde\ttheta_{\mathcal{S}}-\hat\ttheta^{\rm or}_{\mathcal{S}})^\top \sum_{t=1}^n \ww_{\mathcal{S},t-1} \ww_{\mathcal{S},t-1}^\top (\tilde\ttheta_{\mathcal{S}}-\hat\ttheta_{\mathcal{S}}^{\rm or}) > 0,
\end{aligned}\]
where the last step applies Condition (C) in the proof of Part (a). Therefore, (\ref{eq: L diff result positive}) holds.

\underline{Case II. $|\mathcal{B}|\geq 1$}. In this case, (\ref{eq: u quadratic bound}) implies that
\[\begin{aligned}
  \frac{\uu^\top\PP_{\mathcal{S}}\uu}{n} = O_p\left(\frac{s^2(\log p)^{3+\frac{1}{r}}}{n}\right),
\end{aligned}\]
and similar arguments yield
\[\begin{aligned}
  \frac{\uu^\top\PP_{\mathcal{A}}\uu}{n} &= O_p\left(\frac{s^2(\log p)^{3+\frac{1}{r}}}{n}\right).
\end{aligned}\]
This upper bound is dominated by $aa_1|\mathcal{B}|\lambda^2$ when $\lambda = C_1s(\log p)^{5/2+1/(r)}/\sqrt{n}$ for a sufficiently large absolute constant $C_1$. Therefore, we have $B_n > 0$, and thus
\begin{equation*}
   \begin{aligned}
  \mathcal{L}_\lambda(\tilde\ttheta) - \mathcal{L}_\lambda(\hat\ttheta^{\rm or})
  &\geq B_n > 0.
\end{aligned}
\end{equation*}
That means (\ref{eq: L diff result positive}) holds. Therefore, whenever $\tilde\ttheta \neq \hat\ttheta^{\rm or}$, we have $\mathcal{L}_\lambda(\tilde\ttheta) - \mathcal{L}_\lambda(\hat\ttheta^{\rm or}) > 0$. By (\ref{eq: FCP bound lambda2}) in Lemma \ref{lem: FCP} we have $\sum_{j=1}^p J_\lambda(\hat\sigma_j\hat\theta^{\rm or}_j)\leq s\cdot \sup_x J_\lambda(x) \leq Cs\lambda^2 = \mu$, and thus the oracle estimator $\hat\ttheta^{\rm or}$ is feasible for the constrained optimization problem. Therefore, $\mathcal{L}_\lambda(\tilde\ttheta) - \mathcal{L}_\lambda(\hat\ttheta^{\rm or}) > 0$ means that $\tilde\ttheta$ does not minimize $\mathcal{L}_\lambda(\ttheta)$. The contradiction implies that $\hat\ttheta^{\rm or}$ is indeed the unique minimizer of the constrained problem. We complete the proof of Theorem \ref{thm: oracle}. 

\subsubsection{Proof of Proposition \ref{prop: inconsistent LASSO}}

\underline{Proof of (a).} Note that under (\ref{eq: Aphi1+phi2}),  we have
\[\begin{aligned}
  y_t - \ww_{t-1}^\top\breve\ttheta &= u_t + \xx_{t-1}^{\rm ab\top}(\pphi^* - \breve\pphi) \\
  &= u_t +  \vv_{t-1}^{\rm a\top}(\pphi^{\rm a*} - \breve{\pphi}^{\rm a}) + \xx_{t-1}^{\rm b\top} (\AA^\top\pphi^{\rm a*}+\pphi^{\rm b*}-\AA^\top\breve\pphi^{\rm a}-\breve\pphi^{\rm b}) \\
  &= u_t +  \vv_{t-1}^{\rm a\top}(\pphi^{\rm a*} - \breve{\pphi}^{\rm a}).
\end{aligned}\]
Therefore,
\begin{equation}\label{eq: upper bound MSE breve}
  \begin{aligned}
  \frac{1}{n}\sum_{t=1}^n(y_t - \ww_{t-1}^\top\breve\ttheta)^2 &= \frac{1}{n}\sum_{t=1}^n\left(u_t +  \vv_{t-1}^{\rm a\top}(\pphi^{\rm a*} - \breve{\pphi}^{\rm a})\right)^2 \\
  &\leq  \frac{2}{n}\sum_{t=1}^n u_t^2 + 2\|\pphi^{\rm a*} - \breve{\pphi}^{\rm a}\|_1^2 \cdot \Biggl\|\frac{1}{n}\sum_{t=1}^n   \vv_{t-1}^{\rm a}  \vv_{t-1}^{\rm a\top}\Biggr\|_{\max}.
  \end{aligned}
\end{equation}
The first term $\frac{2}{n}\sum_{t=1}^n u_t^2 = O_p(1)$
 given the stationarity and finite variance of $u_t$. The second term is also $O_p(1)$, because
\[\|\pphi^{\rm a*} - \breve{\pphi}^{\rm a}\|_1 \leq \|\pphi^{\rm a*}\|_1 +  \|\breve{\pphi}^{\rm a}\|_1 \leq (1+c_1)\|\pphi^{\rm a*}\|_1 = O(1),\]
and
\begin{align*}
   \Biggl\|\frac{1}{n}\sum_{t=1}^n   \vv_{t-1}^{\rm a}  \vv_{t-1}^{\rm a\top}\Biggr\|_{\max} &\leq \Biggl\|\frac{1}{n}\sum_{t=1}^n   \vv_{t-1}^{\rm a}  \vv_{t-1}^{\rm a\top} - \mathbb{E}(  \vv_{t-1}^{\rm a}  \vv_{t-1}^{\rm a\top})\Biggr\|_{\max} + \Bigl\|\mathbb{E}(  \vv_{t-1}^{\rm a}  \vv_{t-1}^{\rm a\top})\Bigr\|_{\max} \\
   &= O_p\biggl(\sqrt{\frac{\log p}{n}}\biggr) + O(1) = O_p(1),
\end{align*}
where the order of the first term applies the Bernstein inequality for sub-exponential time series, and the order of the second term applies the bounded eigenvalues of the population covariance matrix by Assumption \ref{assu:covMat}. Therefore, by (\ref{eq: upper bound MSE breve}) we have
\begin{equation}\label{eq: MSE Op1}
  \frac{1}{n}\sum_{t=1}^n(y_t - \ww_{t-1}^\top\breve\ttheta)^2 = O_p(1).
\end{equation}
In addition, for all $j\in\mathcal{C}\cap\mathcal{S}$, we have
\begin{equation}\label{eq: nondegenerate d_j}
  \hat\sigma_j/\sqrt{n} \overset{\rm d}{\to} d_j,
\end{equation}
where $d_j$ is a non-degenerate random variable whose expression can be found in the proof of Corollary 2 in \citet{lee2022lasso}. Therefore,
\[\lambda\sum_{j\in\mathcal{C}\cap\mathcal{S}}\hat\sigma_j|\theta_j^*|  \stackrel{\mathrm{p}}{\asymp}  \sqrt{n}\lambda \sum_{j\in\mathcal{C}\cap\mathcal{S}}\left(\frac{\hat\sigma_j}{\sqrt{n}}\right)\overset{\rm p}{\to} \infty\]
when $\sqrt{n}\lambda \to \infty$,
which together with (\ref{eq: MSE Op1}) implies
\begin{equation}\label{eq: MSE upper}
  \frac{1}{n}\sum_{t=1}^n(y_t - \ww_{t-1}^\top\breve\ttheta)^2 \leq \frac{1-c_1}{2}\cdot \lambda\sum_{j\in\mathcal{C}\cap\mathcal{S}}\hat\sigma_j|\theta_j^*|
\end{equation}
w.p.a.1. In addition, note that $|\breve\pphi| \preceq c_1|\pphi^*|$, and other subvectors in $\breve\ttheta$ and $\ttheta^*$ are identical. Therefore, $\breve\theta_j = 0$ whenever $\theta^*_j = 0$, which implies that
\begin{align}\label{eq: sigma theta diff lower 1}
  \sum_{j=1}^p\hat\sigma_j(|\theta_j^*|-|\breve\theta_j|)&= \sum_{j\in\mathcal{C}\cap\mathcal{S}}\hat\sigma_j(|\theta_j^*|-|\breve\theta_j|)\\
   &\geq (1-c_1)\sum_{j\in\mathcal{C}\cap\mathcal{S}}\hat\sigma_j|\theta_j^*|.\label{eq: sigma theta diff lower 2}
\end{align}
Hence,
\begin{equation}\label{eq: lower bound positive}
  \begin{aligned}
 \mathcal{L}_\lambda(\ttheta^*) - \mathcal{L}_\lambda(\breve\ttheta) &\geq  \lambda\sum_{j=1}^p\hat\sigma_j |\theta^*_j| - \frac{1}{n}\sum_{t=1}^n(y_t - \ww_{t-1}^\top\breve\ttheta)^2 - \lambda\sum_{j=1}^p \hat\sigma_j |\breve\theta_j| \\
 &=\lambda\sum_{j\in\mathcal{C}\cap\mathcal{S}} \hat\sigma_j (|\theta^*_j|-|\breve\theta_j|) -  \frac{1}{n}\sum_{t=1}^n(y_t - \ww_{t-1}^\top\breve\ttheta)^2   \\
 &\geq (1-c_1)\lambda\sum_{j\in\mathcal{C}\cap\mathcal{S}}\hat\sigma_j |\theta_j^*| - \frac{1-c_1}{2}\lambda\sum_{j\in\mathcal{C}\cap\mathcal{S}}\hat\sigma_j |\theta_j^*| \\
  &= \frac{1-c_1}{2} \lambda\sum_{j\in\mathcal{C}\cap\mathcal{S}}\hat\sigma_j |\theta^*_j| \geq \frac{c_\phi(1-c_1)}{2} \cdot \sqrt{n}\lambda\sum_{j\in\mathcal{C}\cap\mathcal{S}}\frac{\hat\sigma_j}{\sqrt{n}} \overset{\rm p}{\to} \infty,
  \end{aligned}
\end{equation}
where the second row applies (\ref{eq: sigma theta diff lower 1}), the third row applies (\ref{eq: MSE upper}) and (\ref{eq: sigma theta diff lower 2}), and the last step applies $\sqrt{n}\lambda\to \infty$ and (\ref{eq: nondegenerate d_j}). Then (\ref{eq: LASSO inconsistency}) follows. 

We then show that $\|\hat\ttheta - \ttheta^*\|_\infty \overset{\rm p}{\not\to} 0$ by contradiction. Note that
\begin{equation*}
   \left|\lambda\sum_{j\in  \mathcal{S}}\hat\sigma_j(|\hat\theta_j|-|\theta_j^*|)\right| \leq \sqrt{n}\lambda \sum_{j\in   \mathcal{S}} \left(\frac{\hat\sigma_j}{\sqrt{n}}\right)\|\hat\ttheta - \ttheta^*\|_\infty.
\end{equation*}
We have $\sum_{j\in   \mathcal{S}} \frac{\hat\sigma_j}{\sqrt{n}} = O_p(1)$ given that $s$ is fixed.  If $\|\hat\ttheta - \ttheta^*\|_\infty \overset{\rm p}{\to} 0$, we have
\begin{equation}\label{eq: error 1}
   \left|\lambda\sum_{j\in  \mathcal{S}}\hat\sigma_j(|\hat\theta_j|-|\theta_j^*|)\right| = o_p(\sqrt{n}\lambda) \leq \frac{c_\phi(1-c_1)}{4} \sqrt{n}\lambda\sum_{j\in \mathcal{C}\cap \mathcal{S}} \left(\frac{\hat\sigma_j}{\sqrt{n}}\right)
\end{equation}
w.p.a.1 when $n$ is sufficiently large. In addition, when $\sqrt{n}\lambda\to\infty$,
\begin{equation}\label{eq: error 2}
  \frac{1}{n}\sum_{t=1}^n u_t^2 = O_p(1) \leq \frac{c_\phi(1-c_1)}{8} \sqrt{n}\lambda\sum_{j\in \mathcal{C}\cap \mathcal{S}} \left(\frac{\hat\sigma_j}{\sqrt{n}}\right).
\end{equation}
By (\ref{eq: error 1}) and (\ref{eq: error 2}), we have
  \begin{align}
  \mathcal{L}_\lambda(\hat\ttheta) - \mathcal{L}_\lambda(\ttheta^*) &\geq \lambda\sum_{j\in\mathcal{S}}\hat\sigma_j|\hat\theta_j|-\lambda\sum_{j\in\mathcal{S}}\hat\sigma_j|\theta_j^*|-\frac{1}{n}\sum_{t=1}^n u_t^2 \notag\\
  &\geq  -\frac{ c_\phi(1-c_1)}{4} \sqrt{n}\lambda\sum_{j\in \mathcal{C}\cap \mathcal{S}} \left(\frac{\hat\sigma_j}{\sqrt{n}}\right) -\frac{ c_\phi(1-c_1)}{8} \sqrt{n}\lambda\sum_{j\in \mathcal{C}\cap \mathcal{S}} \left(\frac{\hat\sigma_j}{\sqrt{n}}\right) \notag\\
  &= -\frac{ 3c_\phi(1-c_1)}{8} \sqrt{n}\lambda\sum_{j\in \mathcal{C}\cap \mathcal{S}} \left(\frac{\hat\sigma_j}{\sqrt{n}}\right). \label{eq: lower bound negative}
  \end{align}
By (\ref{eq: lower bound positive}) and (\ref{eq: lower bound negative}), we have
\begin{align*}
  \mathcal{L}_\lambda(\hat\ttheta) - \mathcal{L}_\lambda(\breve\ttheta) &= \left(\mathcal{L}_\lambda(\ttheta^*) - \mathcal{L}_\lambda(\breve\ttheta)\right) + \left(\mathcal{L}_\lambda(\hat\ttheta) - \mathcal{L}_\lambda(\ttheta^*)\right) \\
  &\geq\frac{  c_\phi(1-c_1)}{2} \sqrt{n}\lambda\sum_{j\in \mathcal{C}\cap \mathcal{S}} \left(\frac{\hat\sigma_j}{\sqrt{n}}\right) -\frac{ 3c_\phi(1-c_1)}{8} \sqrt{n}\lambda\sum_{j\in \mathcal{C}\cap \mathcal{S}} \left(\frac{\hat\sigma_j}{\sqrt{n}}\right) \\
  &= \frac{  c_\phi(1-c_1)}{8} \sqrt{n}\lambda\sum_{j\in \mathcal{C}\cap \mathcal{S}} \left(\frac{\hat\sigma_j}{\sqrt{n}}\right) \overset{\rm p}{\to} \infty,
\end{align*}
which contradicts the fact that $\hat\ttheta$ minimizes $\mathcal{L}_\lambda(\ttheta)$. Therefore, we conclude $\|\hat\ttheta - \ttheta^*\|_\infty \overset{\rm p}{\not\to} 0$.

\underline{Proof of (b).} Let
\[
\breve{\pphi}^{\rm a} =  c_3{\pphi}^{\rm a*},\quad\breve{\pphi}^{\rm b} =  \left(1-c_3\right)\AA^\top \pphi^{\rm a*} + \pphi^{\rm b*}
\]
with $c_3 = c_2(2-c_2)$. We show that these coefficients satisfy the conditions in Part (a).  It is straightforward to show that
\[\AA^\top\breve\pphi^{\rm a} + \breve\pphi^{\rm b} = \AA^\top (c_3\pphi^{\rm a*}) + ((1-c_3)\AA^\top \pphi^{\rm a*} + \pphi^{\rm b*}) = \AA^\top \pphi^{\rm a*} + \pphi^{\rm b*},\]
which satisfies (\ref{eq: Aphi1+phi2}). In addition, define $c_1 = 1-(1-c_2)^3$. It is easy to show that $c_3\leq c_1$ when $c_2\in(0,1)$, and thus $|\breve\pphi^{\rm a}| = c_3|\pphi^{\rm a*}| \preceq c_1|\pphi^{\rm a*}|$. Therefore, it suffices to show that $|\breve\pphi^{\rm b}| \preceq c_1|\pphi^{\rm b*}|$.

For a generic vector $\aa = (a_j)$, define $|\aa| = (|a_j|)$. Note that  $\AA^\top \pphi^{\rm a*} \odot \pphi^{\rm b*} \preceq \textbf{0}$ implies that
\begin{equation}\label{eq: abs equiv}
  \begin{aligned}
|\breve\pphi^{\rm b}| = |(1-c_3)\AA^\top \pphi^{\rm a*} + \pphi^{\rm b*}| &=  \left||\pphi^{\rm b*}| -(1-c_3)|\AA^\top \pphi^{\rm a*}|\right| \\
&= |\pphi^{\rm b*}| -(1-c_3)|\AA^\top \pphi^{\rm a*}|.
  \end{aligned}
\end{equation}
The last equality in (\ref{eq: abs equiv}) applies the fact that
\[
\begin{aligned}
  |\pphi^{\rm b*}| -(1-c_3)|\AA^\top \pphi^{\rm a*}| &\succeq |\pphi^{\rm b*}| -(1-c_3)(1+c_2)|\pphi^{\rm b*}|\\
   &= c_2(1+c_2-c_2^2) |\pphi^{\rm b*}| \\
   &\succeq \textbf{0}, \\
\end{aligned}
\]
where the second step applies $c_3 = c_2(2-c_2)$ and the third row applies $c_2 \in (0,1)$. Therefore, by $(1-c_2)|\pphi^{\rm b*}| \preceq |\AA^\top \pphi^{\rm a*}|$,
\[\begin{aligned}
  |(1-c_3)\AA^\top \pphi^{\rm a*} + \pphi^{\rm b*}|  \preceq |\pphi^{\rm b*}| -(1-c_3)(1-c_2)|\pphi^{\rm b*}| = \left(1-(1-c_2)^3\right)|\pphi^{\rm b*}| = c_1|\pphi^{\rm b*}|,
\end{aligned}\]
which completes the proof of Part (b).

\section{Additional Numerical Studies\protect\label{sec:add-simul}}

\subsection{Tuning Parameter Selection\protect\label{subsec: tuning selection}}

The following data-driven tuning parameter selection applies to all simulation studies in the main text and this supplement. Regarding the initial value for FCPs, we first initialize the coefficients as the zero vector to select $\lambda$ using cross-validation. Under the selected $\lambda$, we generate five random initial values centered at $\hat\ttheta^{\rm Las}$, the LASSO estimate, with a small normal perturbation, and pick the initial value that yields the smallest penalized criterion function $\mathcal{L}_\lambda(\hat\ttheta)$ with an FCP penalty.
The random perturbation is to shake the inconsistent $\hat\ttheta^{\rm Las}$.
For the constrained SCAD and MCP in (\ref{eq: constrained regularized estimation}), we first choose the $\lambda$ by cross-validation from the unconstrained counterparts and store the penalty level $\sum_{j=1}^pJ_\lambda(\hat\sigma_j\hat\theta_j)$ of the unconstrained regularization. We then select the best constraint parameter $\mu = c\sum_{j=1}^pJ_\lambda(\hat\sigma_j\hat\theta_j)$ with a grid of $c$'s that minimizes the penalized loss function in (\ref{eq: constrained regularized estimation}).

\subsection{Additional Simulation Results\protect\label{sec:add-simul-results}}
This section considers two additional cointegration simulation designs: common stochastic trends and pairwise cointegration.

\textbf{DGP A1. Common stochastic trends.} We consider the following cointegrated variables with nonstationary common factors
\[
\xx_{t}^{\rm a}=-\BB\ff_t+\vv^{\rm a}_{t},\qquad
\xx_{t}^{\rm b}=\BB\ff_t+\vv^{\rm b}_{t},
\]
where $\xx_{t}^{\rm a}$ and $\xx_{t}^{\rm b}$ are $(p/4)$-dimensional cointegrated regressors,   $\ff_t=\rho_n\ff_{t-1}+\vareepsilon_t^f$ is a two-dimensional vector of nonstationary latent factors with $\rho_n=1$ (unit roots) or $\rho_n=1-1/n$ (local unit roots), and the elements of $\vv_{t}^{\rm a}$, $\vv_{t}^{\rm b}$, $\vareepsilon_t^f$, and $\zz_t$ are i.i.d standard normal variables. All entries in the $(p/4)\times 2$  loading matrix $\BB$ are i.i.d~$\mathcal{N}(0,1)$ variables. In addition, we have $p/4$ i.i.d standard normal regressors $\zz_t$ and $p/4$ LUR regressors $\xx_t = \rho_n\xx_{t-1} + \ee_t$, where $\ee_t \overset{\rm i.i.d}{\sim} \mathcal{N}(0,\bfI_{p/4})$. The predictive regression is
\[
y_t=\xx_{t-1}^{\rm a\top}\bbeta_1^*+\xx_{t-1}^{\rm b\top}\bbeta_2^*
      +\xx_{t-1}^{\top}\bbeta^*+\zz_{t-1}^{\top}\ggamma^*+u_t,
\]
where the sparse coefficient vectors  $\bbeta_1^*=\bbeta_2^*=(1,0.5,\boldsymbol{0}_{p/4-2}^{\top})^{\top}$, $\bbeta^*=(5/\sqrt{n},\boldsymbol{0}_{p/4-1}^{\top})^{\top}$, and $\ggamma^*=(1,\boldsymbol{0}_{p/4-1}^{\top})^{\top}$.

\textbf{DGP A2. Pairwise cointegration.} The regressors $\xx_t=(x_{j,t})_{1\leq j \leq p/2}$ include $p/4$ cointegrated pairs. Specifically, for $j=1,\ldots,p/4$, we generate
\[
x_{2j,t}=\rho_nx_{2j,t-1}+e_{2j,t},\qquad
x_{2j-1,t}=x_{2j,t}+v_{2j-1,t},
\]
where $e_{2j,t}$ and $v_{2j-1,t}$ are i.i.d.~$\mathcal{N}(0,1)$. Furthermore, the $(p/2)$-dimensional vector $\zz_t$ contains $p/2$ stationary regressors that are i.i.d.~$\mathcal{N}(0,1)$. The predictive regression is
\[
y_t=\xx_{t-1}^{\top}\bbeta^*+\zz_{t-1}^{\top}\ggamma^*+u_t,
\]
where the sparse coefficients $\bbeta^*=(1,-1,1,-1,\boldsymbol{0}_{p/2-4}^{\top})^{\top}$, $\ggamma^*=(1,1,\boldsymbol{0}_{p/2-2}^{\top})^{\top}$, and the error term $u_t\overset{\rm i.i.d}{\sim} \mathcal{N}(0,1)$.

We retain the sample sizes, variable dimensions, regularization methods, and tuning parameter selection as in Section \ref{sec: simulations} of the main text. All results are averaged over 1000 replications. Figures \ref{fig:dgp2_ur_app} and \ref{fig:dgp3_ur_app} exhibit the results for DGPs A1 and A2 respectively under unit roots with $\rho_n = 1$, which are similar to those for the DGP in Section \ref{sec: simulations}. The prediction error decreases as the sample size $n$ grows, while increases with a larger number of regressors $p$. FCPs evidently dominate LASSO in terms of MSPE, coefficient estimation error, and variable selection accuracy. Figures \ref{fig:dgp1_lur_app}--\ref{fig:dgp3_lur_app} present the results under local unit roots, which closely mirror the unit root case. These additional simulation results evidence the robustness of FCPs in high-dimensional predictive regression with cointegration.

\begin{figure}[p]
\centering
\includegraphics[width=\textwidth]{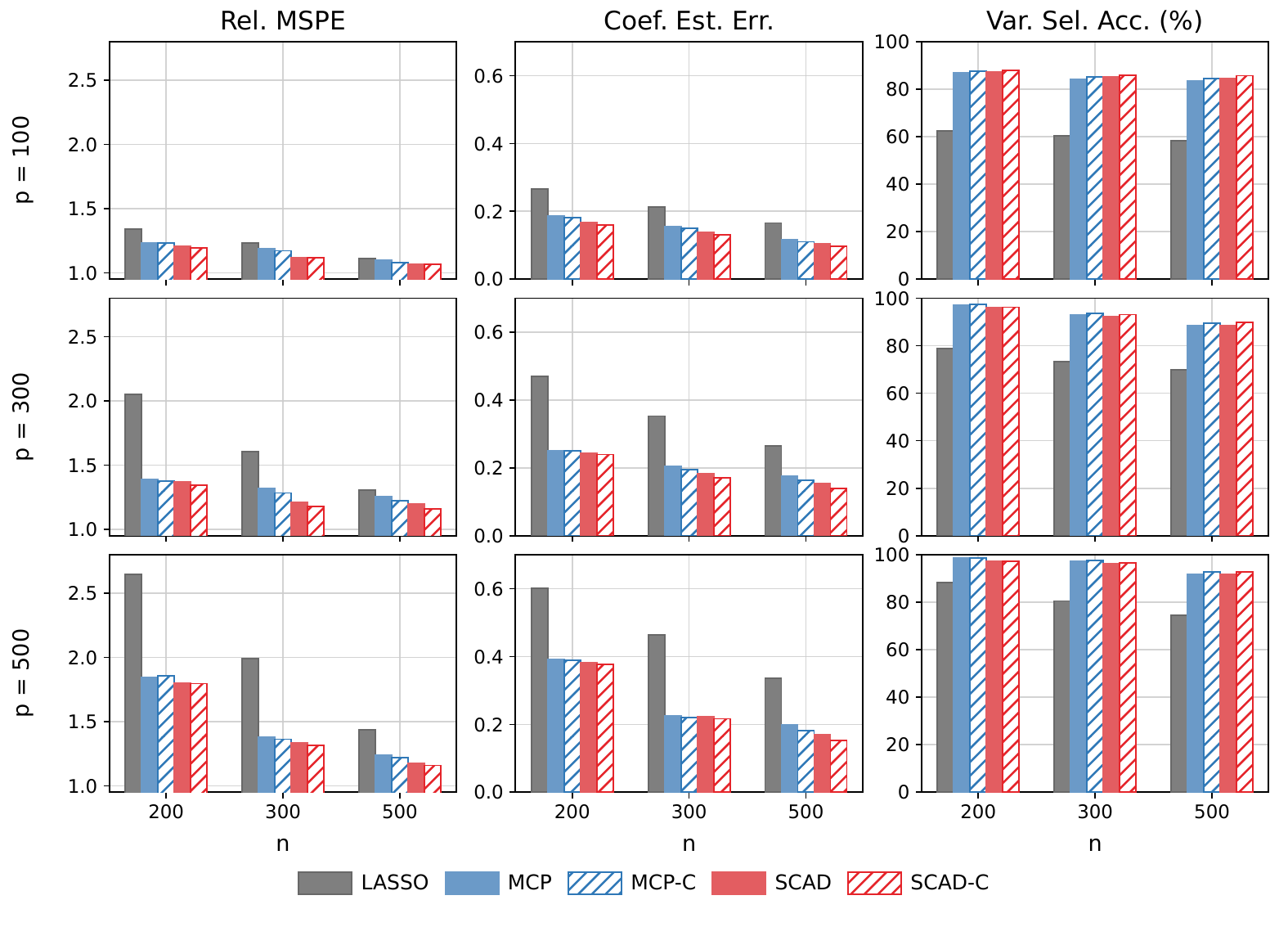}
\caption{Simulation results for DGP A1 under unit roots ($\rho_n=1$)}
{{\begin{flushleft}
\singlespacing\footnotesize \textit{Notes}: The $x$-axis represents the sample size $n$, and the rows from top to bottom correspond to $p=100$, 300, and 500. The left column \texttt{Rel.~MSPE} exhibits the OOS mean squared prediction error (MSPE) relative to the oracle OLS, the middle column \texttt{Coef.~Est.~Err.} shows the relative $\ell_2$-norm of the coefficient estimation error, and the right column \texttt{Var.~Sel.~Acc.} displays the variable selection accuracy.
\end{flushleft}}}
\label{fig:dgp2_ur_app}
\end{figure}

\begin{figure}[p]
\centering
\includegraphics[width=\textwidth]{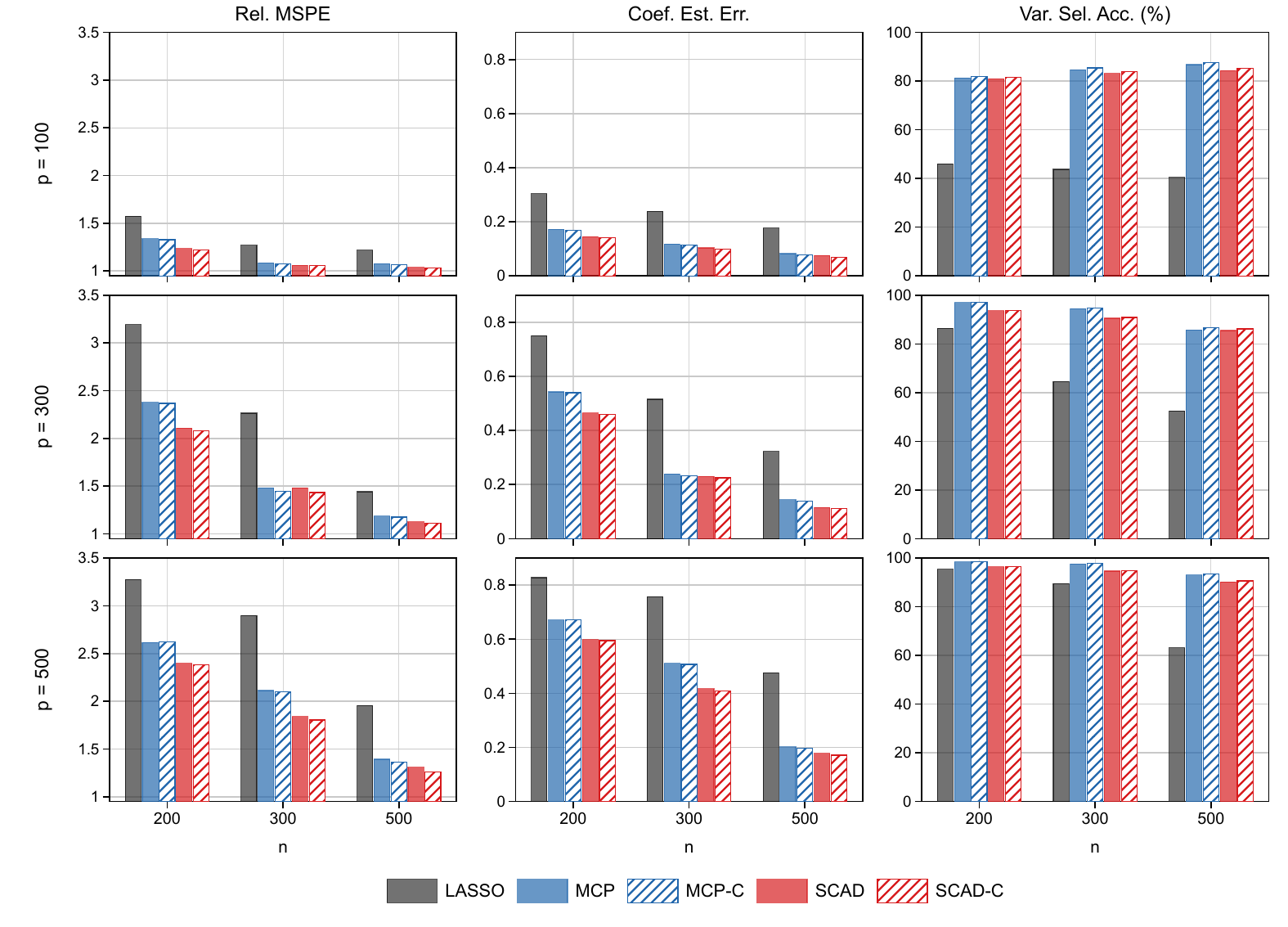}
\caption{Simulation results for DGP A2 under unit roots ($\rho_n=1$)}
{{\begin{flushleft}
\singlespacing\footnotesize \textit{Notes}: The $x$-axis represents the sample size $n$, and the rows from top to bottom correspond to $p=100$, 300, and 500. The left column \texttt{Rel.~MSPE} exhibits the OOS mean squared prediction error (MSPE) relative to the oracle OLS, the middle column \texttt{Coef.~Est.~Err.} shows the relative $\ell_2$-norm of the coefficient estimation error, and the right column \texttt{Var.~Sel.~Acc.} displays the variable selection accuracy.
\end{flushleft}}}
\label{fig:dgp3_ur_app}
\end{figure}

\begin{figure}[p]
\centering
\includegraphics[width=\textwidth]{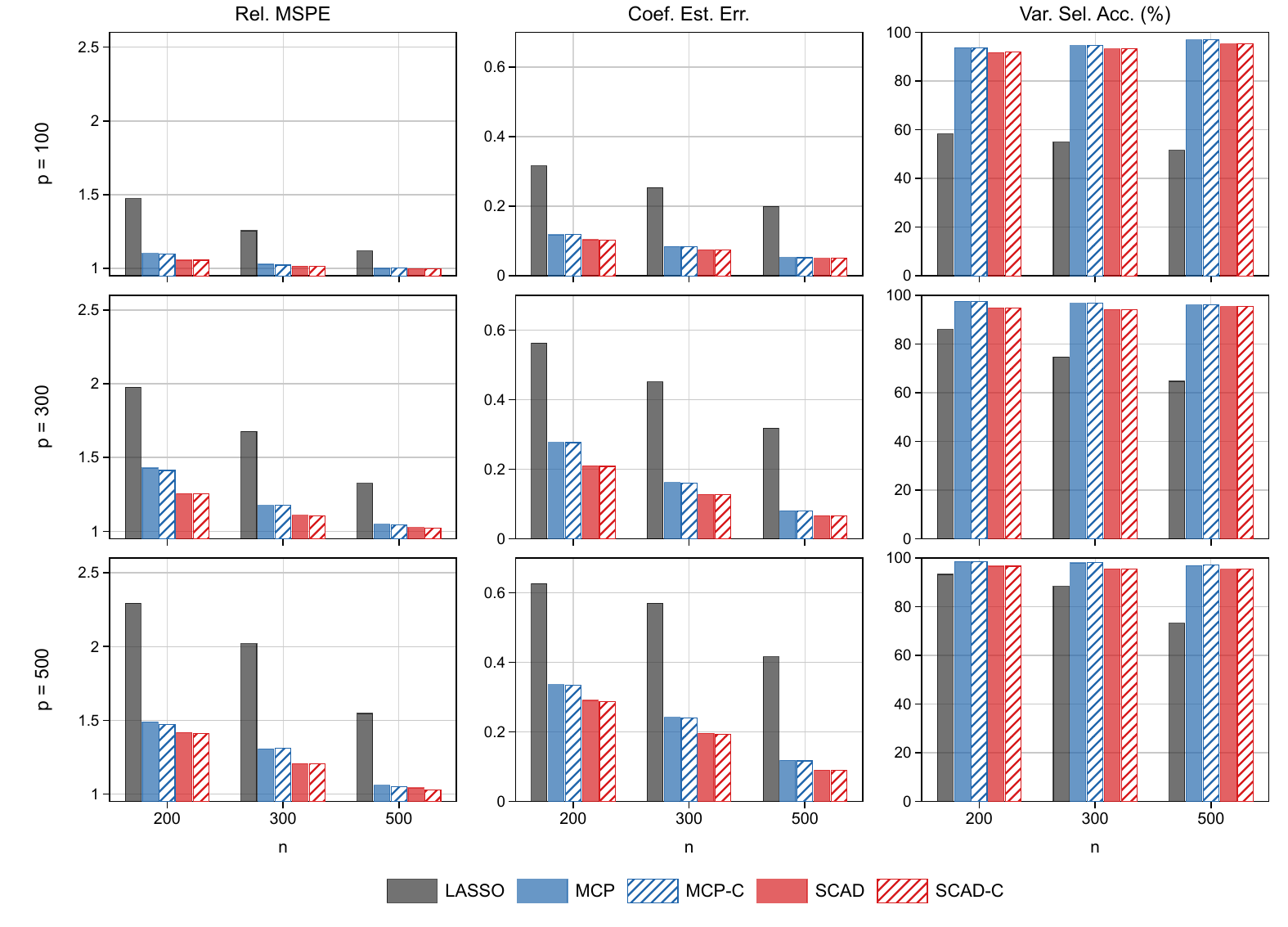}
\caption{Simulation results for Section \ref{sec: simulations} under local unit roots ($\rho_n=1-1/n$)}
{{\begin{flushleft}
\singlespacing\footnotesize \textit{Notes}: The $x$-axis represents the sample size $n$, and the rows from top to bottom correspond to $p=100$, 300, and 500. The left column \texttt{Rel.~MSPE} exhibits the OOS mean squared prediction error (MSPE) relative to the oracle OLS, the middle column \texttt{Coef.~Est.~Err.} shows the relative $\ell_2$-norm of the coefficient estimation error, and the right column \texttt{Var.~Sel.~Acc.} displays the variable selection accuracy.
\end{flushleft}}}
\label{fig:dgp1_lur_app}
\end{figure}

\begin{figure}[p]
\centering
\includegraphics[width=\textwidth]{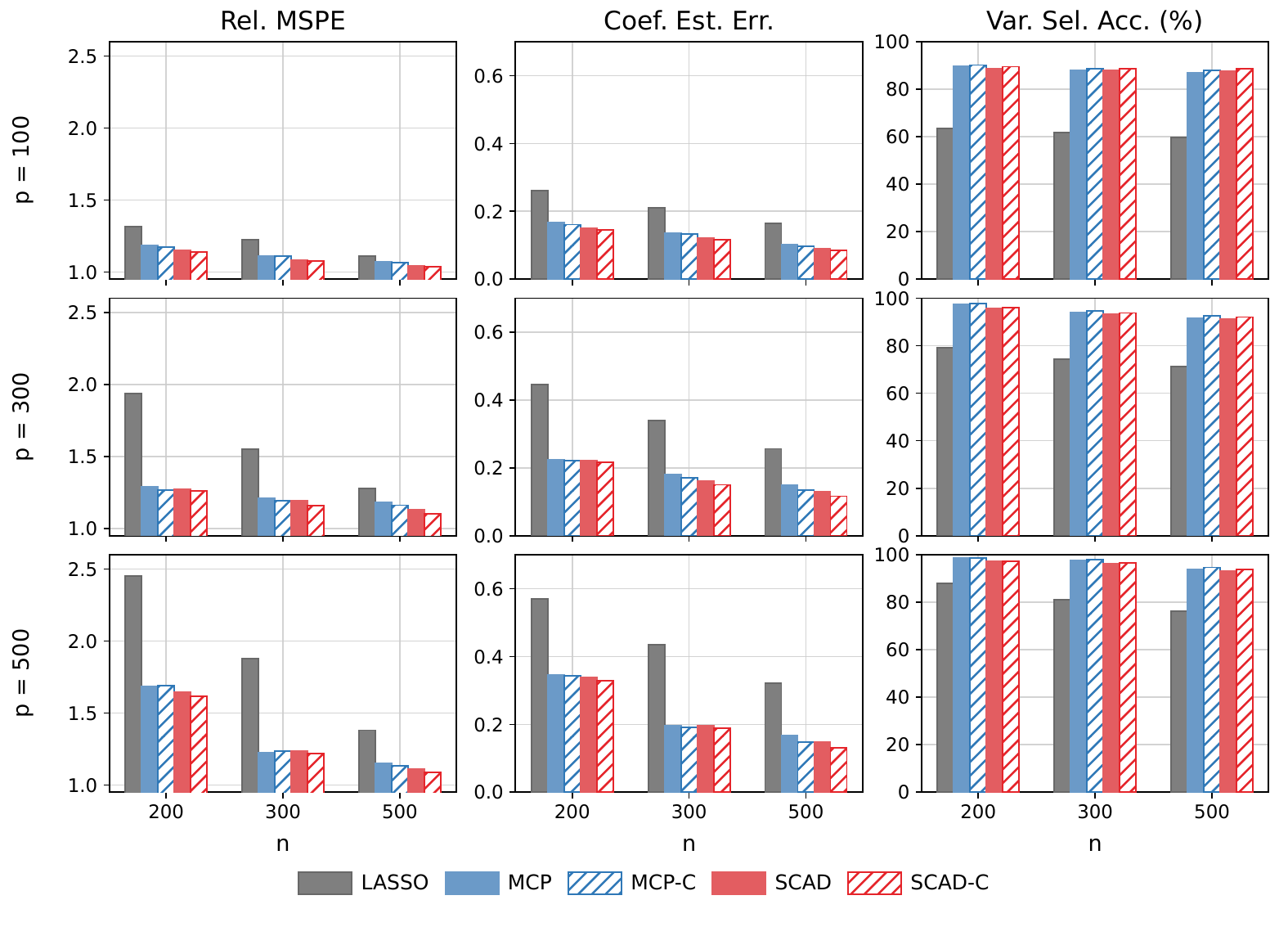}
\caption{Simulation results for DGP A1 under local unit roots ($\rho_n=1-1/n$)}
{{\begin{flushleft}
\singlespacing\footnotesize \textit{Notes}: The $x$-axis represents the sample size $n$, and the rows from top to bottom correspond to $p=100$, 300, and 500. The left column \texttt{Rel.~MSPE} exhibits the OOS mean squared prediction error (MSPE) relative to the oracle OLS, the middle column \texttt{Coef.~Est.~Err.} shows the relative $\ell_2$-norm of the coefficient estimation error, and the right column \texttt{Var.~Sel.~Acc.} displays the variable selection accuracy.
\end{flushleft}}}
\label{fig:dgp2_lur_app}
\end{figure}

\begin{figure}[p]
\centering
\includegraphics[width=\textwidth]{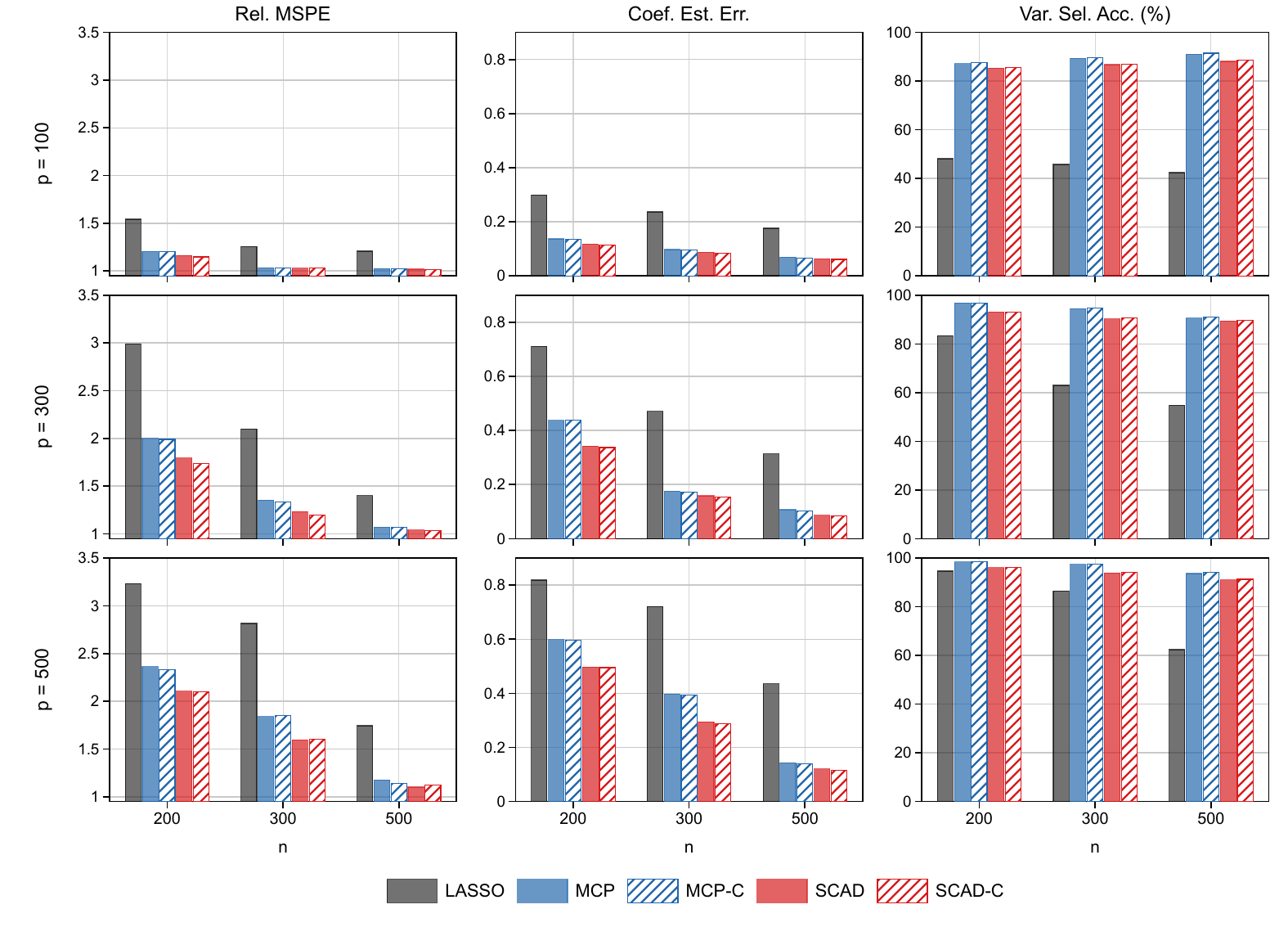}
\caption{Simulation results for DGP A2 under local unit roots ($\rho_n=1-1/n$)}
{{\begin{flushleft}
\singlespacing\footnotesize \textit{Notes}: The $x$-axis represents the sample size $n$, and the rows from top to bottom correspond to $p=100$, 300, and 500. The left column \texttt{Rel.~MSPE} exhibits the OOS mean squared prediction error (MSPE) relative to the oracle OLS, the middle column \texttt{Coef.~Est.~Err.} shows the relative $\ell_2$-norm of the coefficient estimation error, and the right column \texttt{Var.~Sel.~Acc.} displays the variable selection accuracy.
\end{flushleft}}}
\label{fig:dgp3_lur_app}
\end{figure}

\end{appendices} 

\end{document}